\DeclareMathOperator*{\argmin}{arg\,min}
\DeclareMathOperator*{\minimize}{minimize}
\DeclareMathOperator*{\st}{subject\,to}
\DeclareMathOperator{\Diag}{Diag}
\DeclareMathOperator{\E}{E}
\DeclareMathOperator{\Imag}{\Im}
\DeclareMathOperator{\Master}{\mathcal M}
\DeclareMathOperator{\Prox}{Prox}
\DeclareMathOperator{\Real}{\Re}
\DeclareMathOperator{\Slave}{\mathcal S}
\DeclareMathOperator{\Supp}{Supp}
\newcommand{\conj}[1]{%
\overline{#1}%
}
\newcommand{\isdef}{\coloneqq}
\newcommand{\had}{\circ}
\newcommand{\cplx}{\mathbb C}
\newcommand{\intg}{\mathbb Z}
\newcommand{\real}{\mathbb R}
\newcommand{\bfa}{\mathbf a}
\newcommand{\bfb}{\mathbf b}
\newcommand{\bfd}{\mathbf d}
\newcommand{\bfe}{\mathbf e}
\newcommand{\bfg}{\mathbf g}
\newcommand{\bfh}{\mathbf h}
\newcommand{\bfp}{\mathbf p}
\newcommand{\bfr}{\mathbf r}
\newcommand{\bfs}{\mathbf s}
\newcommand{\bfu}{\mathbf u}
\newcommand{\bfv}{\mathbf v}
\newcommand{\bfw}{\mathbf w}
\newcommand{\bfx}{\mathbf x}
\newcommand{\bfy}{\mathbf y}
\newcommand{\bfz}{\mathbf z}
\newcommand{\bfzero}{\mathbf 0}
\newcommand{\bfA}{\mathbf A}
\newcommand{\bfB}{\mathbf B}
\newcommand{\bfC}{\mathbf C}
\newcommand{\bfD}{\mathbf D}
\newcommand{\bfE}{\mathbf E}
\newcommand{\bfF}{\mathbf F}
\newcommand{\bfG}{\mathbf G}
\newcommand{\bfI}{\mathbf I}
\newcommand{\bfP}{\mathbf P}
\newcommand{\bfR}{\mathbf R}
\newcommand{\bfS}{\mathbf S}
\newcommand{\bfV}{\mathbf V}
\newcommand{\bfX}{\mathbf X}
\newcommand{\bfY}{\mathbf Y}
\newcommand{\bfZ}{\mathbf Z}
\newcommand{\bfgamma}{\boldsymbol\gamma}
\newcommand{\bfdelta}{\boldsymbol\delta}
\newcommand{\bfepsilon}{\boldsymbol\epsilon}
\newcommand{\bftheta}{\boldsymbol\theta}
\newcommand{\bfphi}{\boldsymbol\phi}
\newcommand{\bfOmega}{\boldsymbol\Omega}
\newcommand{\Let}[2]{$#1 \leftarrow #2$}
\theoremstyle{definition}
\newtheorem{corollary}{Corollary}
\newtheorem{definition}{Definition}
\newtheorem{lemma}[corollary]{Lemma}
\newtheorem{proposition}[corollary]{Proposition}
\newtheorem*{remark}{Remark}
\newtheorem*{notation}{Notation}
\newtheorem{theorem}[corollary]{Theorem}
\begin{document}
%
% paper title
% Titles are generally capitalized except for words such as a, an, and, as,
% at, but, by, for, in, nor, of, on, or, the, to and up, which are usually
% not capitalized unless they are the first or last word of the title.
% Linebreaks \\ can be used within to get better formatting as desired.
% Do not put math or special symbols in the title.
% \title{Bare Demo of IEEEtran.cls\\ for IEEE Journals}
\title{Single-Look Multi-Master SAR Tomography: An Introduction}
%
%
% author names and IEEE memberships
% note positions of commas and nonbreaking spaces ( ~ ) LaTeX will not break
% a structure at a ~ so this keeps an author's name from being broken across
% two lines.
% use \thanks{} to gain access to the first footnote area
% a separate \thanks must be used for each paragraph as LaTeX2e's \thanks
% was not built to handle multiple paragraphs
%

% \author{Michael~Shell,~\IEEEmembership{Member,~IEEE,}
%         John~Doe,~\IEEEmembership{Fellow,~OSA,}
%         and~Jane~Doe,~\IEEEmembership{Life~Fellow,~IEEE}% <-this % stops a space
% \thanks{M. Shell was with the Department
% of Electrical and Computer Engineering, Georgia Institute of Technology, Atlanta,
% GA, 30332 USA e-mail: (see http://www.michaelshell.org/contact.html).}% <-this % stops a space
% \thanks{J. Doe and J. Doe are with Anonymous University.}% <-this % stops a space
% \thanks{Manuscript received April 19, 2005; revised August 26, 2015.}}
% \thanks{M. Shell was with the Department
% of Electrical and Computer Engineering, Georgia Institute of Technology, Atlanta,
% GA, 30332 USA e-mail: (see http://www.michaelshell.org/contact.html).}% <-this % stops a space
% \thanks{J. Doe and J. Doe are with Anonymous University.}% <-this % stops a space
% \thanks{Manuscript received April 19, 2005; revised August 26, 2015.}}
% \author{Author~A,~\IEEEmembership{Member,~IEEE,}
%         Author~B,~\IEEEmembership{Fellow,~OSA,}
%         and~Author~C,~\IEEEmembership{Life~Fellow,~IEEE}% <-this % stops a space
\author{Nan~Ge,
        Richard~Bamler,~\IEEEmembership{Fellow,~IEEE,}
        Danfeng~Hong,~\IEEEmembership{Member,~IEEE,}
        and~Xiao~Xiang~Zhu,~\IEEEmembership{Senior~Member,~IEEE}% <-this % stops a space
\thanks{This work is jointly supported by the European Research Council (ERC) under the European Union's Horizon 2020 research and innovation programme (grant agreement No. [ERC-2016-StG-714087], Acronym: \textit{So2Sat}), and by the Helmholtz Association
through the Framework of Helmholtz Artificial Intelligence Cooperation Unit (HAICU) - Local Unit ``Munich Unit @Aeronautics, Space and Transport (MASTr)'' and Helmholtz Excellent Professorship ``Data Science in Earth Observation - Big Data Fusion for Urban Research''.
\emph{Corresponding author: Xiao~Xiang~Zhu.}}

\thanks{N.~Ge and D.~Hong are with the Remote Sensing Technology Institute (IMF), German Aerospace Center (DLR), 82234 Wessling, Germany (e-mail: Nan.Ge@dlr.de, Danfeng.Hong@dlr.de).}% <-this % stops a space
\thanks{R.~Bamler is with the Remote Sensing Technology Institute (IMF), German Aerospace Center (DLR), 82234 Wessling, Germany, and with the Chair of Remote Sensing Technology, Technical University of Munich (TUM), 80333 Munich, Germany (e-mail: richard.bamler@dlr.de).}% <-this % stops a space
\thanks{X.~X.~Zhu is with the Remote Sensing Technology Institute (IMF), German Aerospace Center (DLR), 82234 Wessling, Germany, and with Signal Processing in Earth Observation (SiPEO), Technical University of Munich (TUM), 80333 Munich, Germany (e-mail: xiaoxiang.zhu@dlr.de).}% <-this % stops a space
% \thanks{J. Doe and J. Doe are with Anonymous University.}% <-this % stops a space
% \thanks{Manuscript received April 19, 2005; revised August 26, 2015.}
}

% note the % following the last \IEEEmembership and also \thanks -
% these prevent an unwanted space from occurring between the last author name
% and the end of the author line. i.e., if you had this:
%
% \author{....lastname \thanks{...} \thanks{...} }
%                     ^------------^------------^----Do not want these spaces!
%
% a space would be appended to the last name and could cause every name on that
% line to be shifted left slightly. This is one of those "LaTeX things". For
% instance, "\textbf{A} \textbf{B}" will typeset as "A B" not "AB". To get
% "AB" then you have to do: "\textbf{A}\textbf{B}"
% \thanks is no different in this regard, so shield the last } of each \thanks
% that ends a line with a % and do not let a space in before the next \thanks.
% Spaces after \IEEEmembership other than the last one are OK (and needed) as
% you are supposed to have spaces between the names. For what it is worth,
% this is a minor point as most people would not even notice if the said evil
% space somehow managed to creep in.

% The paper headers
% \markboth{Journal of \LaTeX\ Class Files,~Vol.~14, No.~8, August~2015}%
% {Shell \MakeLowercase{\textit{et al.}}: Bare Demo of IEEEtran.cls for IEEE Journals}
% \markboth{Submitted to IEEE Transactions on Geoscience and Remote Sensing}%
\markboth{}%
{Shell \MakeLowercase{\textit{et al.}}: Bare Demo of IEEEtran.cls for IEEE Journals}
% The only time the second header will appear is for the odd numbered pages
% after the title page when using the twoside option.
%
% *** Note that you probably will NOT want to include the author's ***
% *** name in the headers of peer review papers.                   ***
% You can use \ifCLASSOPTIONpeerreview for conditional compilation here if
% you desire.

% If you want to put a publisher's ID mark on the page you can do it like
% this:
%\IEEEpubid{0000--0000/00\$00.00~\copyright~2015 IEEE}
% Remember, if you use this you must call \IEEEpubidadjcol in the second
% column for its text to clear the IEEEpubid mark.

% use for special paper notices
%\IEEEspecialpapernotice{(Invited Paper)}

% make the title area
\maketitle

% As a general rule, do not put math, special symbols or citations
% in the abstract or keywords.
\begin{abstract}
This paper addresses the general problem of single-look multi-master SAR tomography. For this purpose, we establish the single-look multi-master data model, analyze its implications for single and double scatterers, and propose a generic inversion framework. The core of this framework is nonconvex sparse recovery, for which we develop two algorithms: one extends the conventional nonlinear least squares (NLS) to the single-look multi-master data model, and the other is based on bi-convex relaxation and alternating minimization (BiCRAM). We provide two theorems for the objective function of the NLS subproblem, which lead to its analytic solution up to a constant phase angle in the one-dimensional case. We also report our findings from the experiments on different acceleration techniques for BiCRAM. The proposed algorithms are applied to a real TerraSAR-X data set, and validated with height ground truth made available via a SAR imaging geodesy and simulation framework. This shows empirically that the \emph{single-master} approach, if applied to a single-look \emph{multi-master} stack, can be insufficient for layover separation, and the \emph{multi-master} approach can indeed perform slightly better (despite being computationally more expensive) even in the case of single scatterers. Besides, this paper also sheds light on the special case of single-look bistatic SAR tomography, which is relevant for current and future SAR missions such as TanDEM-X and Tandem-L.
\end{abstract}

% Note that keywords are not normally used for peerreview papers.
\begin{IEEEkeywords}
% IEEE, IEEEtran, journal, \LaTeX, paper, template.
Synthetic aperture radar (SAR), bistatic SAR, TanDEM-X, Tandem-L, SAR tomography, sparse recovery, nonconvex optimization.
\end{IEEEkeywords}

% For peer review papers, you can put extra information on the cover
% page as needed:
% \ifCLASSOPTIONpeerreview
% \begin{center} \bfseries EDICS Category: 3-BBND \end{center}
% \fi
%
% For peerreview papers, this IEEEtran command inserts a page break and
% creates the second title. It will be ignored for other modes.
\IEEEpeerreviewmaketitle

\section{Introduction} \label{sec_introduction}
% The very first letter is a 2 line initial drop letter followed
% by the rest of the first word in caps.
%
% form to use if the first word consists of a single letter:
% \IEEEPARstart{A}{demo} file is ....
%
% form to use if you need the single drop letter followed by
% normal text (unknown if ever used by the IEEE):
% \IEEEPARstart{A}{}demo file is ....
%
% Some journals put the first two words in caps:
% \IEEEPARstart{T}{his demo} file is ....
%
% Here we have the typical use of a "T" for an initial drop letter
% and "HIS" in caps to complete the first word.
% \IEEEPARstart{T}{his} demo file is intended to serve as a ``starter file''
% for IEEE journal papers produced under \LaTeX\ using
% IEEEtran.cls version 1.8b and later.
% You must have at least 2 lines in the paragraph with the drop letter
% (should never be an issue)
% I wish you the best of success.

% \hfill mds

% \hfill August 26, 2015

\IEEEPARstart{S}{ynthetic} aperture radar (SAR) tomography is an interferometric SAR (InSAR) technique that reconstructs a three-dimensional far field from two-dimensional (2-D) azimuth-range measurements of radar echoes \cite{reigber2000first, gini2002layover, fornaro2003three}. In the common case of spaceborne repeat-pass acquisitions, scatterers' motion can also be modeled and estimated \cite{lombardini2005differential, fornaro2008four, zhu2011let}. SAR tomography is sometimes considered as an extension of persistent scatterer interferometry (PSI) \cite{ferretti2001permanent, colesanti2003sar, adam2005development} to the multi-scatterer case, although the inversion of the latter is performed on double-difference phase observations of persistent scatterers (PS) \cite{kampes2006radar}. Extensive efforts were devoted to improving the super-resolution power, robustness and computational efficiency of tomographic inversion in urban scenarios (e.g., \cite{budillon2010three, zhu2010tomographic, aguilera2012multisignal, schmitt2012compressive, fornaro2014caesar, zhu2015joint, ge2018spaceborne, shi2018fast, shi20183d, shi2019nonlocal}).

\begin{table*}[!tbp]
    \renewcommand{\arraystretch}{1.3}
    % \caption{Single/multi-look single/multi-master SAR tomography: an incomplete list of publications}
    \caption{A classification of tomographic SAR algorithms}
    \label{tab_alg}
    \centering
    \begin{tabular}{l|p{4.5cm}|p{4.5cm}|}
        \multicolumn{1}{l}{} & \multicolumn{1}{c}{Single-Master} & \multicolumn{1}{c}{Multi-Master} \\
        \cline{2-3}
        % \multirow{4}{*}{Single-Look}
        \parbox[t]{2mm}{\multirow{5}{*}{\rotatebox[origin=c]{90}{Single-Look}}}
        & Reigber \& Moreira (2000) & Zhu \& Bamler (2012)$^\dagger$ \\
        & Fornaro et al.\ (2003, 2005, 2008) & Ge \& Zhu (2019)$^\dagger$ \\
        & Budillon et al.\ (2010) & \\
        % & Zhu \& Bamler (2010a, 2010b, 2011), etc. & \\
        & Zhu \& Bamler (2010a, 2010b, 2011) & \\
        & Etc.\ & \\
        \cline{2-3}
        % \multirow{4}{*}{Multi-Look}
        \parbox[t]{2mm}{\multirow{5}{*}{\rotatebox[origin=c]{90}{Multi-Look}}}
        & Aguilera et al.\ (2012) & Gini et al.\ (2002) \\
        & Schmitt \& Stilla (2012) & Lombardini (2005) \\
        & Liang et al.\ (2018) & Duque et al.\ (2009, 2010, 2014) \\
        % & Shi et al.\ (2019), etc. & Fornaro et al.\ (2014), etc. \\
        & Shi et al.\ (2019) & Fornaro et al.\ (2014) \\
        & Etc.\ & Etc.\ \\
        \cline{2-3}
        \multicolumn{1}{l}{} & \multicolumn{2}{l}{$^\dagger$ Incorrectly uses the single-look single-master data model}
    \end{tabular}
\end{table*}

The publications on SAR tomography can be roughly classified into the following four categories (see also Tab.~\ref{tab_alg}). Note that those listed below were only hand-picked, and we have no intention to provide a complete list.
\begin{itemize}
    \item \emph{Single-look single-master:} \\
    Reigber \& Moreira (2000) did the pioneering work on airborne SAR tomography by densifying sampling via the integer interferogram combination technique and subsequently employing discrete Fourier transform on an interpolated linear array of baselines \cite{reigber2000first}. Fornaro et al.\ (2003, 2005, 2008) paved the way for spaceborne SAR tomography with long-term repeat-pass acquisitions and proposed to use more advanced inversion techniques such as truncated singular value decomposition \cite{fornaro2003three, fornaro2005three, fornaro2008four}. Zhu \& Bamler (2010a) provided the first demonstration of SAR tomography with very high resolution spaceborne SAR data by using Tikhonov regularization and nonlinear least squares (NLS) \cite{zhu2010very}. Budillon et al.\ (2010) and Zhu \& Bamler (2010b) introduced compressive sensing techniques to tomographic inversion under the assumption of a compressible far-field profile. Zhu \& Bamler (2011) proposed a generic algorithm (named SL1MMER) that is composed of spectral estimation, model-order selection and debiasing \cite{zhu2011super}.
    \item \emph{Single-look multi-master\footnote{In this context, ``multi-master'' can be interpreted as ``not single-master'' (see also our definition in Sec.~\ref{sec_multimaster_data}).}:} \\
    To the best of our knowledge, the publications in this category are rather scarce. Zhu \& Bamler (2012) extended the Tikhonov regularization, NLS and compressive sensing approaches to a mixed TerraSAR-X and TanDEM-X stack by using pre-estimated covariance matrix \cite{zhu2012sparse}. Ge \& Zhu (2019) proposed a framework for SAR tomography using only bistatic or pursuit monostatic acquisitions: non-differential SAR tomography for height estimation by using bistatic or pursuit monostatic interferograms, and differential SAR tomography for deformation estimation by using conventional repeat-pass interferograms and the previous height estimates as deterministic prior \cite{ge2019bistatic}. However, the single-look single-master data model still underlies the algorithms in both publications.
    \item \emph{Multi-look single-master:} \\
    Aguilera et al.\ (2012) exploited the common sparsity pattern among multiple polarimetric channels via distributed compressive sensing \cite{aguilera2012multisignal}. Schmitt \& Stilla (2012) also employed distributed compressive sensing to jointly reconstruct an adaptively chosen neighborhood \cite{schmitt2012compressive}. Liang et al.\ (2018) proposed an algorithm for 2-D range-elevation focusing on azimuth lines via compressive sensing \cite{liang2018urban}. Shi et al.\ (2019) performed nonlocal InSAR filtering before tomographic reconstruction \cite{shi2019nonlocal}.
    \item \emph{Multi-look multi-master:} \\
    In general, any algorithm estimating the auto-correlation matrix belongs to this category. Note that this is closely related to modern adaptive multi-looking techniques that also exploit all possible interferometric combinations \cite{perissin2011repeat, ferretti2011new, parizzi2010adaptive, ansari2017sequential, ansari2018efficient}. Gini et al.\ (2002) investigated the performance of different spectral estimators including Capon, multiple signal classification (MUSIC) and the multi-look relaxation (M-RELAX) algorithm \cite{gini2002layover}. Lombardini (2005) extended SAR tomography to the differential case by formulating it as a multi-dimensional spectral estimation problem and tackled it with higher-order Capon \cite{lombardini2005differential}. Duque et al.\ (2009, 2010) were the first to investigate bistatic SAR tomography by using ground-based receivers and spectral estimators such as Capon and MUSIC \cite{duque2009experimental, duque2010bistatic}. Duque et al.\ (2014) demonstrated the feasibility of SAR tomography using a single pass of alternating bistatic acquisitions, in which the eigendecomposed empirical covariance matrix was exploited for the hypothesis test on the number of scatterers \cite{duque2014single}. Fornaro et al.\ (2014) proposed an algorithm (named CAESAR) employing principal component analysis of the eigendecomposed empirical covariance matrix in an adaptively chosen neighborhood \cite{fornaro2014caesar}.
% Rambour et al.\ (2017) proposed a similarity criterion tailored to SAR tomography for the generalized likelihood ratio test on whether two resolution cells can be used for covariance matrix estimation \cite{rambour2017similarity}.
\end{itemize}
This list has a clear focus on urban scenarios. Needless to say, SAR tomography in forested scenarios involving random volume scattering in canopy and double-bounce scattering between ground and trunk (e.g., \cite{cloude2006polarization, dezan2009tandem, tebaldini2009algebraic, tebaldini2010single, huang2011under, lee2015tandem, pardini2017estimation, cazcarra2019comparison}) also falls in the \emph{multi-look multi-master} category.

Let us follow the common conventions and denote the azimuth, range and elevation axes as $x$, $r$ and $s$, respectively, where $s$ is perpendicular to the $x$-$r$ plane. For the sake of argument, suppose for any sample at the $x$ and $r$ positions, the $N$ single-look complex (SLC) SAR measurements are noiseless. After deramping, each phase-calibrated SLC measurement can be modeled as the Fourier transform $\Gamma$ of the elevation-dependent far-field reflectivity function $\gamma:\real\to\cplx$ at the corresponding wavenumber $k$ \cite{fornaro2005three}:
\begin{equation}
    g_n = \Gamma(k_n) \isdef \int \gamma(s) \exp(-jk_ns) \mathrm ds, \quad n = 1, \ldots, N,
\end{equation}
where $k_n \isdef -4\pi b_n / (\lambda r_0)$ is the $n$th wavenumber determined by the sensor position $b_n$ along an axis $b \parallel s$ w.r.t.\ an arbitrary reference, the radar wavelength $\lambda$, and the slant-range distance $r_0$ w.r.t.\ a ground reference point. Here we consider the non-differential case. An extension to the differential case, in which scatterers' motion is modeled as linear combination of basis functions, is straightforward.

In the \emph{single-look single-master} case, one SLC (say the $i$th, $i \in [N]$), typically near the center of joint orbital and temporal distribution, is selected as the unique master for generating interferograms, i.e., $g_n \conj{g_i} / |g_i|$, $\forall n \in [N] \setminus \{i\}$. This process, which can also be interpreted as a phase calibration step, converts $k_n$ into the wavenumber baseline $\Delta k_n \isdef k_n - k_i$, $\forall n \in [N]$. As a result, the zero position of wavenumber baseline is fixed, i.e., $\Delta k_i=0$. The rationale behind this is, e.g., to facilitate 2-D phase unwrapping for atmospheric phase screen (APS) compensation by smoothing out interferometric phase in $x$-$r$.

Likewise, the data model of random volume scattering is straightforward in the \emph{multi-look multi-master} setting. Suppose $\gamma(s)$ is a white random signal. For any master and slave sampled at $k$ and $k+\Delta k$, respectively, the Van Cittert--Zernike theorem implies that the expectation (due to multi-looking) of the interferogram, being the autocorrelation function $R_{\Gamma\Gamma}$ of $\Gamma$, is the Fourier transform of the elevation-dependent backscatter coefficient function $\sigma_0:\real\to\real$ at $\Delta k$:
\begin{equation} \label{eq_ml_mm}
    \E[\Gamma(k+\Delta k) \conj{\Gamma(k)}] = R_{\Gamma\Gamma}(\Delta k) = \int \sigma_0(s) \exp(-j\Delta ks) \mathrm ds,
\end{equation}
where the property of $\gamma(s)$ being white, i.e., $\E[\gamma(s) \conj{\gamma(s')}] = \sigma_0(s) \delta(s-s')$, is utilized. This leads to an inverse problem similar to the one in the \emph{single-look single-master} case.

This paper, on the other hand, addresses the general problem of SAR tomography using a \emph{single-look multi-master} stack. Such a stack arises when, e.g.,
\begin{itemize}
    \item a stack of bistatic interferograms is used in order to diminish APS, to minimize temporal decorrelation of non-PSs, and to eliminate motion-induced phase for single scatterers \cite{krieger2007tandem, duque2010bistatic, goel2013fusion},
    \item repeat-pass interferograms of small (temporal) baselines are employed so as to limit the corresponding decorrelation effects of non-PSs \cite{berardino2002new, lanari2004small, hooper2008multi}.
\end{itemize}

While both previously mentioned categories have been intensively studied, it is not the case for \emph{single-look multi-master} SAR tomography. To the best of our knowledge, all the existing work to date toward single-look \emph{multi-master} SAR tomography is still incorrectly based on the single-look \emph{single-master} data model \cite{zhu2012sparse, ge2019bistatic}. As will be demonstrated later with a real SAR data set, this approach can be insufficient for layover separation, even if the elevation distance between two scatterers is significantly larger than the Rayleigh resolution. This motivates us to fill the gap in the literature by revisiting the single-look data model in a multi-master multi-scatterer configuration, and by developing efficient methods for tomographic reconstruction. Naturally, this study is also inspired by prospective SAR missions such as Tandem-L that will deliver high-resolution wide-swath bistatic acquisitions in L-band as operational products \cite{moreira2015tandem}.

Our main contributions can be summarized as follows.
\begin{itemize}
    \item We establish the data model of single-look multi-master SAR tomography, by means of which both sparse recovery and model-order selection can be formulated as nonconvex minimization problems.
    \item We develop two algorithms for solving the aforementioned nonconvex sparse recovery problem, namely,
    \begin{enumerate}
        \item NLS:\\
        we provide two theorems regarding the critical points of its subproblem's objective function that also underlies model-order selection;
        \item bi-convex relaxation and alternating minimization (BiCRAM):\\
        we propose to sample its solution path for the purpose of automatic regularization parameter tuning, and we show empirically that a simple diagonal preconditioning can effectively improve convergence.
    \end{enumerate}
    \item We propose to correct quantization errors by using (nonconvex) nonlinear optimization.
    \item We validate tomographic height estimates with ground truth generated by SAR simulations and geodetic corrections.
\end{itemize}

The rest of this paper is organized as follows. Sec.~\ref{sec_multimaster} introduces the data model and inversion framework for single-look multi-master SAR tomography. In Sec.~\ref{sec_nonlinear} and \ref{sec_biconvex}, two algorithms for solving the nonconvex sparse recovery problem within the aforementioned framework, namely, NLS and BiCRAM, are elucidated and analyzed, respectively. Sec.~\ref{sec_experiment} reports an experiment with TerraSAR-X data including a validation of tomographic height estimates. This paper is concluded by Sec.~\ref{sec_conclusion}.

\section{Single-look multi-master SAR tomography} \label{sec_multimaster}

In this section, we establish the data model for single-look multi-master SAR tomography, analyze its implications for two specific cases, and sketch out a generic inversion framework for it.

We start with the mathematical notations that are used throughout this paper.
\begin{notation}
We denote scalars as lower- or uppercase letters (e.g., $m$, $N$, $\lambda$), vectors as bold lowercase letters (e.g., $\bfg$, $\bfgamma$), matrices, sets and ordered pairs as bold uppercase letters (e.g., $\bfR$, $\bfOmega$), and number fields as blackboard bold uppercase letters (e.g., $\intg$, $\real$, $\cplx$) with the following conventions:
\begin{itemize}
    \item $g_n$ denotes the $n$th entry of $\bfg$.
    \item $\bfa^m$ and $\bfa_n$ denote the $m$th row and $n$th column of $\bfA$, respectively.
    \item $\Diag(\bfa)$ denotes a square diagonal matrix whose entries on the main diagonal are equal to $\bfa$, and $\Diag(\bfA)$ denotes a vector whose entries are equal to those on the main diagonal of $\bfA$.
    \item $\Supp(\bfx)$ denotes the index set of nonzero entries or support of $\bfx$.
    % \item $\|\bfx\|_1$ and $\|\bfx\|_2$ denote the $\ell_1$ and $\ell_2$ norm, respectively.
    % \item $\bfe_i$ denotes the standard basis vector with $\Supp(\bfe_i)=\{i\}$.
    \item $\conj{\bfA}$, $\bfA^T$ and $\bfA^H$ denote the (elementwise) complex conjugate, transpose and conjugate transpose of $\bfA$, respectively.
    \item $\bfA_R$ and $\Real(\bfA)$ denote the real part of $\bfA$.
    \item $\bfA_I$ and $\Imag(\bfA)$ denote the imaginary part of $\bfA$.
    \item $\bfA \had \bfB$ denotes the Hadamard product of $\bfA$ and $\bfB$.
    \item $\bfA \succ \bfzero$, $\bfB \prec \bfzero$ means that $\bfA$ is positive definite and $\bfB$ is negative definite.
    \item $\bfA_{\bfOmega}$ denotes the matrix formed by extracting the columns of $\bfA$ indexed by $\bfOmega$.
    \item $\|\bfA\|_{1,2}$ denotes the $\ell_{1,2}$ norm of $\bfA$, i.e., the sum of the $\ell_2$ norms of its rows.
    \item $\bfI$ denotes the identity matrix.
    \item $[N]$ denotes the set $\{1, \ldots, N\}$.
    \item $|\bfOmega|$ denotes the cardinality of the set $\bfOmega$.
    \item The nonnegative and positive subsets of a number field $\mathbb F$ are denoted as $\mathbb F_+$ and $\mathbb F_{++}$, respectively.
\end{itemize}
\end{notation}

\subsection{Data Model} \label{sec_multimaster_data}

First of all, we give a definition of ``single-master'' and ``multi-master'' by using the language of basic graph theory (e.g., \cite[\S1]{bondy2008graph}). Let $\bfG \isdef \left(\bfV(\bfG), \bfE(\bfG)\right)$ be an acyclic directed graph that is associated with an incidence function $\psi_\bfG$, where $\bfV(\bfG) \isdef [N]$ is a set of vertices (SLCs), $\bfE(\bfG)$ is a set of edges (interferograms), and for each $e \in \bfE(\bfG)$, $\exists m, n \in \bfV(\bfG)$ such that $\psi_\bfG(e)=(m, n)$. Its adjacency matrix $\bfA(\bfG) \isdef (a_{m,n}) \in \{0,1\}^{N \times N}$ is given by
\begin{equation}
    a_{m,n} \isdef
    \begin{cases}
    1 & : (m,n) \in \bfE(\bfG), \\
    0 & : (m,n) \notin \bfE(\bfG).
    \end{cases}
\end{equation}
Since $\bfG$ is acyclic, $a_{n,n}=0$, $\forall n \in \bfV(\bfG)$, i.e., the diagonal of $\bfA(\bfG)$ contains only zero entries. Without loss of generality, assume that every vertex is connected to at least another one.
\begin{definition}
The \emph{single-master} configuration means that there exists a unique $i \in [N]$ such that $a_{i,n} = 1$, $\bfa^m = \bfzero$, $\forall m,n \in [N] \setminus \{i\}$. In this case, we refer to $\{g_n \conj{g_i} / |g_i|\}$ as the \emph{single-master} stack with a master indexed by $i$.
\end{definition}
\begin{definition}
The \emph{multi-master} configuration means that $\nexists i \in [N]$ such that $a_{i,n} = 1$, $\bfa^m = \bfzero$, $\forall m,n \in [N] \setminus \{i\}$. In this case, we refer to $\{g_n \conj{g_m}\}$ as the \emph{multi-master} stack.
\end{definition}
That is, ``multi-master'' is equivalent to ``not single-master''. Two exemplary configurations are illustrated in Fig.~\ref{fig_single-multi-master_tsx_adj}.

\begin{figure}[!tp]
\centering
\includegraphics[width=0.48\textwidth]{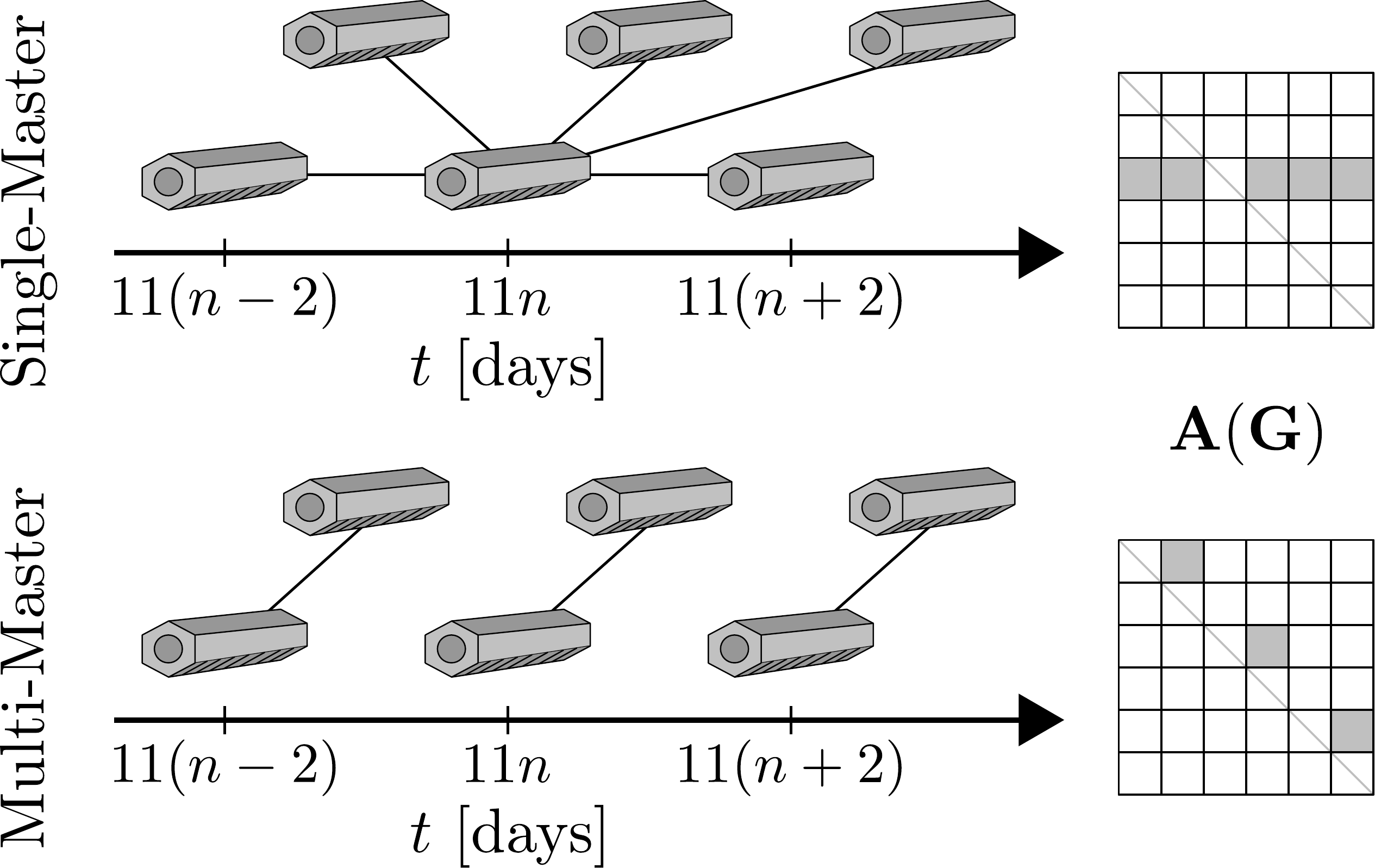}
\caption{Single-master vs.\ multi-master: two exemplary configurations and the corresponding adjacency matrices $\bfA(\bfG)$.} \label{fig_single-multi-master_tsx_adj}
\end{figure}

In the \emph{multi-master} case, an interferogram is created for each $(m,n) \in \bfE(\bfG)$:
\begin{equation} \label{eq_mm}
    g_n \conj{g_m} = \int \int \gamma(s) \conj{\gamma(s')} \exp\left( -j (k_ns - k_ms') \right) \mathrm ds \mathrm ds'.
\end{equation}

Hereafter, we focus on the case in which the far field contains only a small number of scatterers such that
\begin{equation}
    g_n \approx \sum_l \gamma_l \exp(-jk_ns_l), \quad n = 1, \ldots, N,
\end{equation}
where $\gamma_l \in \cplx$ is the reflectivity of the $l$th scatterer located at the elevation position $s_l$. The single-look multi-master data model \eqref{eq_mm} becomes
\begin{equation} \label{eq_mm_disc}
    g_n \conj{g_m} \approx \sum_{l, l'} \gamma_l \conj{\gamma_{l'}} \exp\left( -j (k_ns_l - k_ms_{l'}) \right),
\end{equation}
$\forall (m,n) \in \bfE(\bfG)$.

% \begin{remark}
% Each \emph{single-master} interferogram is normalized with the master amplitude. In the \emph{multi-master} case, this would however lead to
% \begin{equation}
% \begin{aligned}
%     & g_n \conj{g_m} / |g_m| \\
%     \approx{} & \sum_{l, l'} \frac{\gamma_l \conj{\gamma_{l'}}}{\left|\sum_i \gamma_i \exp(-jk_ms_i)\right|} \exp\left( -j (k_ns_l - k_ms_{l'}) \right).
% \end{aligned}
% \end{equation}
% In order not to complicate the data model, we do not adopt this convention.
% \end{remark}

In the next subsection, we analyze the implications of \eqref{eq_mm_disc} for the single- and double-scatterer cases.

\subsection{Implications} \label{sec_multimaster_implications}

In the \emph{single-scatterer} case, \eqref{eq_mm_disc} becomes
\begin{equation} \label{eq_mm_single}
    g_n \conj{g_m} \approx |\gamma|^2 \exp\left( -j (k_n - k_m) s \right),
\end{equation}
i.e., the multi-master observation is actually the Fourier transform of the reflectivity \emph{power} at the wavenumber baseline $k_n-k_m$. As a result, the nonnegativity of $|\gamma|^2$ should be considered during inversion. Since both the real and imaginary parts of $g_n \conj{g_m}$ are parametrized by $|\gamma|^2$, i.e.,
\begin{equation}
\begin{aligned}
    \Real(g_n \conj{g_m}) &\approx |\gamma|^2 \cos\left( (k_n - k_m) s \right), \\
    \Imag(g_n \conj{g_m}) &\approx |\gamma|^2 \sin\left( -(k_n - k_m) s \right),
\end{aligned}
\end{equation}
the inversion problem can be recast as a real-valued one.

For \emph{double scatterers}, the multi-master observation is
\begin{equation} \label{eq_mm_double}
\begin{aligned}
    g_n \conj{g_m} \approx {}& |\gamma_1|^2 \exp\left( -j (k_n - k_m) s_1 \right) +{} \\
    & \gamma_1 \conj{\gamma_2} \exp\left( -j (k_ns_1 - k_ms_2) \right) +{} \\
    & \conj{\gamma_1} \gamma_2 \exp\left( -j (k_ns_2 - k_ms_1) \right) +{} \\
    & |\gamma_2|^2 \exp\left( -j (k_n - k_m) s_2 \right).
\end{aligned}
\end{equation}
In addition to the Fourier transform of the reflectivity power at $k_n-k_m$, the right-hand side of \eqref{eq_mm_double} contains the second and third ``cross-terms'' in which the reflectivity values of the two scatterers (and their frequency-time-products) are coupled. This essentially rules out any linear model.

\begin{remark}
In the \emph{multi-look} multi-master setting, the data model under random volume scattering is
\begin{equation}
    \E[\Gamma(k_n) \conj{\Gamma(k_m)}] = \int \sigma_0(s) \exp(-j (k_n-k_m) s) \mathrm ds,
\end{equation}
as already indicated in Eq.~\eqref{eq_ml_mm}, i.e., no coupling is involved.
\end{remark}

\begin{remark}
A \emph{multi-master} bistatic or pursuit monostatic (i.e., $10$-second temporal baseline \cite{hajnsek2014announcement}) stack is in general not motion-free for double (or multiple) scatterers.

To see this, consider for example the linear deformation model $d(t_n) \isdef vt_n$, where $v$ and $t$ denote linear deformation rate and temporal baseline, respectively. Observe that
\begin{equation}
\begin{aligned}
    & g_n \conj{g_m} \\
    \approx {}& \sum_{l, l'} \gamma_l \conj{\gamma_{l'}} \,\cdot \\
    & \exp( -j (k_ns_l - k_ms_{l'} + 4\pi d_l(t_n)/\lambda - 4\pi d_{l'}(t_m)/\lambda) ) \\
    = {}& |\gamma_1|^2 \exp\left( -j \left( (k_n - k_m) s_1 + 4\pi v_1(t_n-t_m)/\lambda \right) \right) +{} \\
    & \gamma_1 \conj{\gamma_2} \exp\left( -j \left( (k_ns_1 - k_ms_2) + 4\pi(v_1t_n-v_2t_m)/\lambda \right) \right) +{} \\
    & \conj{\gamma_1} \gamma_2 \exp\left( -j \left( (k_ns_2 - k_ms_1) + 4\pi(v_2t_n-v_1t_m)/\lambda \right) \right) +{} \\
    & |\gamma_2|^2 \exp\left( -j \left( (k_n - k_m) s_2 + 4\pi v_2(t_n-t_m)/\lambda \right) \right).
\end{aligned}
\end{equation}
In the case of $t_m=t_n$, the motion-induced phase in the cross-terms vanishes if and only if $v_1=v_2$.
\end{remark}

The next subsection introduces a generic inversion framework for single-look multi-master SAR tomography.

\subsection{Inversion Framework} \label{sec_multimaster_inversion}

The data model \eqref{eq_mm_disc} already indicates a nonlinear system of equations for a single-look multi-master stack. Suppose $\bfG$ is the graph associated with this stack that contains a total of $N' \isdef |\bfE(\bfG)|$ multi-master observations, and $e_1, \ldots, e_{N'}$ is an ordered sequence of all the edges in $\bfE(\bfG)$. Let $\Master, \Slave : [N'] \to [N]$ be the mappings to the master and slave image indices, respectively. For each $e_{n}$, $n \in [N']$, a multi-master observation $g_{\Slave(n)} \conj{g_{\Master(n)}}$ is obtained. Let $\bfg \in \cplx^{N'}$ be the vector of multi-master observations such that $g_{n} \isdef g_{\Slave(n)} \conj{g_{\Master(n)}}$, $\forall n \in [N']$. Let $s_1, \ldots, s_{L}$ be a discretization of the elevation axis $s$. The data model in matrix notations is
\begin{equation} \label{eq_mm_mat}
    \bfg \approx (\bfR \bfgamma) \had (\conj{\bfS \bfgamma}),
\end{equation}
where $\bfR, \bfS \in \cplx^{N' \times L}$ represent the tomographic observation matrices of the slave and master images, respectively, $r_{n,l} \isdef \exp(-j k_{\Slave(n)}s_l)$, $s_{n,l} \isdef \exp(-j k_{\Master(n)}s_l)$, $\forall n \in [N']$, $l \in [L]$, and $\bfgamma \in \cplx^L$ is the unknown reflectivity vector such that $\gamma_l$ is associated with the scatterer (if any) at elevation position $s_l$.

In light of \eqref{eq_mm_mat}, we propose the following framework for tomographic inversion.

\subsubsection{Nonconvex sparse recovery}
We consider the problem
\begin{equation} \label{eq_mm_sr}
\begin{aligned}
    % \hat\bfgamma \isdef {}& \argmin_{\bfgamma} \, \frac{1}{2} \| (\bfR \bfgamma) \had (\conj{\bfS \bfgamma}) - \bfg \|_2^2 \\
    % & \st \, |\Supp(\bfgamma)| \leq K,
    \hat\bfgamma \isdef {}& \argmin_{\bfgamma} && \frac{1}{2} \| (\bfR \bfgamma) \had (\conj{\bfS \bfgamma}) - \bfg \|_2^2 \\
    & \st && |\Supp(\bfgamma)| \leq K,
\end{aligned}
\end{equation}
where $K\in\intg_{++}$. The objective function measures the model goodness of fit and the constraint enforces $\bfgamma$ to be sparse, as is implicitly assumed in \eqref{eq_mm_disc}. If $\sum_{l=0}^K \binom{L}{l}$ is small, \eqref{eq_mm_sr} can be solved heuristically by using the algorithms that will be developed in Sec.~\ref{sec_nonlinear}. Sec.~\ref{sec_biconvex} is dedicated to another algorithm that solves a similar problem based on bi-convex relaxation.

\subsubsection{Model-order selection}
This procedure removes outliers and therefore reduces false positive rate. By using for example the Bayesian information criterion (e.g., \cite{stoica2004model}), model-order selection can be formulated as the following constrained minimization problem:
\begin{equation}
\begin{aligned} \label{eq_mm_ms}
    % \hat\bfOmega \isdef {}& \argmin_{\bfOmega(, \bfdelta)} \, 2\ln\left( \| (\bfR_{\bfOmega} \bfdelta_{\bfOmega}) \had (\conj{\bfS_{\bfOmega} \bfdelta_{\bfOmega}}) - \bfg \|_2^2 / N'\right) +{} \\
    % & \phantom{\argmin{}} \, (5|\bfOmega|+1) \ln(N')/N' \\
    % & \st \, \Supp(\bfdelta) = \bfOmega \subset \Supp(\hat\bfgamma).
    \hat\bfOmega \isdef {}& \argmin_{\bfOmega(, \bfdelta)} && \, 2\ln\left( \| (\bfR_{\bfOmega} \bfdelta_{\bfOmega}) \had (\conj{\bfS_{\bfOmega} \bfdelta_{\bfOmega}}) - \bfg \|_2^2 / N'\right) \\
    & && {}+ (5|\bfOmega|+1) \ln(N')/N' \\
    & \st && \Supp(\bfdelta) = \bfOmega \subset \Supp(\hat\bfgamma),
\end{aligned}
\end{equation}
where $\bfdelta\in\cplx^L$ is an auxiliary variable, and $\bfOmega$ is its support. Since $|\Supp(\hat\bfgamma)|$ is typically small, \eqref{eq_mm_ms} can be tackled by solving a sequence of subset nonlinear least squares problems in the form of
\begin{equation} \label{eq_mm_ms_sub}
    \minimize_{\bfepsilon} \frac{1}{2} \| (\bfR_{\bfOmega} \bfepsilon) \had (\conj{\bfS_{\bfOmega} \bfepsilon}) - \bfg \|_2^2,
\end{equation}
for which two solvers will be introduced in Sec.~\ref{sec_nonlinear}.

\subsubsection{Off-grid correction}
The off-grid or quantization problem arises when scatterers are not located on the (regular) grid of discrete elevation positions $s_1, \ldots, s_{L}$. Ge et al.\ \cite{ge2018spaceborne} proposed to oversample $\hat\bfgamma$ in the vicinity of selected scatterers in order to circumvent this problem. Here we propose a more elegant approach that is based on nonlinear optimization.

Denote $\hat K \isdef |\hat\bfOmega|$ as the number of scatterers after model-order selection. Let $\gamma_l^R$ and $\gamma_l^I$ be the real and imaginary parts of the complex-valued reflectivity $\gamma_l$ of the $l$th scatterer that is located at $s_l$, respectively, i.e., $\gamma_l=\gamma_l^R+j\gamma_l^I$, $\forall l \in [\hat K]$. On the basis of the single-look multi-master data model \eqref{eq_mm_disc}, we seek a solution of the following minimization problem:
\begin{equation} \label{eq_mm_oc}
\begin{aligned}
    \minimize_{\gamma_l^R, \gamma_l^I, s_l} \sum_{n} \Big| & g_n - \sum_{l, l'} (\gamma_l^R+j\gamma_l^I) (\gamma_{l'}^R-j\gamma_{l'}^I) \,\cdot \\
    & \exp\left( -j (k_{\Slave(n)}s_l - k_{\Master(n)}s_{l'}) \right) \Big|^2.
\end{aligned}
\end{equation}
Note that the objective function is differentiable w.r.t.\ $\gamma_l^R$, $\gamma_l^I$ and $s_l$, $\forall l \in [\hat K]$. Needless to say, the on-grid estimates from \eqref{eq_mm_ms} are used as the initial solution. We will revisit this problem in Sec.~\ref{sec_nonlinear_algorithm}.

Thus far the inversion framework has been established. In the next two sections, we will deal with the optimization problems \eqref{eq_mm_sr}--\eqref{eq_mm_oc} from the algorithmic point of view.

\section{Nonlinear Least Squares (NLS)} \label{sec_nonlinear}

NLS is a parametric method that breaks down a sparse recovery problem into a series of subset linear least squares subproblems \cite[\S6.4]{stoica2005spectral}. Here we extend the concept of NLS to the single-look multi-master data model \eqref{eq_mm_mat} and address the subproblem \eqref{eq_mm_ms_sub}, or equivalently,
\begin{equation} \label{eq_nls}
    \minimize_{\bfx} \frac{1}{2} \| (\bfA \bfx) \had (\conj{\bfB \bfx}) - \bfb \|_2^2,
\end{equation}
where $\bfA,\bfB \in \cplx^{m \times n}$, $\bfx \in \cplx^n$, $\bfb \in \cplx^m$ with $m>n$. As can be concluded from Sec.~\ref{sec_multimaster_inversion}, \eqref{eq_nls} is clearly of interest, since it not only solves the nonconvex sparse recovery problem \eqref{eq_mm_sr}, but also underlies model-order selection \eqref{eq_mm_ms}.

\subsection{Algorithm} \label{sec_nonlinear_algorithm}

In this subsection, we develop two algorithms for solving \eqref{eq_nls}.

The first algorithm is based on the alternating direction method of multipliers (ADMM) \cite{boyd2011distributed}. ADMM solves a minimization problem by alternatively minimizing its augmented Lagrangian \cite[p.~509]{foucart2017mathematical}, in which the augmentation term is scaled by a penalty parameter $\rho\in\real_{++}$. A short recap can be found in Appendix~\ref{app_admm}. It converges under very general conditions with medium accuracy \cite[\S3.2]{boyd2011distributed}.
% \begin{equation}
% \begin{aligned}
%     & \minimize_{\bfx,\bfz} && f(\bfx) + g(\bfz) \\
%     & \st && \bfC\bfx + \bfD\bfz = \bfe
% \end{aligned}
% \end{equation}
% \begin{equation} \label{eq_aug_lag}
% \begin{aligned}
%     L_\rho(\bfx,\bfy,\bfz) \isdef {}& f(\bfx) + g(\bfz) + \Real\langle\bfy, \bfC\bfx+\bfD\bfz-\bfe\rangle + \\
%     & (\rho/2) \|\bfC\bfx+\bfD\bfz-\bfe\|_2^2,
% \end{aligned}
% \end{equation}
% i.e.,
% \begin{equation} \label{eq_admm}
% \begin{aligned}
%     \bfx^{(k+1)} \isdef {}& \argmin_{\bfx} L_\rho(\bfx,\bfy^{(k)},\bfz^{(k)}) \\
%     \bfz^{(k+1)} \isdef {}& \argmin_{\bfz} L_\rho(\bfx^{(k+1)},\bfy^{(k)},\bfz) \\
%     \bfy^{(k+1)} \isdef {}& \bfy^{(k)} + \rho(\bfC\bfx^{(k+1)}+\bfD\bfz^{(k+1)}-\bfe)
% \end{aligned}
% \end{equation}
% in the $k$th iteration, where $\rho\in\real_{++}$ is a penalty parameter.

Now we consider \eqref{eq_nls} in its equivalent form:
\begin{equation}
\begin{aligned}
    & \minimize_{\bfx,\bfz} && \frac{1}{2} \| (\bfA \bfx) \had (\conj{\bfB \bfz}) - \bfb \|_2^2 \\
    & \st && \bfx - \bfz = \bfzero.
\end{aligned}
\end{equation}
This is essentially a bi-convex problem with affine constraint \cite[\S9.2]{boyd2011distributed}. Applying the ADMM update rules leads to Alg.~\ref{alg_nls_admm}. Note that both $\bfx$- and $\bfz$-updates boil down to solving linear least squares problems.
% \begin{equation} \label{eq_nls_admm}
% \begin{aligned}
%     \tilde\bfA \isdef {}& \Diag(\conj{\bfB\bfz^k}) \bfA \\
%     \bfx^{k+1} \isdef {}& ( \tilde\bfA^H \tilde\bfA + \rho\bfI )^{-1} ( \tilde\bfA^H\bfb + \rho\bfz^k - \bfy^k ) \\
%     \tilde\bfB \isdef {}& \Diag(\conj{\bfA\bfx^{k+1}}) \bfB \\
%     \bfz^{k+1} \isdef {}& ( \tilde\bfB^H \tilde\bfB + \rho\bfI )^{-1} ( \tilde\bfB^H\conj{\bfb} + \rho\bfx^{k+1} + \bfy^k ) \\
%     \bfy^{k+1} \isdef {}& \bfy^k + \rho(\bfx^{k+1} - \bfz^{k+1})
% \end{aligned}
% \end{equation}
\begin{algorithm}[H]
\caption{An ADMM-based algorithm for solving \eqref{eq_nls}} \label{alg_nls_admm}
\begin{algorithmic}[1]
	\State \textbf{Input:} $\bfA$, $\bfB$, $\bfb$, $\bfz^{(0)}$, $\rho$
	\State \textbf{Initialize} \Let{\bfz}{\bfz^{(0)}}
	\State \textbf{Until} stopping criterion is satisfied, \textbf{Do}
	\State \quad \Let{\tilde\bfA}{ \Diag(\conj{\bfB \bfz}) \bfA }
	\State \quad \Let{\bfx}{ ( \tilde\bfA^H \tilde\bfA + \rho\bfI )^{-1} ( \tilde\bfA^H\bfb + \rho\bfz - \bfy ) }
	\State \quad \Let{\tilde\bfB}{ \Diag(\conj{\bfA \bfx}) \bfB }
	\State \quad \Let{\bfz}{ ( \tilde\bfB^H \tilde\bfB + \rho\bfI )^{-1} ( \tilde\bfB^H\conj{\bfb} + \rho\bfx + \bfy ) }
	\State \quad \Let{\bfy}{ \bfy + \rho(\bfx - \bfz) }
	\State \textbf{Output:} \(\bfz\)
\end{algorithmic}
\end{algorithm}

The second algorithm uses the trust-region Newton's method that exploits second-order information for solving general unconstrained nonlinear minimization problems \cite[\S4]{nocedal2006numerical}. The rationale behind this choice is to circumvent saddle points that cannot be identified by first-order information \cite{sun2015nonconvex}. In each iteration, a norm ball or ``trust region'' centered at the current iterate is adaptively chosen. If the second-order Taylor polynomial of the objective function is sufficiently good for approximation, a descent direction is found via solving a quadratically constrained quadratic minimization problem. Suppose $f : \real^n \to \real$ is the objective function, the subproblem at the iterate $\bfx \in \real^n$ is
\begin{equation} \label{eq_newton_tr}
\begin{aligned}
& \minimize_{\Delta\bfx} && f(\bfx) + \nabla f(\bfx)^T \Delta\bfx + \frac{1}{2} \Delta\bfx^T \nabla^2 f(\bfx) \Delta\bfx \\
& \st && \|\Delta\bfx\|_2 \leq r,
\end{aligned}
\end{equation}
where $\Delta\bfx \in \real^n$ is the search direction, $\nabla f$ and $\nabla^2 f$ denote the gradient and Hessian of $f$, respectively, and $r\in\real_{++}$ is the current trust region radius. By means of the Karush-Kuhn-Tucker (KKT) conditions for nonconvex problems, Nocedal and Wright \cite[\S4.3]{nocedal2006numerical} divided \eqref{eq_newton_tr} into several cases: in one case a one-dimensional (1-D) root-finding problem w.r.t.\ the dual variable is solved by using for example the Newton's method, while in the others the solutions are analytic. Since the technical details are quite overwhelming, we do not intend to provide an exposition here. Interested readers are advised to refer to \cite[\S4.3]{nocedal2006numerical}. It can be shown that the trust-region Newton's method converges to a critical point with high accuracy under general conditions \cite[p.~92]{nocedal2006numerical}.

By verifying the Cauchy-Riemann equations (e.g., \cite[p.~50]{freitag2006complex}), it is easy to show that the objective function of \eqref{eq_nls} is not complex-differentiable w.r.t.\ $\bfx$. In lieu of using Wirtinger differentiation that does not contain all the second-order information, we exploit the fact that the mapping $\bfx \mapsto (\bfx_R, \bfx_I)$ is isomorphic and let
\begin{equation} \label{eq_nls_real}
    f(\bfx_R,\bfx_I) \isdef \frac{1}{2} \| (\bfA \bfx) \had (\conj{\bfB \bfx}) - \bfb \|_2^2,
\end{equation}
where $f : \real^n \times \real^n \to \real$ is real-differentiable w.r.t.\ $\bfx_R$ and $\bfx_I$. Straightforward computations reveal its gradient as
\begin{equation} \label{eq_nls_real_grad}
    \nabla f(\bfx_R,\bfx_I) =
    \begin{pmatrix}
    \frac{\partial f}{\partial \bfx_R} \\
    \frac{\partial f}{\partial \bfx_I}
    \end{pmatrix}
    =
    \begin{pmatrix}
    \Real\left(\bfd\right) \\
    \Imag\left(\bfd\right)
    \end{pmatrix},
\end{equation}
where
\begin{equation} \label{eq_nls_real_grad_d}
\begin{aligned}
    \bfd \isdef {}& \bfA^H\left( ( (\bfA\bfx) \had (\conj{\bfB\bfx}) - \bfb ) \had (\bfB\bfx) \right) +{} \\
    & \bfB^H\left( ( (\conj{\bfA\bfx}) \had (\bfB\bfx) - \conj{\bfb} ) \had (\bfA\bfx) \right),
\end{aligned}
\end{equation}
and its Hessian as
\begin{equation} \label{eq_nls_real_hess}
\begin{aligned}
    \nabla^2 f(\bfx_R,\bfx_I) &=
    \begin{pmatrix}
    \frac{\partial^2 f}{\partial \bfx_R^2} & \frac{\partial^2 f}{\partial \bfx_R \partial \bfx_I} \\
    \frac{\partial^2 f}{\partial \bfx_I \partial \bfx_R} & \frac{\partial^2 f}{\partial \bfx_I^2}
    \end{pmatrix} \\
    &=
    \begin{pmatrix}
    \Real\left( \bfC + \bfD + \bfE \right) & -\Imag\left( \bfC - \bfD + \bfE \right) \\
    \Imag\left( \bfC + \bfD + \bfE \right) & \Real\left( \bfC - \bfD + \bfE \right)
    \end{pmatrix},
\end{aligned}
\end{equation}
where
\begin{equation}
\begin{aligned}
    \bfC &\isdef && \bfA^H \Diag\left( (\bfB\bfx) \had (\conj{\bfB\bfx}) \right) \bfA +{} \\
    & && \bfB^H \Diag\left( (\bfA\bfx) \had (\conj{\bfA\bfx}) \right) \bfB, \\
    \bfD &\isdef && \bfA^H \Diag\left( (\bfA\bfx) \had (\bfB\bfx) \right) \conj{\bfB} +{} \\
    & && \bfB^H \Diag\left( (\bfA\bfx) \had (\bfB\bfx) \right) \conj{\bfA}, \\
    \bfE &\isdef && \bfA^H \Diag\left( (\bfA\bfx) \had (\conj{\bfB\bfx}) - \bfb \right) \bfB +{} \\
    & && \bfB^H \Diag\left( (\conj{\bfA\bfx}) \had (\bfB\bfx) - \conj{\bfb} \right) \bfA.
\end{aligned}
\end{equation}
Note that $\bfd : \cplx^n \to \cplx^n$ and $\bfC,\bfD,\bfE : \cplx^n \to \cplx^{n \times n}$ are essentially functions of $\bfx$. Here we drop the parentheses in order to simplify notation. For the same purpose, we adopt the following convention:
\begin{equation}
    f(\bfx) \isdef f\left(\Real(\bfx),\Imag(\bfx)\right) = f(\bfx_R,\bfx_I).
\end{equation}

By using the first- and second-order information of \eqref{eq_nls_real}, \eqref{eq_nls} can be directly tackled by the trust-region Newton's method via solving a sequence of subproblems in the form of \eqref{eq_newton_tr}. For any optimal point $\bfx^\star$, the KKT condition is
\begin{equation} \label{eq_nls_kkt}
    \nabla f(\bfx^\star) = \bfzero \iff \bfd(\bfx^\star) = \bfzero.
\end{equation}

\begin{remark}
Likewise, the objective function of \eqref{eq_mm_oc} is real-differentiable w.r.t.\ $\gamma_l^R$, $\gamma_l^I$ and $s_l$, $\forall l \in [\hat K]$. Therefore, the trust-region Newton's method is directly applicable. Alternatively, first-order methods such as Broyden-Fletcher-Goldfarb-Shanno (BFGS, see for example \cite[\S6.1]{nocedal2006numerical} and the references therein) can also be used.
\end{remark}

Needless to say, it is not guaranteed that these algorithms always converge to a global minimum. We will demonstrate later in Sec.~\ref{sec_experiment} that the solutions are often good enough. Fig.~\ref{fig_nls_obj_history_admm_trn} shows typical convergence curves in the case of double scatterers (\#6 in Sec.~\ref{sec_experiment_experimental}). In order to generate this plot, we first let one algorithm run non-stop until it converged with very high precision. We then took this solution as an optimal point $\bfx^\star$ and compared the absolute difference of the objective value $|f(\bfx)-f(\bfx^\star)|$. Both ADMM and the trust-region Newton's method converged to the same solution (up to a constant phase angle, see Sec.~\ref{sec_nonlinear_analysis}), although it only took the latter less than $10$ iterations. Still, the former can be interesting due to the simplicity of its update rules (see Alg.~\ref{alg_nls_admm}). In Sec.~\ref{sec_experiment}, the latter will be used for demonstration purposes.

\begin{figure}[!tp]
\centering
\includegraphics[width=0.48\textwidth]{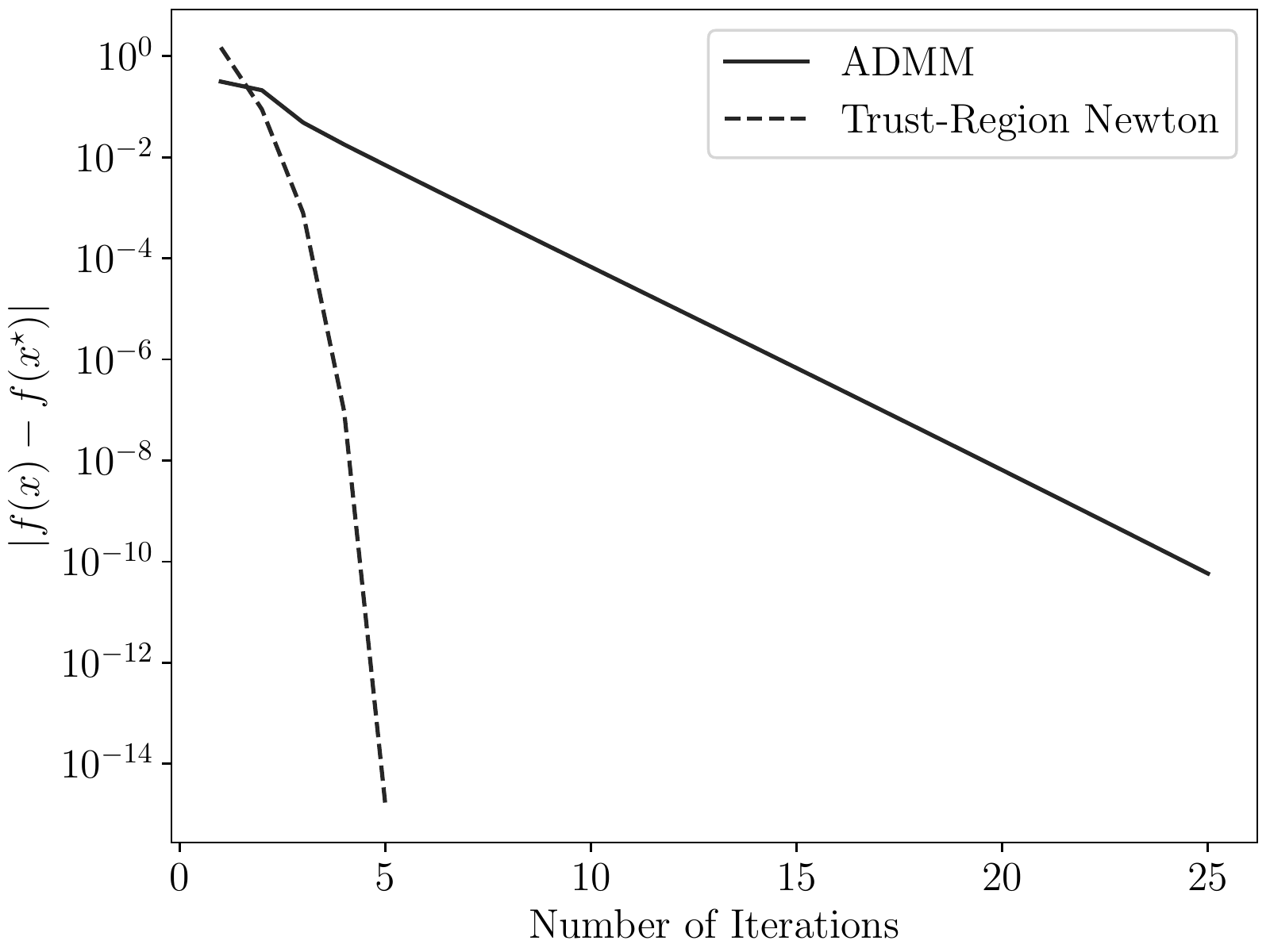}
\caption{Convergence curve of NLS using ADMM (solid line) and the trust-region Newton's method (dashed line).} \label{fig_nls_obj_history_admm_trn}
\end{figure}

\subsection{Analysis of the Objective Function} \label{sec_nonlinear_analysis}

Due to the nonconvexity of the objective function \eqref{eq_nls_real}, its analysis is not straightforward. We are primarily concerned with the following two questions:
\begin{enumerate}
    \item Under which circumstances do critical points or local extrema exist?
    \item If they do exist, how many are they?
\end{enumerate}
This subsection shall provide a partial answer to these questions.

First of all, we state the following general observation.
\begin{proposition} \label{thm_eig_hess}
For any $\bfx \in \cplx^n$ and $\phi \in \real$, any eigenvalue of $\nabla^2f(\bfx)$ is also an eigenvalue of $\nabla^2f \left( \bfx\exp(j\phi) \right)$ and vice versa.
\end{proposition}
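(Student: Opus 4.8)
The plan is to exploit a global phase invariance of the objective and convert it, in the real coordinates $(\bfx_R,\bfx_I)$ on which $\nabla^2 f$ is defined in \eqref{eq_nls_real_hess}, into an orthogonal similarity of the Hessian. The starting observation is that the model term is unchanged by a common phase rotation of $\bfx$: replacing $\bfx$ by $\bfx\exp(j\phi)$ sends $\bfA\bfx \mapsto \bfA\bfx\,\exp(j\phi)$ and $\conj{\bfB\bfx} \mapsto \conj{\bfB\bfx}\,\exp(-j\phi)$, so that
\begin{equation*}
\big(\bfA\bfx\,\exp(j\phi)\big) \had \big(\conj{\bfB\bfx}\,\exp(-j\phi)\big) = (\bfA\bfx) \had (\conj{\bfB\bfx}).
\end{equation*}
Hence $f(\bfx\exp(j\phi)) = f(\bfx)$ for every $\phi \in \real$ and every $\bfx \in \cplx^n$; the objective in \eqref{eq_nls_real} is invariant under a global phase (this is also the source of the ``up to a constant phase angle'' ambiguity mentioned in Sec.~\ref{sec_nonlinear_algorithm}).

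The second step is to record this invariance in the stacked real coordinates. Under $\bfx \mapsto \bfx\exp(j\phi)$, the real vector $\bfu \isdef (\bfx_R,\bfx_I) \in \real^{2n}$ transforms linearly as $\bfu \mapsto \mathbf Q_\phi\,\bfu$, where
\begin{equation*}
\mathbf Q_\phi \isdef
\begin{pmatrix}
\cos\phi\,\bfI & -\sin\phi\,\bfI \\
\sin\phi\,\bfI & \cos\phi\,\bfI
\end{pmatrix} \in \real^{2n \times 2n}
\end{equation*}
is orthogonal, i.e.\ $\mathbf Q_\phi^T \mathbf Q_\phi = \bfI$. The invariance from the first step then reads $f(\mathbf Q_\phi\,\bfu) = f(\bfu)$ for all $\bfu$.

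Third, I would differentiate this identity twice. For the constant matrix $\mathbf Q_\phi$ and $g(\bfu)\isdef f(\mathbf Q_\phi\,\bfu)$, the chain rule gives $\nabla^2 g(\bfu) = \mathbf Q_\phi^T\, (\nabla^2 f)(\mathbf Q_\phi\,\bfu)\, \mathbf Q_\phi$. Combining this with $g = f$, and noting that $\mathbf Q_\phi\,\bfu$ is the real representation of $\bfx\exp(j\phi)$, yields
\begin{equation*}
\nabla^2 f(\bfx\exp(j\phi)) = \mathbf Q_\phi\, \nabla^2 f(\bfx)\, \mathbf Q_\phi^T = \mathbf Q_\phi\, \nabla^2 f(\bfx)\, \mathbf Q_\phi^{-1}.
\end{equation*}
The two Hessians are therefore related by an orthogonal similarity transformation and share the same spectrum, which is exactly the claim; the ``vice versa'' is automatic, either from invertibility of $\mathbf Q_\phi$ or from replacing $\phi$ by $-\phi$.

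I do not expect a serious obstacle here; the only point requiring care is the bookkeeping in the second step, namely verifying that complex multiplication by $\exp(j\phi)$ corresponds precisely to the block rotation $\mathbf Q_\phi$ and that $\mathbf Q_\phi$ is orthogonal, after which the third step is immediate. As an independent sanity check one could substitute $\mathbf Q_\phi$ into the explicit Hessian \eqref{eq_nls_real_hess} and confirm the relation directly, but the invariance argument bypasses that computation entirely.
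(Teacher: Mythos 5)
Your proof is correct, and it takes a genuinely different route from the paper's. The paper proves similarity by brute force: it computes how the blocks $\bfC$, $\bfD$, $\bfE$ of the explicit Hessian \eqref{eq_nls_real_hess} transform under $\bfx \mapsto \bfx\exp(j\phi)$ (namely $\bfC$ and $\bfE$ are invariant while $\bfD \mapsto \bfD\exp(j2\phi)$), then exhibits a concrete similarity matrix $\bfP$, verified with a block-matrix commutation lemma, and must split into two cases because its formula for $\bfP$ contains $\sqrt{2(1+\cos(2\phi))}$ in a denominator and degenerates at $\phi=(2k+1)\pi/2$. You instead observe the global phase invariance $f(\bfx\exp(j\phi))=f(\bfx)$, encode the phase rotation as the orthogonal map $\mathbf Q_\phi$ on $\real^{2n}$, and differentiate the identity $f(\mathbf Q_\phi\bfu)=f(\bfu)$ twice via the chain rule to get
\begin{equation}
    \nabla^2 f\left(\bfx\exp(j\phi)\right) = \mathbf Q_\phi\, \nabla^2 f(\bfx)\, \mathbf Q_\phi^T,
\end{equation}
which is valid since $f$ is a polynomial in $(\bfx_R,\bfx_I)$ and hence smooth. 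This buys several things: no case distinction (your $\mathbf Q_\phi$ is well defined and orthogonal for every $\phi$), no explicit Hessian computation or auxiliary lemma, a strictly stronger conclusion (\emph{orthogonal} similarity, which preserves symmetry and definiteness transparently, not just the spectrum), and an argument that generalizes verbatim to any smooth phase-invariant objective. It is also instructive that the two proofs are secretly the same transformation: applying half-angle identities to the paper's $\bfP$ in the non-trivial case gives $\sqrt{(1+\cos(2\phi))/2}=|\cos\phi|$ and $\sin(2\phi)/\sqrt{2(1+\cos(2\phi))}=\mathrm{sgn}(\cos\phi)\sin\phi$, so the paper's $\bfP$ coincides with $\pm\mathbf Q_\phi$; your derivation explains \emph{why} that matrix works, where the paper only verifies that it does. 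The interplay with the paper's computed transformation laws is also consistent: the complex-linear blocks $\tilde\bfC,\tilde\bfE$ commute with $\mathbf Q_\phi$, while conjugation by $\mathbf Q_\phi$ turns the conjugate-linear block $\tilde\bfD$ into $\mathbf Q_{2\phi}\tilde\bfD$, recovering the factor $\exp(j2\phi)$ the paper finds for $\bfD$.
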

\begin{proof}
See Appendix~\ref{app_prop}.
\end{proof}
Informally, this proposition implies that the definiteness of the Hessian is invariant under any rotation with a constant phase angle.

Now we state the main theorem for the general case.
\begin{theorem} \label{thm_general_case}
Properties of the critical points of $f(\bfx)$.
\begin{enumerate}[(a)]
    \item $\bfzero$ is a critical point: it is a local minimum if $\bfA^H\Diag(\bfb)\bfB + \bfB^H\Diag(\conj{\bfb})\bfA \prec \bfzero$, and a local maximum if $\bfA^H\Diag(\bfb)\bfB + \bfB^H\Diag(\conj{\bfb})\bfA \succ \bfzero$. \label{thm_general_case_zero}
    \item If there exists a \emph{nonzero} critical point, then $\bfA^H\Diag(\bfb)\bfB + \bfB^H\Diag(\conj{\bfb})\bfA \nprec \bfzero$. \label{thm_general_case_cond}
    \item Suppose there exists a \emph{nonzero} critical point $\bfz$. Then
    \begin{enumerate}[(1)]
        \item $\nabla^2f(\bfz)$ is rank deficient. \label{thm_general_case_rank}
        \item There exist an infinite number of critical points in the form of $\bfz\exp(j\phi)$, $\phi\in\real\setminus\{0\}$. Each has the same objective function value as $\bfz$, and its Hessian has the same definiteness. \label{thm_general_case_num}
    \end{enumerate}
\end{enumerate}
\end{theorem}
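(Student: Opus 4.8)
The plan rests on three structural properties of the Hadamard-bilinear map $\bfh(\bfx) \isdef (\bfA\bfx)\had(\conj{\bfB\bfx})$, which I would record first. Writing $\bfr(\bfx)\isdef\bfh(\bfx)-\bfb$ so that the gradient of \eqref{eq_nls_real_grad_d} reads $\bfd = \bfA^H(\bfr\had(\bfB\bfx)) + \bfB^H(\conj\bfr\had(\bfA\bfx))$, one checks directly that: (i) $\bfh$ is invariant under a constant phase rotation, $\bfh(\bfx\exp(j\phi))=\bfh(\bfx)$, because the factors $\exp(j\phi)$ and $\exp(-j\phi)$ cancel, whence $f(\bfx\exp(j\phi))=f(\bfx)$; (ii) the gradient is phase-equivariant, $\bfd(\bfx\exp(j\phi))=\exp(j\phi)\,\bfd(\bfx)$, since $\bfr$ is unchanged while each of $\bfA\bfx,\bfB\bfx$ carries one factor $\exp(j\phi)$; and (iii) $\bfh(t\bfx)=t^2\bfh(\bfx)$ for $t\in\real$. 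I would also record the Hermitian--real correspondence: abbreviating $\mathbf M \isdef \bfA^H\Diag(\bfb)\bfB + \bfB^H\Diag(\conj\bfb)\bfA$, note $\mathbf M^H=\mathbf M$, and for any Hermitian $\mathbf M$ the real symmetric block matrix $\left(\begin{smallmatrix}\Real(\mathbf M) & -\Imag(\mathbf M)\\ \Imag(\mathbf M) & \Real(\mathbf M)\end{smallmatrix}\right)$ has exactly the eigenvalues of $\mathbf M$, each with doubled multiplicity, hence the same definiteness. These facts do the heavy lifting.

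For part \eqref{thm_general_case_zero}, I would evaluate \eqref{eq_nls_real_grad_d} and \eqref{eq_nls_real_hess} at $\bfx=\bfzero$. Since the Hadamard products involving $\bfA\bfx$ or $\bfB\bfx$ all vanish there, we get $\bfd(\bfzero)=\bfzero$ (so $\bfzero$ is critical) and $\bfC(\bfzero)=\bfD(\bfzero)=\bfzero$, while $\bfE(\bfzero)=-\mathbf M$. Substituting into \eqref{eq_nls_real_hess} collapses the Hessian to $\left(\begin{smallmatrix}\Real(-\mathbf M) & -\Imag(-\mathbf M)\\ \Imag(-\mathbf M) & \Real(-\mathbf M)\end{smallmatrix}\right)$, which by the correspondence above has the same definiteness as the Hermitian matrix $-\mathbf M$. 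Hence $\nabla^2 f(\bfzero)\succ\bfzero\iff\mathbf M\prec\bfzero$ and $\nabla^2 f(\bfzero)\prec\bfzero\iff\mathbf M\succ\bfzero$, and the standard second-order sufficient condition yields the claimed local minimum and maximum.

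Part \eqref{thm_general_case_cond} I would obtain from an Euler-type identity along radial directions. Contracting the gradient with $\bfx$ and simplifying term by term gives $\Real\!\big(\bfd(\bfx)^H\bfx\big)=2\big(\|\bfh(\bfx)\|_2^2 - \Real(\bfb^H\bfh(\bfx))\big)$, which also follows from differentiating $f(t\bfx)$ in $t$ via (iii). A second short computation identifies $\Real(\bfb^H\bfh(\bfx))=\tfrac12\,\bfx^H\mathbf M\bfx$. At a nonzero critical point $\bfz$ we have $\bfd(\bfz)=\bfzero$, so the identity forces $\bfz^H\mathbf M\bfz=2\|\bfh(\bfz)\|_2^2\geq 0$; since $\bfz\neq\bfzero$, $\mathbf M$ cannot be negative definite, i.e.\ $\mathbf M\nprec\bfzero$.

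Part \eqref{thm_general_case_num} uses the orbit $\phi\mapsto\bfz\exp(j\phi)$: by (ii) each $\bfz\exp(j\phi)$ is critical, giving infinitely many; by (i) they all share the value $f(\bfz)$; and by Proposition~\ref{thm_eig_hess} their Hessians have identical spectra, hence identical definiteness. For the rank-deficiency claim, I would differentiate the identity $\nabla f(\bfz\exp(j\phi))\equiv\bfzero$ in $\phi$ at $\phi=0$; the chain rule gives $\nabla^2 f(\bfz)\,\bft=\bfzero$, where $\bft=(-\bfz_I;\bfz_R)\neq\bfzero$ is the stacked real representation of the tangent $j\bfz$, so $\nabla^2 f(\bfz)$ has a nontrivial kernel. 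The main obstacle is bookkeeping rather than conceptual: verifying the two real-part identities in part \eqref{thm_general_case_cond} and the block collapse in part \eqref{thm_general_case_zero} requires carefully tracking conjugates through the Hadamard products, the subtlety being that $\bfh$ is bilinear in $\bfx$ and $\conj\bfx$ rather than holomorphic; but once the phase-equivariance (ii) and homogeneity (iii) are in hand, every assertion reduces to one of these routine manipulations.
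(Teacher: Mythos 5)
Your proof is correct, and for parts (a), (b) and (c)(2) it follows essentially the paper's own argument: the paper likewise evaluates $\bfC(\bfzero)=\bfD(\bfzero)=\bfzero$, $\bfE(\bfzero)=-\bfA^H\Diag(\bfb)\bfB-\bfB^H\Diag(\conj{\bfb})\bfA$ and passes to the quadratic form $\bfx^H\bfE(\bfzero)\bfx$ via its Lemma~\ref{thm_mat_prod}, which is just the quadratic-form version of your eigenvalue-doubling correspondence for Hermitian matrices; your Euler-type contraction $\Real\left(\bfd(\bfz)^H\bfz\right)=2\left(\|\bfh(\bfz)\|_2^2-\Real(\bfb^H\bfh(\bfz))\right)$ is exactly the computation the paper performs by rearranging the stationarity equation and multiplying on the left by $\bfz^H$ (both identities check out); and your (c)(2) is verbatim the paper's route through the equivariance $\bfd(\bfx\exp(j\phi))=\exp(j\phi)\bfd(\bfx)$, phase invariance of $f$, and Proposition~\ref{thm_eig_hess}. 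The genuine divergence is part (c)(1). The paper exhibits the kernel vector directly: it takes $\bfx=\bfz_I-j\bfz_R$ (your tangent $j\bfz$ up to sign) and verifies $\bfC\bfx+\bfD\conj{\bfx}+\bfE\bfx=\bfzero$ by explicit Hadamard-product manipulations, using the ancillary identity $(\bfA\bfx)\had(\conj{\bfB\bfz})+(\bfA\bfz)\had(\conj{\bfB\bfx})=\bfzero$ together with the stationarity condition, and then appeals to Lemma~\ref{thm_mat_prod} to transfer this to the real Hessian. You instead differentiate the identity $\nabla f\left(\bfz\exp(j\phi)\right)\equiv\bfzero$ --- which holds along the whole orbit by your equivariance (ii) --- in $\phi$ at $\phi=0$, so the chain rule yields $\nabla^2 f(\bfz)\,\bft=\bfzero$ with $\bft=(-\bfz_I;\,\bfz_R)\neq\bfzero$. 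This is shorter, explains structurally why the kernel direction is precisely the tangent to the gauge orbit, and generalizes to any objective invariant under a smooth symmetry group; what the paper's computation buys in exchange is a self-contained verification at the level of matrix algebra, with the kernel vector checked against the explicit formulas for $\bfC$, $\bfD$, $\bfE$ rather than via a smoothness-plus-chain-rule argument (unproblematic here, since $f$ is a real polynomial in $(\bfx_R,\bfx_I)$, but worth stating if you write this up).
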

\begin{proof}
See Appendix~\ref{app_thm_general_case}.
\end{proof}
This theorem implies that if there exists one critical point, then there are an infinite number of them up to a constant phase angle, and each is as good. Furthermore, we conjecture that $\bfA^H\Diag(\bfb)\bfB + \bfB^H\Diag(\conj{\bfb})\bfA \succ \bfzero$ is a necessary and sufficient condition (cf.\ Thm.~\ref{thm_general_case}\eqref{thm_general_case_cond}), and each \emph{nonzero} critical point is also a local minimum under some mild conditions.

For the special case $n=1$, i.e., $\bfA,\bfB\in\cplx^m$, $\bfx\in\cplx$, we have a much stronger result.
\begin{theorem}[$n=1$] \label{thm_special_case}
Properties of the critical points of $f(\bfx)$.
\begin{enumerate}[(a)]
    \item $\bfzero$ is a critical point: it is a local minimum if $\Real\left( (\bfA\had\conj{\bfB})^H \bfb \right) < 0$, and a local maximum if $\Real\left( (\bfA\had\conj{\bfB})^H \bfb \right) > 0$. \label{thm_special_case_zero}
    \item There exists a \emph{nonzero} critical point if and only if $\Real\left( (\bfA\had\conj{\bfB})^H \bfb \right) > 0$. \label{thm_special_case_cond}
    \item Suppose there exists a \emph{nonzero} critical point $\bfz$. Then
    \begin{enumerate}[(1)]
        \item $\nabla^2f(\bfz)$ is positive semi-definite and rank deficient\footnote{Note that $\nabla^2f(\bfz) \in \real^{2 \times 2}$ by definition.}. \label{thm_special_case_rank}
        \item There exist an infinite number of critical points in the form of $\bfz\exp(j\phi)$, $\phi\in\real\setminus\{0\}$. Each has the same objective function value as $\bfz$, and its Hessian has the same definiteness.\label{thm_special_case_num}
        \item $\bfz$ is a local minimum. \label{thm_special_case_local}
    \end{enumerate}
\end{enumerate}
\end{theorem}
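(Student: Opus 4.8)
The plan is to exploit the collapse of the $n=1$ case to a single-variable problem. Here $\bfx$ is a scalar $x\in\cplx$, so $(\bfA\bfx)\had(\conj{\bfB\bfx}) = (\bfA\had\conj{\bfB})\,x\conj{x} = \bfc\,|x|^2$ with $\bfc\isdef\bfA\had\conj{\bfB}\in\cplx^m$. Consequently $f$ depends on $x$ only through the single real variable $t\isdef|x|^2\ge 0$, namely $f = \tfrac{1}{2}\|\bfc t - \bfb\|_2^2$, which is the upward convex quadratic
\[
\tilde f(t) = \tfrac{1}{2}\left( \|\bfc\|_2^2\, t^2 - 2\Real(\bfc^H\bfb)\, t + \|\bfb\|_2^2 \right).
\]
This reduction is the heart of the argument: once $f$ is recognized as a radially symmetric function of $|x|^2$ built on a parabola, all three parts follow with little effort. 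Throughout I abbreviate $\alpha\isdef\|\bfc\|_2^2 > 0$ and $\beta\isdef\Real(\bfc^H\bfb) = \Real\left((\bfA\had\conj{\bfB})^H\bfb\right)$.

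Next I would pass to real coordinates $x = u + jv$ with $t = u^2+v^2$ and differentiate. A short computation gives $\partial f/\partial u = 2u(\alpha t - \beta)$ and $\partial f/\partial v = 2v(\alpha t - \beta)$, so $\nabla f(x)=\bfzero$ holds exactly when either $x=0$ or $\alpha t = \beta$. This yields (b) at once: a nonzero critical point requires $t=\beta/\alpha$ with $t>0$, which is solvable if and only if $\beta>0$; conversely any $\beta>0$ produces the entire circle $\{x:|x|^2=\beta/\alpha\}$ of critical points. For (a), the Hessian at the origin is $\nabla^2 f(\bfzero) = -2\beta\,\bfI$, negative definite when $\beta>0$ (local maximum) and positive definite when $\beta<0$ (local minimum), matching the stated sign conditions on $\Real\left((\bfA\had\conj{\bfB})^H\bfb\right)$.

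For part (c) I would evaluate the Hessian at a nonzero critical point $\bfz$, where $\alpha t - \beta = 0$. The diagonal terms $2(\alpha t-\beta)$ vanish and the Hessian reduces to $4\alpha\begin{pmatrix}u^2 & uv\\ uv & v^2\end{pmatrix}$, a positive-semidefinite rank-one matrix (its single nonzero eigenvalue being $4\alpha t = 4\beta$), which is (c)(1). Part (c)(2) is then immediate: $f$ depends on $x$ only through $|x|^2$, which is invariant under $x\mapsto\bfz\exp(j\phi)$, so all these points share the objective value of $\bfz$, while the invariance of the Hessian's definiteness is exactly Proposition~\ref{thm_eig_hess}.

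The hard part will be (c)(3), because the second-order test is inconclusive: $\nabla^2 f(\bfz)$ is only positive \emph{semi}definite, carrying a zero eigenvalue along the tangential (phase) direction. Rather than resort to a higher-order expansion in that degenerate direction, I would close the argument globally through $\tilde f$. Since $\tilde f$ is an upward parabola in $t$ with unconstrained minimizer $t^\star=\beta/\alpha$, and $\beta>0$ places $t^\star$ inside the feasible range $t\ge 0$, the minimum of $f=\tilde f(|x|^2)$ over all of $\cplx$ is attained precisely on the circle $|x|^2=t^\star$ on which the nonzero critical points lie. Hence each such $\bfz$ is in fact a global minimizer of $f$, and a fortiori a local one, which establishes (c)(3).
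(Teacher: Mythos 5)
Your proof is correct, and it reaches the theorem by a genuinely different (and more structural) route than the paper. The paper never makes the radial reduction explicit: it specializes the general machinery of Theorem~\ref{thm_general_case} and Lemma~\ref{thm_mat_prod}, proving (b) by factoring the complex gradient $\bfd(\bfz) = 2\bfz\left( \|\bfA\had\conj{\bfB}\|_2^2 |\bfz|^2 - \Real\left( (\bfA\had\conj{\bfB})^H \bfb \right) \right)$, proving (c)(1) via a quadratic-form identity $\|\bfA\had\conj{\bfB}\|_2^2 (\conj{\bfx}\bfz + \bfx\conj{\bfz})^2 \geq 0$ (your explicit $2\times 2$ Hessian $4\alpha\bigl(\begin{smallmatrix} u^2 & uv \\ uv & v^2 \end{smallmatrix}\bigr)$ is exactly this form, and is in fact sharper, pinning the rank to one and the nonzero eigenvalue to $4\beta$), and proving (c)(3) by an exact expansion of $f(\bfz+\epsilon)$ culminating in $f(\bfz+\epsilon) - f(\bfz) = \frac{1}{2}\|\bfA\had\conj{\bfB}\|_2^2 \left(2\Real(\conj{\epsilon}\bfz) + |\epsilon|^2\right)^2 \geq 0$. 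Your observation that $f$ depends on $x$ only through $t \isdef |x|^2$, so that $f = \tilde f(t)$ with $\tilde f$ a strictly convex parabola minimized at $t^\star = \beta/\alpha$, subsumes all of these computations at once: the paper's perturbation identity is your parabola statement in disguise, since $2\Real(\conj{\epsilon}\bfz) + |\epsilon|^2 = |\bfz+\epsilon|^2 - |\bfz|^2$. What your route buys is twofold: every part follows from elementary one-variable calculus plus the chain rule (no need for the general Hessian blocks $\bfC$, $\bfD$, $\bfE$ except implicitly), and your global argument for (c)(3) yields a strictly stronger conclusion than the paper states, namely that every nonzero critical point is a \emph{global} minimizer, not merely a local one. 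What the paper's route buys is uniformity: its proofs of (a), (c)(1) rank deficiency, and (c)(2) are literal specializations of Theorem~\ref{thm_general_case}, which makes the $n=1$ theorem visibly consistent with the general-$n$ case, whereas your radial collapse is special to $n=1$. One small housekeeping point: your assertion $\alpha \isdef \|\bfA\had\conj{\bfB}\|_2^2 > 0$ should be flagged as an assumption; the paper makes exactly this standing hypothesis (``without loss of generality, assume that $\|\bfA\had\conj{\bfB}\| \neq 0$'') at the start of its proof, so you are aligned, but the degenerate case $\bfA\had\conj{\bfB} = \bfzero$ (where $f$ is constant and every point is critical) should be excluded explicitly.
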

\begin{proof}
See Appendix~\ref{app_thm_special_case}.
\end{proof}
As a result, a \emph{nonzero local minimum} exists if and only if $\Real\left( (\bfA\had\conj{\bfB})^H \bfb \right) > 0$. If this condition is satisfied, then there are infinitely many local minima that are exactly as good. Fig.~\ref{fig_nls_obj_557} shows as an example the negative logarithm of \eqref{eq_nls_real} with a circle of local maxima.

\begin{figure}[!tp]
\centering
\includegraphics[width=0.48\textwidth]{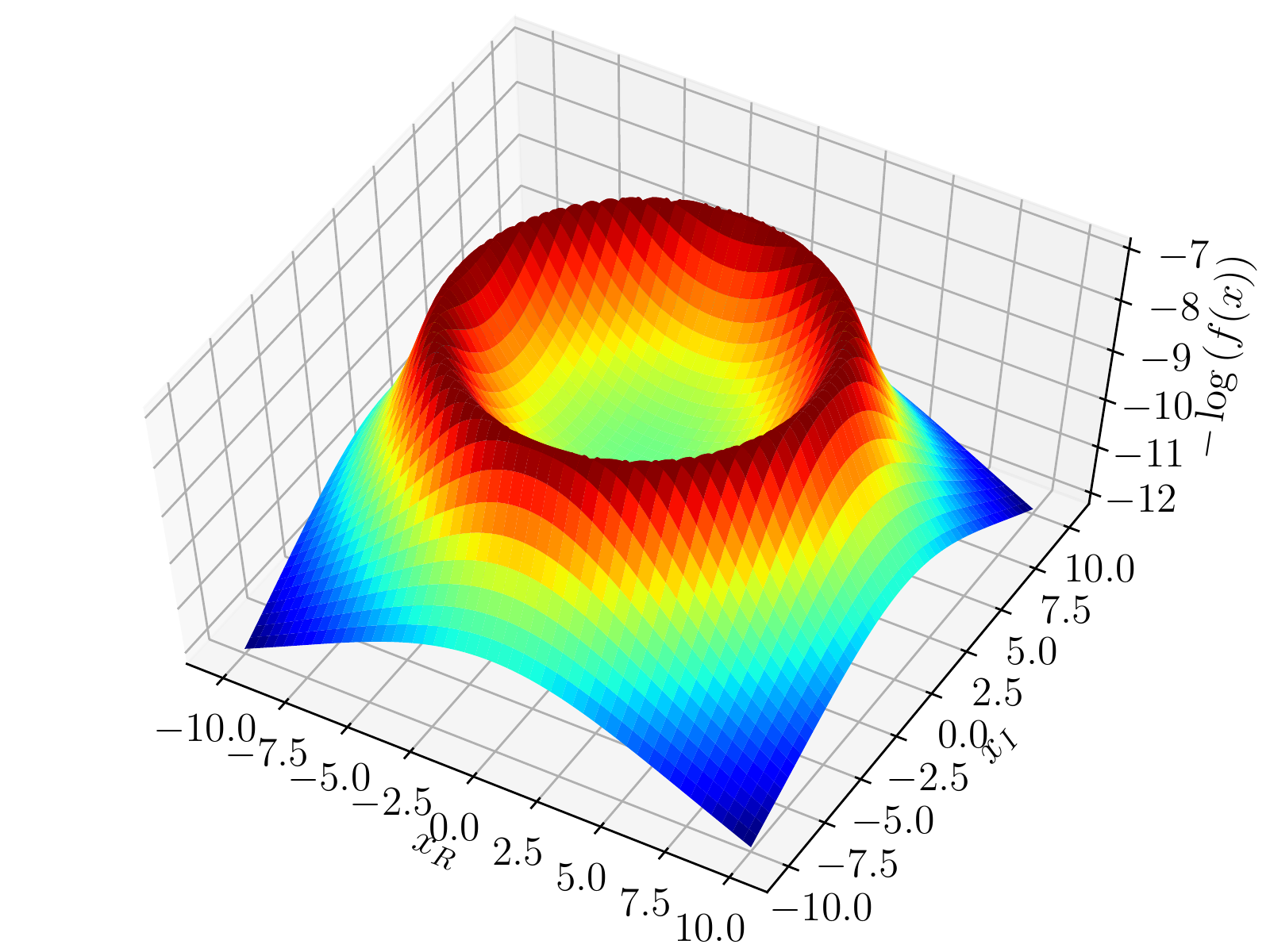} \\
\includegraphics[width=0.24\textwidth]{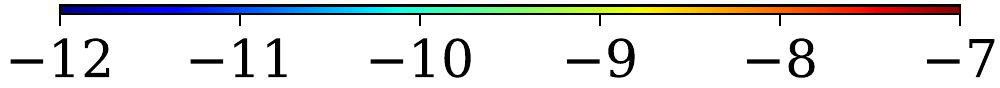}
\caption{Negative logarithm of the NLS objective function ($n=1$) with a circle of local maxima at the verge of the ``crater''.} \label{fig_nls_obj_557}
\end{figure}

Lastly, Thm.~\ref{thm_special_case} implies the following interesting result.
\begin{corollary}[$n=1$] \label{thm_special_case_crl}
Each \emph{nonzero local minimum} (if it exists) is given by
\begin{equation}
    \bfz = \frac{\Real\left( (\bfA\had\conj{\bfB})^H \bfb \right)^{1/2}}{\|\bfA\had\conj{\bfB}\|_2} \exp(j\phi),
\end{equation}
for some $\phi\in\real$.
\end{corollary}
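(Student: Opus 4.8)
The plan is to exploit the scalar nature of the $n=1$ case to collapse the objective \eqref{eq_nls_real} onto the single real variable $|\bfx|^2$, and then read off the minimizing magnitude directly. First I would observe that when $\bfx\in\cplx$ is a scalar and $\bfA,\bfB\in\cplx^m$, the product $(\bfA\bfx)\had(\conj{\bfB\bfx})$ equals $|\bfx|^2(\bfA\had\conj{\bfB})$, since the scalar factors combine as $\bfx\conj{\bfx}=|\bfx|^2$. Writing $\bfc\isdef\bfA\had\conj{\bfB}$ and $t\isdef|\bfx|^2\in\real_+$, the objective becomes $f=\tfrac12\|t\bfc-\bfb\|_2^2$, which depends on $\bfx$ only through $t$, and not at all through the phase of $\bfx$.

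Next I would treat $f$ as a function of $t$ alone. Expanding, $f(t)=\tfrac12\big(t^2\|\bfc\|_2^2-2t\,\Real(\bfc^H\bfb)+\|\bfb\|_2^2\big)$, a strictly convex parabola in $t$ (as $\bfc\neq\bfzero$), whose unconstrained vertex sits at $t^\star=\Real(\bfc^H\bfb)/\|\bfc\|_2^2$. A \emph{nonzero} local minimum corresponds precisely to a minimizer with $t>0$, which exists exactly when $t^\star>0$, i.e.\ when $\Real(\bfc^H\bfb)=\Real\big((\bfA\had\conj{\bfB})^H\bfb\big)>0$---the same existence condition already furnished by Thm.~\ref{thm_special_case}\eqref{thm_special_case_cond}. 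At that minimizer $|\bfx|^2=t^\star$, so $|\bfx|=\Real(\bfc^H\bfb)^{1/2}/\|\bfc\|_2$, and since $f$ is invariant under the phase of $\bfx$, every $\bfx$ of this magnitude is equally optimal. Substituting $\bfc=\bfA\had\conj{\bfB}$ yields the claimed expression $\bfz=\Real\big((\bfA\had\conj{\bfB})^H\bfb\big)^{1/2}\,\|\bfA\had\conj{\bfB}\|_2^{-1}\exp(j\phi)$.

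Equivalently, and more in keeping with the theorem's machinery, I could solve the KKT condition \eqref{eq_nls_kkt} directly: specializing \eqref{eq_nls_real_grad_d} to $n=1$ and using $(\bfA\bfx)\had(\conj{\bfB\bfx})=|\bfx|^2\bfc$ gives, after a short computation, $\bfd=2\bfx\big(|\bfx|^2\|\bfc\|_2^2-\Real(\bfc^H\bfb)\big)$; hence $\bfd=\bfzero$ with $\bfx\neq0$ forces $|\bfx|^2=\Real(\bfc^H\bfb)/\|\bfc\|_2^2$, recovering the same formula. Either route is essentially a routine calculation, so I do not anticipate a genuine obstacle. The only points requiring care are the bookkeeping of conjugates---in particular the elementwise identity $\conj{\bfc}=\conj{\bfA}\had\bfB$ that collapses the mixed terms of \eqref{eq_nls_real_grad_d} to $\bfc^H\bfb$ and $\bfb^H\bfc$---and checking that the sign condition singling out a positive vertex coincides with the existence criterion of Thm.~\ref{thm_special_case}\eqref{thm_special_case_cond}, so that the corollary's ``if it exists'' hypothesis is consistent with the derived magnitude.
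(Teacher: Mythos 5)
Your proposal is correct, and it actually contains two valid arguments, one of which coincides with the paper's. The paper proves this corollary simply by pointing back to its proof of Thm.~\ref{thm_special_case}\eqref{thm_special_case_cond}, where the gradient is specialized to $n=1$ as $\bfd(\bfz) = 2\bfz\left( \|\bfA\had\conj{\bfB}\|_2^2 |\bfz|^2 - \Real\left( (\bfA\had\conj{\bfB})^H \bfb \right) \right)$ and set to zero---this is exactly your second route, with the same conjugate bookkeeping collapsing the two mixed terms. Your primary route is genuinely different and arguably cleaner: observing that for scalar $\bfx$ the residual is $|\bfx|^2(\bfA\had\conj{\bfB})-\bfb$, so the objective factors through $t=|\bfx|^2$ as a strictly convex parabola (using the paper's standing assumption $\|\bfA\had\conj{\bfB}\|_2\neq 0$), whose vertex $t^\star=\Real\left((\bfA\had\conj{\bfB})^H\bfb\right)/\|\bfA\had\conj{\bfB}\|_2^2$ immediately yields the claimed magnitude. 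This reduction buys more than the paper's first-order computation: it shows at once that every nonzero critical point is in fact a \emph{global} minimum, which subsumes the separate perturbation argument the paper needs for Thm.~\ref{thm_special_case}\eqref{thm_special_case_local}, and it makes the phase invariance (Thm.~\ref{thm_special_case}\eqref{thm_special_case_num}) transparent since $f$ depends on $\bfx$ only through $|\bfx|^2$. The one point worth making explicit in your first route is that a nonzero local minimum of $f$ over $\cplx$ forces $t=|\bfz|^2$ to be a local (hence, by strict convexity, the unique) minimum of the parabola, which holds because $\bfx\mapsto|\bfx|^2$ maps any neighborhood of $\bfz\neq 0$ onto a neighborhood of $|\bfz|^2$; with that remark your argument is complete.
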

\begin{proof}
See the proof of Thm.~\ref{thm_special_case}\eqref{thm_special_case_cond}.
\end{proof}

Now we return to our problem in SAR tomography. For the single-look multi-master data model \eqref{eq_mm_mat}, this corollary motivates the 1-D spectral estimator:
\begin{equation} \label{eq_nls_1d}
    |\hat\gamma_l| \isdef
    \begin{cases}
    \frac{\Real\left( (\bfr_l\had\conj{\bfs_l})^H \bfg \right)^{1/2}}{\|\bfr_l\had\conj{\bfs_l}\|_2} & \text{if } \Real\left( (\bfr_l\had\conj{\bfs_l})^H \bfg \right) > 0 \\
    0 & \text{otherwise},
    \end{cases}
\end{equation}
$\forall l \in [L]$. Note that this also provides the solution for any 1-D NLS subproblem up to a constant phase angle. In the case of multiple scatterers, this estimator does not have any super-resolution power.

\section{Bi-Convex Relaxation and Alternating Minimization (BiCRAM)} \label{sec_biconvex}

This section introduces a second algorithm for solving the nonconvex sparse recovery problem \eqref{eq_mm_sr}.

\subsection{Algorithm} \label{sec_biconvex_algorithm}

As a starting point, we replace the constraint in \eqref{eq_mm_sr} with a sparsity-inducing regularization term, e.g.,
\begin{equation} \label{eq_l1}
    \minimize_{\bfgamma} \frac{1}{2} \|(\bfR\bfgamma) \had (\conj{\bfS\bfgamma}) - \bfg\|_2^2 + \lambda \|\bfgamma\|_1,
\end{equation}
where $\lambda\in\real_{++}$ trades model goodness of fit for sparsity. In light of \eqref{eq_nls_kkt}, the necessary condition for being an optimal point $\bfgamma^\star$ is
\begin{equation}
\begin{aligned}
     \lambda\partial\|\bfgamma^\star\|_1 \ni {}& \bfR^H\left( ( \bfg - (\bfR\bfgamma^\star) \had (\conj{\bfS\bfgamma^\star}) ) \had (\bfS\bfgamma^\star) \right) +{} \\
    & \bfS^H\left( ( \conj{\bfg} - (\conj{\bfR\bfgamma^\star}) \had (\bfS\bfgamma^\star) ) \had (\bfR\bfgamma^\star) \right),
\end{aligned}
\end{equation}
i.e., the right-hand side is a subgradient of the $\ell_1$ norm at $\bfgamma^\star$. Obviously, $\bfzero$ always satisfies this condition.

In principle, an ADMM-based algorithm similar to Alg.~\ref{alg_nls_admm} can be used to solve \eqref{eq_l1}. However, our experience with real SAR tomographic data shows that it often diverges, presumably due to the high mutual coherence of $\bfR$ and $\bfS$ under nonconvexity. For this reason, we consider instead the following relaxed version of \eqref{eq_l1}:
\begin{equation} \label{eq_l1_relax}
\begin{aligned}
    & \minimize_{\bfgamma,\bftheta} && \frac{1}{2} \|(\bfR\bfgamma) \had (\conj{\bfS\bftheta}) - \bfg\|_2^2 + \frac{\lambda_1}{2} \|\bfgamma-\bftheta\|_2^2 +{} \\
    & && \lambda_2 \|
    \begin{pmatrix}
    \bfgamma & \bftheta
    \end{pmatrix}
    \|_{1,2},
\end{aligned}
\end{equation}
where $\lambda_1,\lambda_2\in\real_{++}$. The objective function $\cplx^L \times \cplx^L \to \real$ is bi-convex, i.e., it is convex in $\bfgamma$ with $\bftheta$ fixed, and convex in $\bftheta$ with $\bfgamma$ fixed. The first regularization term enforces $\bfgamma$ and $\bftheta$ to have similar entries, and the second one promotes the same support. Since \eqref{eq_l1_relax} is essentially an unconstrained bi-convex problem, it can be solved by using alternating minimization via Alg.~\ref{alg_bicram} (see also \cite{hong2018augmented, hong2019cospace, hong2019learnable}).
\begin{algorithm}[H]
\caption{An alternating algorithm for solving \eqref{eq_l1_relax}} \label{alg_bicram}
\begin{algorithmic}[1]
	\State \textbf{Input:} $\bfR$, $\bfS$, $\bfg$, $\bfgamma^{(0)}$, $\lambda_1$, $\lambda_2$
	\State \textbf{Initialize} \Let{\bfgamma}{\bfgamma^{(0)}}
	\State \textbf{Until} stopping criterion is satisfied, \textbf{Do}
	\State \quad \Let{\tilde\bfS}{ \Diag(\conj{\bfR \bfgamma}) \bfS }
	\State \quad \Let{\bftheta}{ \argmin_{\bftheta} \frac{1}{2} \|\tilde\bfS \bftheta - \conj{\bfg}\|_2^2 + \frac{\lambda_1}{2} \|\bftheta - \bfgamma\|_2^2 + \lambda_2 \| \begin{pmatrix} \bftheta & \bfgamma \end{pmatrix} \|_{1,2} }
	\State \quad \Let{\tilde\bfR}{ \Diag(\conj{\bfS \bftheta}) \bfR }
	\State \quad \Let{\bfgamma}{ \argmin_{\bfgamma} \frac{1}{2} \|\tilde\bfR \bfgamma - \bfg\|_2^2 + \frac{\lambda_1}{2} \|\bfgamma - \bftheta\|_2^2 + \lambda_2 \| \begin{pmatrix} \bfgamma & \bftheta \end{pmatrix} \|_{1,2} }
	\State \textbf{Output:} \(\bfgamma\)
\end{algorithmic}
\end{algorithm}
Each time when either $\bfgamma$ or $\bftheta$ is fixed, it becomes a convex problem in the generic form of:
\begin{equation} \label{eq_bicram}
    \minimize_{\bfx} \frac{1}{2} \|\bfA\bfx - \bfb\|_2^2 + \frac{\lambda_1}{2} \|\bfx - \bfu\|_2^2 + \lambda_2 \| \begin{pmatrix} \bfx & \bfu \end{pmatrix} \|_{1,2},
\end{equation}
or equivalently
\begin{equation} \label{eq_bicram_admm}
\begin{aligned}
    & \minimize_{\bfx,\bfZ} && \frac{1}{2} \|\bfA\bfx - \bfb\|_2^2 + \frac{\lambda_1}{2} \|\bfx - \bfu\|_2^2 + \lambda_2 \| \bfZ \|_{1,2} \\
    & \st && \begin{pmatrix} \bfx & \bfu \end{pmatrix} - \bfZ = \bfzero,
\end{aligned}
\end{equation}
where $\bfZ \in \cplx^{L \times 2}$. Applying the ADMM update rules leads to Alg.~\ref{alg_bicram_admm}.
\begin{algorithm}[H]
\caption{An ADMM-based algorithm for solving \eqref{eq_bicram}} \label{alg_bicram_admm}
\begin{algorithmic}[1]
	\State \textbf{Input:} $\bfA$, $\bfb$, $\bfu$, $\bfZ^{(0)}$, $\lambda_1$, $\lambda_2$, $\rho$
	\State \textbf{Initialize} \Let{\bfZ}{\bfZ^{(0)}}
	\State \textbf{Until} stopping criterion is satisfied, \textbf{Do}
	\State \quad \Let{\bfx}{ ( \bfA^H \bfA + (\lambda_1+\rho)\bfI )^{-1} ( \bfA^H\bfb + \lambda_1\bfu + \rho\bfz_1 - \bfy_1 ) }
	\State \quad \Let{\bfZ}{ \Prox_{\ell_{1,2}, \lambda_2/\rho} \left( \begin{pmatrix} \bfx & \bfu \end{pmatrix} + (1/\rho) \bfY \right) }
	\State \quad \Let{\bfY}{ \bfY + \rho \left( \begin{pmatrix} \bfx & \bfu \end{pmatrix} - \bfZ \right) }
	\State \textbf{Output:} \(\bfz_1\)
\end{algorithmic}
\end{algorithm}
$\Prox_{\ell_{1,2}, \lambda} : \cplx^{L \times 2} \to \cplx^{L \times 2}$ is the proximal operator of the $\ell_{1,2}$ norm scaled by $\lambda$ (e.g., \cite{parikh2014proximal}), i.e.,
% \begin{equation}
%     \minimize_{\bfZ} \lambda_2 \|\bfZ\|_{1,2} + \frac{\rho}{2} \| \begin{pmatrix} \bfx & \bfu \end{pmatrix} + (1/\rho) \bfY - \bfZ \|_2^2,
% \end{equation}
\begin{equation}
    \Prox_{\ell_{1,2}, \lambda}(\bfX) \isdef \argmin_{\bfZ} \, \lambda \|\bfZ\|_{1,2} + \frac{1}{2} \| \bfX - \bfZ \|_F^2,
\end{equation}
whose $i$-th row is given by \cite[\S6.5.4]{parikh2014proximal}
\begin{equation}
    \Prox_{\ell_{1,2}, \lambda}(\bfX)^i = (1 - \lambda/\|\bfx^i\|_2)_+ \, \bfx^i,
\end{equation}
where $(x)_+ \isdef \max(x, 0)$. This proximal operator promotes (the columns of) $\bfZ$ to be jointly sparse and therefore $\bfx$ to share the same support with $\bfu$.

Due to the nonconvexity of \eqref{eq_l1_relax}, it is very difficult to establish a convergence guarantee for Alg.~\ref{alg_bicram} from a theoretical point of view. However, our experiments with real SAR tomographic data (see Sec.~\ref{sec_experiment}) show that it converges empirically. As an example, Fig.~\ref{fig_bicram_obj_history_230} depicts a convergence curve in the case of two scatterers that are closely located (\#6 in Sec.~\ref{sec_experiment_experimental}).

\begin{figure}[!tp]
\centering
\includegraphics[width=0.48\textwidth]{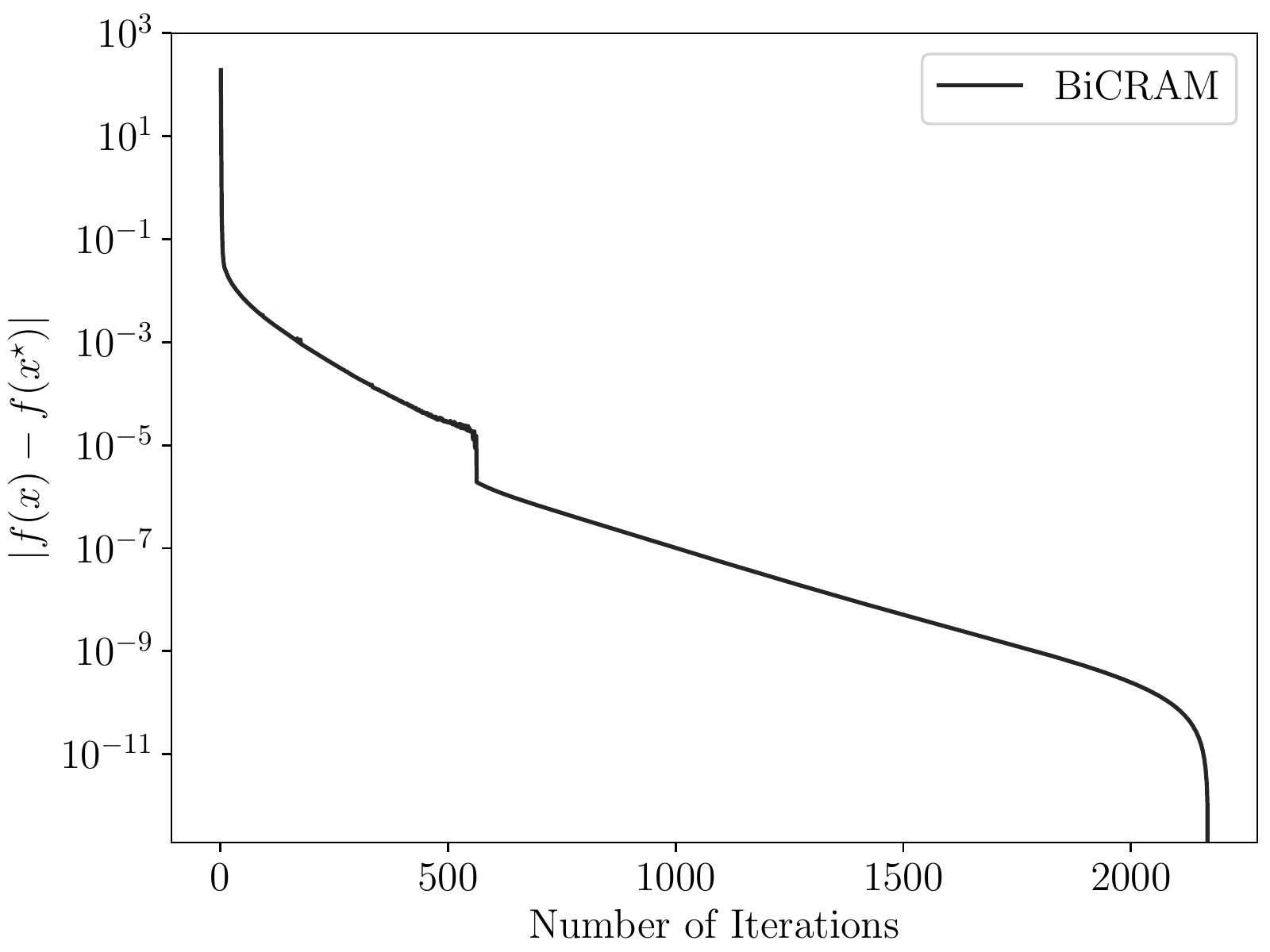}
\caption{Convergence curve of BiCRAM. The horizontal axis refers to the outer iterations in Alg.~\ref{alg_bicram}.} \label{fig_bicram_obj_history_230}
\end{figure}

In terms of regularization parameter tuning, we adopt the approach of sampling the solution path $(\lambda_1,\lambda_2) \mapsto \bfx$, and selecting the solution with the highest penalized likelihood \eqref{eq_mm_ms}. Last but not least, this procedure can be simplified by performing 1-D search, i.e., fixing one parameter and tuning the other at a time.

\subsection{Implementation} \label{sec_biconvex_implementation}

This subsection addresses several implementation aspects that contribute to accelerating Alg.~\ref{alg_bicram_admm} (and therefore Alg.~\ref{alg_bicram}). The exposition is based on an ADMM-based algorithm for solving the $\ell_1$-regularized least squares (L1RLS) problem:
\begin{equation} \label{eq_lasso}
    \minimize_{\bfx} \frac{1}{2} \|\bfA\bfx - \bfb\|_2^2 + \lambda \| \bfx \|_1.
\end{equation}
This is more suitable for demonstrating the power of different acceleration techniques, since each of its subproblems has an analytical solution and does not involve iteratively solving another optimization problem (cf.\ Alg.~\ref{alg_bicram}). Besides, it will also be used as a reference in Sec.~\ref{sec_experiment}.

Now consider \eqref{eq_lasso} in its equivalent form:
\begin{equation} \label{eq_lasso_admm}
\begin{aligned}
    & \minimize_{\bfx,\bfz} && \frac{1}{2} \|\bfA\bfx - \bfb\|_2^2 + \lambda \| \bfz \|_1 \\
    & \st && \bfx - \bfz = \bfzero.
\end{aligned}
\end{equation}
Applying the ADMM update rules leads to Alg.~\ref{alg_lasso_admm}.
\begin{algorithm}[H]
\caption{An ADMM-based algorithm for solving \eqref{eq_lasso}} \label{alg_lasso_admm}
\begin{algorithmic}[1]
	\State \textbf{Input:} $\bfA$, $\bfb$, $\bfz^{(0)}$, $\lambda$, $\rho$
	\State \textbf{Initialize} \Let{\bfz}{\bfz^{(0)}}
	\State \textbf{Until} stopping criterion is satisfied, \textbf{Do}
	\State \quad \Let{\bfx}{ ( \bfA^H \bfA + \rho\bfI )^{-1} ( \bfA^H\bfb + \rho\bfz - \bfy ) }
	\State \quad \Let{\bfz}{ \Prox_{\ell_1, \lambda/\rho} \left( \bfx + (1/\rho) \bfy \right) }
	\State \quad \Let{\bfy}{ \bfy + \rho \left( \bfx - \bfz \right) }
	\State \textbf{Output:} \(\bfz\)
\end{algorithmic}
\end{algorithm}
Likewise, $\Prox_{\ell_1, \lambda} : \cplx^L \to \cplx^L$ is the proximal operator of the $\ell_1$ norm scaled by $\lambda$ (also known as the soft thresholding operator \cite{donoho1995noising}):
\begin{equation}
    \Prox_{\ell_1, \lambda}(\bfx) \isdef \argmin_{\bfz} \, \lambda \|\bfz\|_1 + \frac{1}{2} \| \bfx - \bfz \|_2^2,
\end{equation}
whose $i$-th entry is given by \cite[\S6.5.2]{parikh2014proximal}
\begin{equation}
    \Prox_{\ell_1, \lambda}(\bfx)_i = (1 - \lambda/|x_i|)_+ \, x_i.
\end{equation}

The first technique provides an easier way for the $\bfx$-update.

\subsubsection{Matrix inversion lemma}
In Alg.~\ref{alg_bicram_admm} and \ref{alg_lasso_admm}, an $L$-by-$L$ matrix needs to be inverted. For large $L$, a direct exact approach can be tedious. Instead, we exploit the following lemma.
\begin{lemma}[Matrix inversion lemma \cite{boyd2004convex}] \label{thm_mat_inv}
For any $\bfA\in\cplx^{n \times m}$, $\bfB\in\cplx^{m \times n}$ and nonsingular $\bfC\in\cplx^{n \times n}$, we have
\begin{equation}
    (\bfA\bfB+\bfC)^{-1} = \bfC^{-1} - \bfC^{-1} \bfA (\bfI + \bfB \bfC^{-1} \bfA)^{-1} \bfB \bfC^{-1}.
\end{equation}
\end{lemma}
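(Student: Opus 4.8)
The plan is to prove the identity by \emph{direct verification}: rather than deriving the inverse from scratch, I would posit the right-hand side as a candidate inverse and simply check that it annihilates $\bfA\bfB+\bfC$ to the identity. Write $\bfX \isdef \bfC^{-1} - \bfC^{-1}\bfA(\bfI + \bfB\bfC^{-1}\bfA)^{-1}\bfB\bfC^{-1}$ for the claimed inverse. First I would dispose of well-definedness: the expression makes sense precisely when both $\bfC$ and the factor $\bfI + \bfB\bfC^{-1}\bfA$ are nonsingular. The former is an explicit hypothesis; the latter I would either add as a standing assumption or justify via the determinant identity $\det(\bfA\bfB+\bfC) = \det(\bfC)\det(\bfI + \bfB\bfC^{-1}\bfA)$ (which follows from $\bfA\bfB+\bfC = \bfC(\bfI+\bfC^{-1}\bfA\bfB)$ together with $\det(\bfI+\bfC^{-1}\bfA\bfB)=\det(\bfI+\bfB\bfC^{-1}\bfA)$), so that nonsingularity of $\bfA\bfB+\bfC$ is equivalent to that of $\bfI+\bfB\bfC^{-1}\bfA$.

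Next I would compute $(\bfA\bfB+\bfC)\bfX$ piece by piece. The first term gives $(\bfA\bfB+\bfC)\bfC^{-1} = \bfA\bfB\bfC^{-1} + \bfI$. For the second term, the crux of the argument is to left-factor $\bfA$ out of $(\bfA\bfB+\bfC)\bfC^{-1}\bfA = \bfA\bfB\bfC^{-1}\bfA + \bfA = \bfA(\bfB\bfC^{-1}\bfA + \bfI) = \bfA(\bfI + \bfB\bfC^{-1}\bfA)$. Inserting the remaining factors, the matrix $\bfI+\bfB\bfC^{-1}\bfA$ and its inverse cancel, leaving $\bfA(\bfI+\bfB\bfC^{-1}\bfA)(\bfI+\bfB\bfC^{-1}\bfA)^{-1}\bfB\bfC^{-1} = \bfA\bfB\bfC^{-1}$.

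Subtracting the two pieces yields $(\bfA\bfB+\bfC)\bfX = (\bfA\bfB\bfC^{-1} + \bfI) - \bfA\bfB\bfC^{-1} = \bfI$, so $\bfX$ is a right inverse of $\bfA\bfB+\bfC$. Since $\bfA\bfB+\bfC$ is square (both sides are $n \times n$, noting that the inner identity inside the inverse is the $m \times m$ one), a right inverse is automatically two-sided, which completes the proof; alternatively, an entirely symmetric computation of $\bfX(\bfA\bfB+\bfC)$ would confirm this directly for anyone who prefers not to invoke squareness. There is no genuine obstacle here, as the result is routine linear algebra, but the one point requiring care is the well-definedness caveat above: the stated hypothesis that $\bfC$ is nonsingular does not by itself guarantee invertibility of $\bfI+\bfB\bfC^{-1}\bfA$, so this should be made explicit (or folded into the assumption that $\bfA\bfB+\bfC$ is nonsingular) for the formula to be meaningful.
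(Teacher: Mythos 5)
Your proof is correct, and the cancellation computation $(\bfA\bfB+\bfC)\bfX=\bfI$ checks out line by line. Note, however, that the paper itself offers no proof of this lemma: it is imported verbatim from the cited reference (Boyd \& Vandenberghe), so there is no in-paper argument to compare against; your direct verification is the standard one. Your well-definedness caveat is a genuine and worthwhile observation: as stated, nonsingularity of $\bfC$ alone does not guarantee that $\bfI+\bfB\bfC^{-1}\bfA$ is invertible (e.g.\ $\bfC=\bfI$ and $\bfA\bfB=-\bfI$ gives $\bfA\bfB+\bfC=\bfzero$), and your determinant identity $\det(\bfA\bfB+\bfC)=\det(\bfC)\det(\bfI+\bfB\bfC^{-1}\bfA)$ is exactly the right way to see that the missing hypothesis is equivalent to nonsingularity of $\bfA\bfB+\bfC$ itself --- which is harmless in the paper's application, since there the matrix in question is $\bfA^H\bfA+\rho\bfI \succ \bfzero$ (so $\bfI+\frac{1}{\rho}\bfA\bfA^H$ is automatically invertible). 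Your appeal to squareness to upgrade the right inverse to a two-sided one is also sound.
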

Lemma~\ref{thm_mat_inv} suggests a more efficient method if inverting $\bfC$ is straightforward. This is the case for matrices in the form of $\bfA^H\bfA+\rho\bfI$ since
\begin{equation} \label{eq_admm_mat_inv}
    (\bfA^H\bfA+\rho\bfI)^{-1} = \frac{1}{\rho}\bfI -\frac{1}{\rho^2} \bfA^H \left( \bfI + \frac{1}{\rho} \bfA \bfA^H \right)^{-1} \bfA,
\end{equation}
% \begin{equation}
%     (\bfA^H\bfA+\rho\bfI)^{-1} = (1/\rho)\bfI -(1/\rho^2) \bfA^H (\bfI + (1/\rho) \bfA \bfA^H)^{-1} \bfA.
% \end{equation}
i.e., instead of the original $L$-by-$L$ matrix, only an $N'$-by-$N'$ matrix needs to be inverted.

Alternatively, the linear least squares (sub)problems can be solved iteratively in order to deliver an approximate solution \cite{fornasier2016conjugate}, which is known as inexact minimization \cite[\S3.4.4]{boyd2011distributed}.

The following techniques can be employed to improve convergence.

\subsubsection{Varying penalty parameter}
The penalty parameter $\rho$ can be updated at each iteration. Besides the convergence aspect, this also renders Alg.~\ref{alg_lasso_admm} less dependent on the initial choice of $\rho$. A common heuristic \cite[\S3.4.1]{boyd2011distributed} is to set
\begin{equation} \label{eq_admm_var_rho}
    \rho^{(k+1)} \isdef
    \begin{cases}
    \tau\rho^{(k)} & \text{if } \|\bfr^{(k)}\|_2 > \mu\|\bfs^{(k)}\|_2 \\
    \rho^{(k)}/\tau & \text{if } \|\bfs^{(k)}\|_2 > \mu\|\bfr^{(k)}\|_2 \\
    \rho^{(k)} & \text{otherwise}
    \end{cases}
\end{equation}
at the $(k+1)$th iteration, where $\tau,\mu>1$ are parameters, $\bfr^{(k)} \isdef \bfx^{(k)} - \bfz^{(k)}$ is the primal residual, and $\bfs^{(k)} \isdef \rho^{(k)} (\bfz^{(k)} - \bfz^{(k-1)})$ is the dual residual. As $k \to \infty$, $\bfr^{(k)}$ and $\bfs^{(k)}$ both converge to $\bfzero$. Intuitively, increasing $\rho$ tends to put a larger penalty on the augmenting term $(\rho/2) \|\bfx-\bfz\|_2^2$ in the augmented Lagrangian and consequently decrease $\|\bfr^{(k)}\|_2$ on the one hand, and to increase $\|\bfs^{(k)}\|_2$ by definition on the other and vice versa. The rationale is to balance $\bfr^{(k)}$ and $\bfs^{(k)}$ so that they are approximately of the same order. Naturally, one downside is that \eqref{eq_admm_mat_inv} needs to be recomputed whenever $\rho$ changes.

\subsubsection{Diagonal preconditioning}
The augmenting term $(\rho/2) \|\bfx-\bfz\|_2^2$ in the augmented Lagrangian can be replaced by
\begin{equation}
    (1/2) \langle \bfP(\bfx-\bfz), \bfx-\bfz \rangle,
\end{equation}
where $\bfP\succ\bfzero$ is a real diagonal matrix. Note that this falls under the category of more general augmenting terms \cite[\S3.4.2]{boyd2011distributed}. By means of this, Alg.~\ref{alg_lasso_admm} is deprived of the burden of choosing $\rho$ and the ADMM updates become
\begin{equation}
\begin{aligned}
    \bfx & \leftarrow ( \bfA^H \bfA + \bfP )^{-1} ( \bfA^H\bfb + \bfP\bfz - \bfy ) \\
    \bfz & \leftarrow \Prox_{\ell_1, \lambda/\bfp} \left( \bfx + \bfP^{-1} \bfy \right) \\
    \bfy & \leftarrow \bfy + \bfP \left( \bfx - \bfz \right),
\end{aligned}
\end{equation}
where $\bfp\isdef\Diag(\bfP)$, and $\Prox_{\ell_1, \bfw} : \cplx^L \to \cplx^L$ is the proximal operator of the weighted $\ell_1$ norm with weights $\bfw \in \real_{++}^L$:
\begin{equation}
    \Prox_{\ell_1, \bfw}(\bfx) \isdef \argmin_{\bfz} \, \|\bfz\|_{\bfw,1} + \frac{1}{2} \| \bfx - \bfz \|_2^2,
\end{equation}
whose $i$-th entry is given by
\begin{equation}
    \Prox_{\ell_1, \bfw}(\bfx)_i = (1 - w_i/|x_i|)_+ \, x_i.
\end{equation}
In case $\bfA^H\bfA$ is ill-conditioned (such as in SAR tomography), $\bfP$ can be interpreted as a preconditioner. Needless to say, Lemma~\ref{thm_mat_inv} can be also applied to invert $\bfA^H\bfA+\bfP$.

Pock and Chambolle (2011) proposed a simple and elegant way to construct diagonal preconditioners for a primal-dual algorithm \cite{chambolle2011first} \cite[\S15.2]{foucart2017mathematical} with guaranteed convergence:
% \begin{equation}
%     p_i \isdef \frac{1}{\| \bfa_i \|_\alpha^\alpha}, \quad \forall l\in[L],
% \end{equation}
\begin{equation}
    p_i \isdef 1 / \| \bfa_i \|_\alpha^\alpha, \quad \forall i \in [L],
\end{equation}
where $\alpha\in[0,2]$ is a parameter.

\subsubsection{Over-relaxation}
This means inserting between the $\bfx$- and $\bfz$-updates of Alg.~\ref{alg_lasso_admm} the following additional update:
\begin{equation}
    \bfx \leftarrow \beta\bfx  + (1-\beta)\bfz,
\end{equation}
where $\beta\in[1.5,1.8]$ (see for example \cite[\S3.4.3]{boyd2011distributed} and the references therein).

Fig.~\ref{fig_lasso_obj_history_admm_230} shows the convergence curve of Alg.~\ref{alg_lasso_admm} using different acceleration techniques, as applied to real SAR tomographic data (\#6 in Sec.~\ref{sec_experiment}). Each technique did contribute to accelerating Alg.~\ref{alg_lasso_admm} in comparison to ``baseline'', where we set $\rho=1$. The number of iterations of this and five other cases are listed in Tab.~\ref{tab_iter_alg_lasso}. Obviously, the combination of diagonal preconditioning and over-relaxation was the most competitive one and will therefore be adopted for all the ADMM-based algorithms in the following.

\begin{figure}[!tp]
\centering
\includegraphics[width=0.48\textwidth]{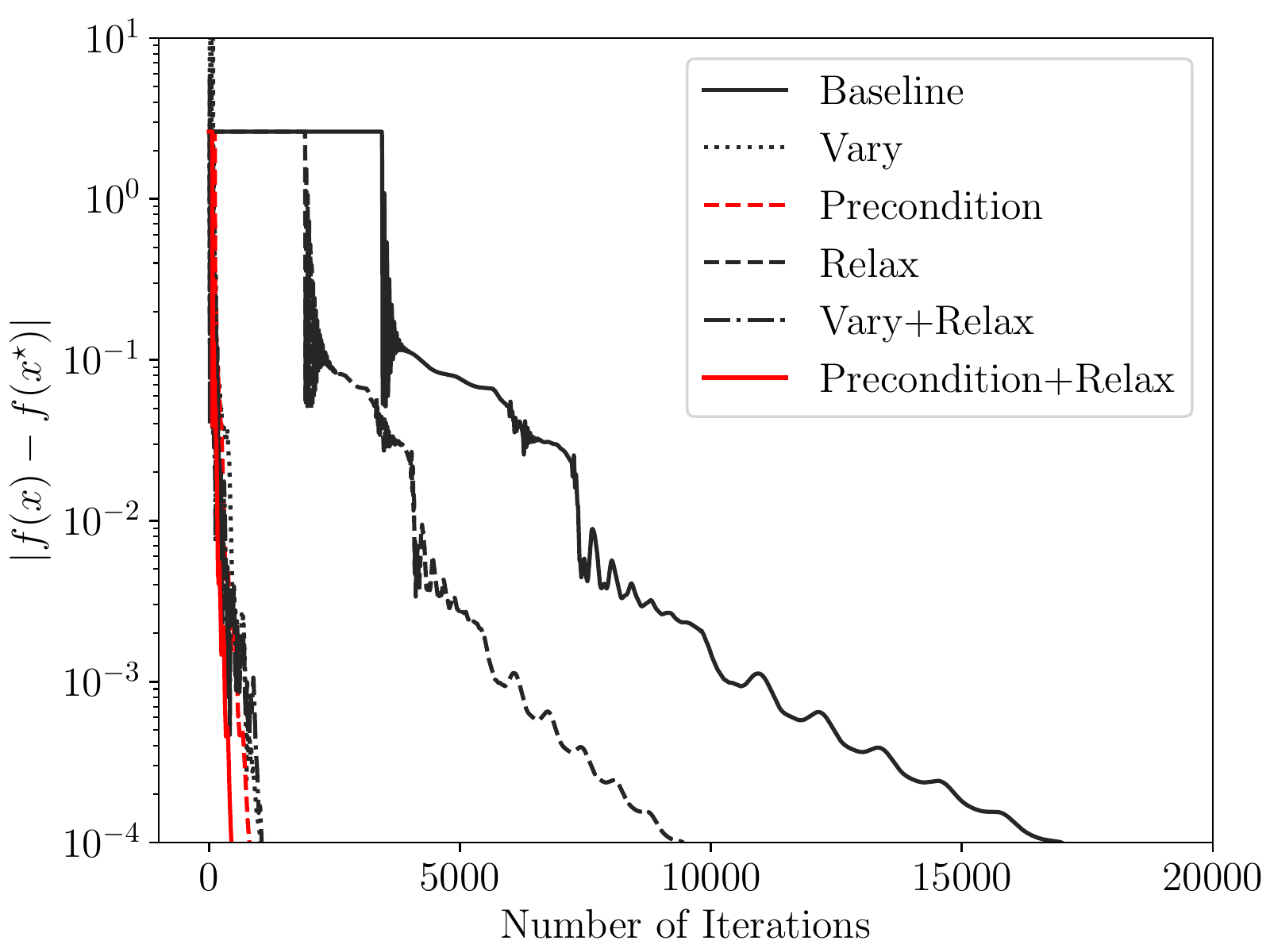} \\
\includegraphics[width=0.48\textwidth]{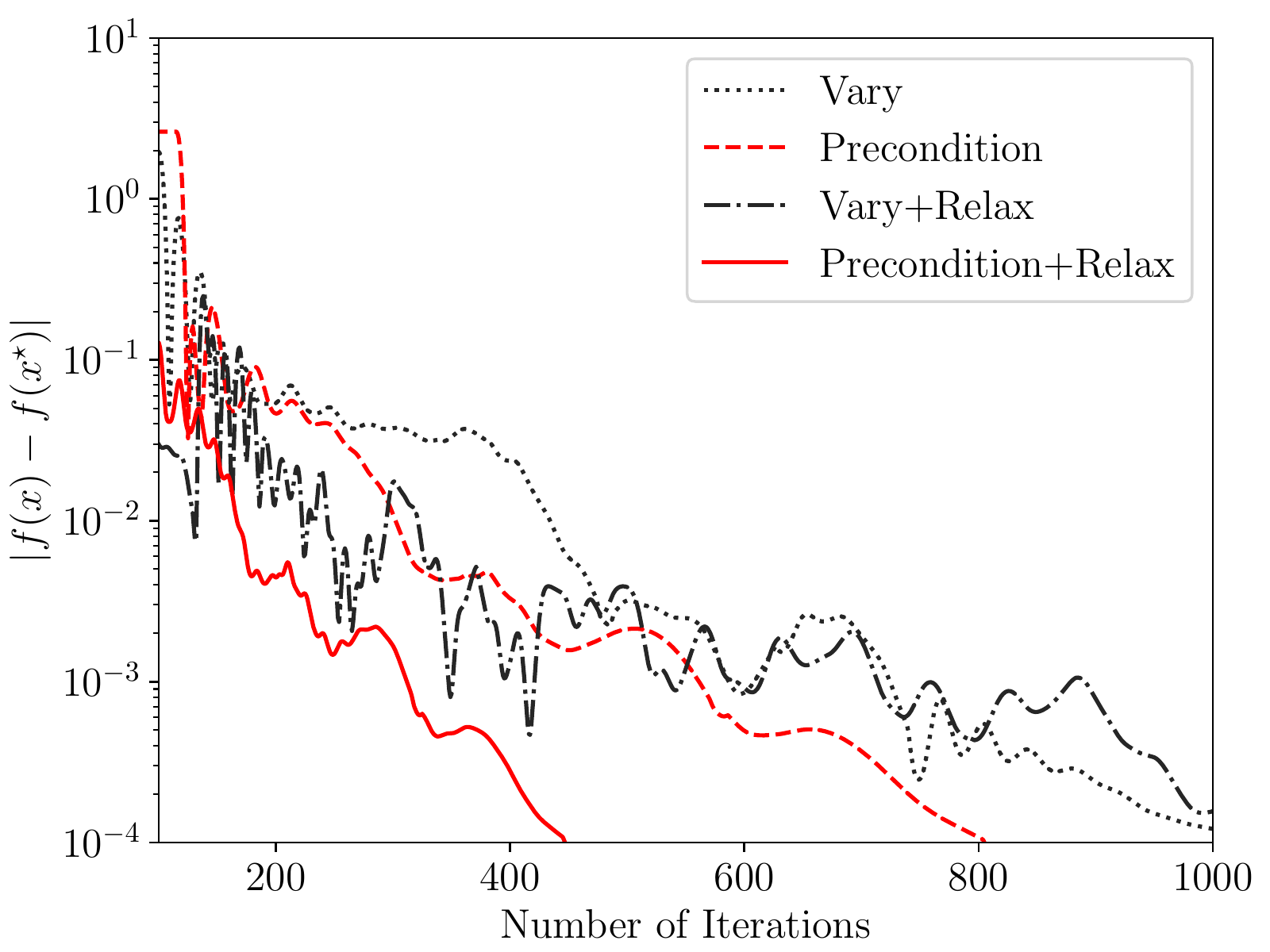}
\caption{Convergence curve of Alg.~\ref{alg_lasso_admm} using different acceleration techniques. ``Baseline'': $\rho=1$. ``Vary'': varying penalty parameter. ``Precondition'': diagonal preconditioning. ``Relax'': over-relaxation. Bottom: a close-up of the top figure.} \label{fig_lasso_obj_history_admm_230}
\end{figure}

\begin{table*}[!t]
    \renewcommand{\arraystretch}{1.3}
    \caption{Number of iterations using different acceleration techniques}
    \label{tab_iter_alg_lasso}
    \centering
    \begin{tabular}{l r r r r r r}
        & \#1 & \#2 & \#3 & \#4 & \#5 & \#6 \\
        \hline
        Baseline & $20289$ & $14250$ & $15752$ & $19571$ & $23920$ & $16983$ \\
        Vary & $1445$ & $1101$ & $1780$ & $1958$ & $2005$ & $1033$ \\
        Precondition & $679$ & $480$ & $504$ & $1074$ & $840$ & $805$ \\
        Relax & $11272$ & $7916$ & $8750$ & $10866$ & $13288$ & $9446$ \\
        Vary+Relax & $767$ & $625$ & $871$ & $821$ & $834$ & $1051$ \\
        Precondition+Relax & $\underline{377}$ & $\underline{260}$ & $\underline{288}$ & $\underline{595}$ & $\underline{467}$ & $\underline{447}$ \\
        \hline
    \end{tabular}
\end{table*}

\section{Experiment with TerraSAR-X Data} \label{sec_experiment}

In this section, we report our experimental results with a real SAR data set.

\subsection{Design of Experiment} \label{sec_experiment_design}

As a demonstration, we used $31$ TerraSAR-X staring spotlight repeat-pass acquisitions of the central Munich area from March 31, 2016 to December 7, 2017. This data set was processed with DLR's Integrated Wide Area Processor \cite{adam2013wide, gonzalez2013integrated}, as was elaborately described in \cite{ge2018spaceborne}. In addition to side lobe detection (see \cite{ge2018spaceborne} and the references therein), any non-peak point inside a main lobe was also removed, since it would otherwise lead to a ``ghost'' scatterer in the result, as any side lobe point would do too. Our region of interest contains a six-story building (``Nordbau'') of the Technical University of Munich (TUM) shown in Fig.~\ref{fig_tum-nordbau_photo_model} (left). The building signature in the SAR intensity image can be observed in Fig.~\ref{fig_mean_ampl_double_scatterer}, where the regular grid of salient points within the building footprint is a result of triple reflections on three orthogonal surfaces: metal plate (behind window glass), window ledge and brick wall \cite{auer2019sigs}. After main and side lobe detection, a total of $594$ looks were left, whose azimuth-range positions are shown in Fig.~\ref{fig_ps_x_y_sim_h} (bottom).

A \emph{single-master} stack was formed by choosing the acquisition from December 20, 2016 as the one and only master. Its vertical wavenumbers are shown in Fig.~\ref{fig_stack_wavenumber}. A sinusoidal basis function was used for modeling periodical motion induced by temperature change. The vertical Rayleigh resolution at scene center is approximately $12.66$~m. The Cr\'amer-Rao lower bound (CRLB) of height estimates given the aforementioned periodical deformation model \cite{ge2019bistatic} and a nominal signal-to-noise ratio (SNR) of $2$~dB is approximately $1.10$~m. NLS and L1RLS were applied to this stack for tomographic reconstruction. The latter was solved by Alg.~\ref{alg_lasso_admm} augmented with diagonal preconditioning and over-relaxation (see Sec.~\ref{sec_biconvex_implementation}), where we set $\beta=1.8$ and the choice of $\alpha$ is irrelevant (since $\bfA$ is a Fourier matrix). The solution path of L1RLS was sampled $11$ times with the regularization parameter varying logarithmically from $\lambda_{\min} \isdef 5 \cdot 10^{-2} \|\bfR^H\bfg\|_{\infty}$ to $\lambda_{\max} \isdef 5 \cdot 10^{-1} \|\bfR^H\bfg\|_{\infty}$.

We constructed a \emph{multi-master} stack of small temporal baselines: suppose $1', 2', 3', 4', \ldots$ is a chronologically ordered sequence of SLCs, the interferograms (edges) are $(1',2')$, $(3',4')$, etc. (see Fig.~\ref{fig_single-multi-master_tsx_adj}). As a result, this stack consists of $15$ interferograms. Due to the small-baseline feature of this stack, we did not employ any deformation model for the sake of simplicity. NLS and BiCRAM were applied to reconstruct the elevation profile, where the latter was solved by Alg.~\ref{alg_bicram_admm} employing diagonal preconditioning and over-relaxation. Likewise, the solution path of BiCRAM was also sampled $11$ times, where $\lambda_1$ was fixed as one (since it was deemed relatively insignificant as far as our experience went), and $\lambda_2$ was set to vary logarithmically from $\lambda_{\min} \isdef 5 \cdot 10^{-2} \max \left\{ \|\bfR^H\bfg\|_{\infty}, \|\bfS^H\bfg\|_{\infty} \right\}$ to $\lambda_{\max} \isdef 5 \cdot 10^{-1} \max \left\{ \|\bfR^H\bfg\|_{\infty}, \|\bfS^H\bfg\|_{\infty} \right\}$. The initial solution was given by $\bfgamma^{(0)} = (\bfR \had \conj{\bfS})^H \bfg$ due to its simplicity. Alternatively, \eqref{eq_nls_1d} could be used. In terms of off-grid correction, forward-mode automatic differentiation \cite{revels2016forward} was employed in order to circumvent analytically differentiating the objective function of \eqref{eq_mm_oc} for any number of scatterers, and the optimization problem was solved by means of a BFGS implementation \cite{mogensen2018optim}.

Finally, we built a second small-baseline \emph{multi-master} stack in the identical way as the previous one. In addition, we normalized each interferogram with the corresponding master amplitude. We will refer to this as the \emph{fake single-master} stack, since we treated it as if it had been a \emph{single-master} one. In order to apply the \emph{single-master} approach, we calculated for each interferogram the difference between slave and master wavenumbers, and used it as if it had been the wavenumber baseline, i.e., by inadequately assuming
\begin{equation}
    g_n \conj{g_m} \approx \sum_l \gamma_l \exp\left( -j (k_n - k_m) s_l \right),
\end{equation}
for each $(m,n) \in \bfE(\bfG)$. Needless to say, NLS and L1RLS were employed exactly the same as in the single-master case.

\begin{figure}[!tp]
\centering
\includegraphics[width=0.48\textwidth]{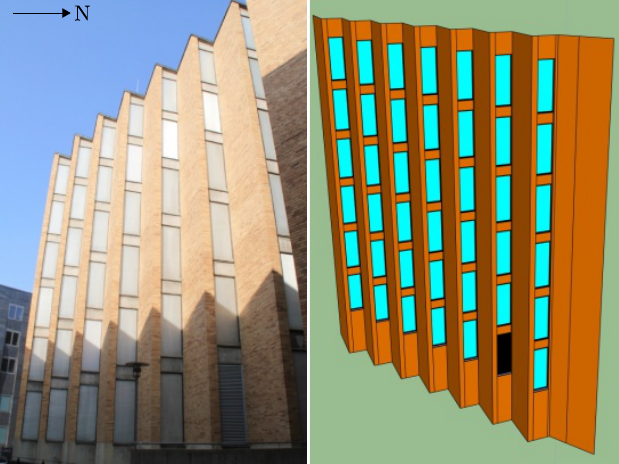}
\caption{Eastern facade of the six-story TUM-Nordbau building in our region of interest \cite{auer2019sigs}. Left: in-situ photo. Right: 3-D facade model. The black shape corresponds to a metallic window.} \label{fig_tum-nordbau_photo_model}
\end{figure}

\begin{figure}[!tp]
\centering
\includegraphics[width=0.48\textwidth]{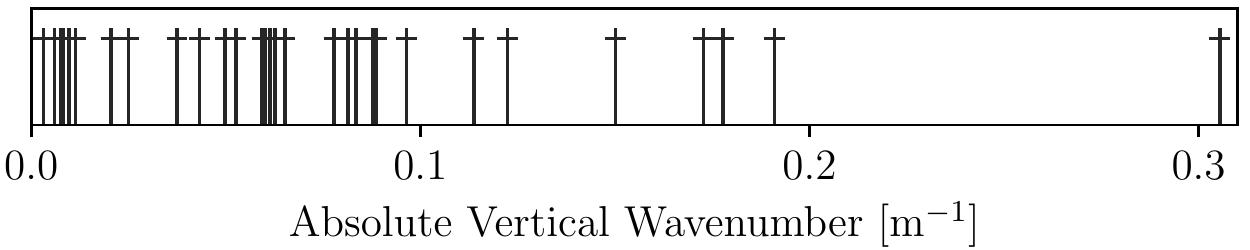}
\caption{Single-master absolute vertical wavenumbers. The largest one is approximately $0.31$ m$^{-1}$.} \label{fig_stack_wavenumber}
\end{figure}

The next subsection briefly explains how we generated ground truth data.

\subsection{Generation of Height Ground Truth} \label{sec_experiment_generation}

Height ground truth data was made available via a SAR imaging geodesy and simulation framework \cite{auer2019sigs}. The starting point was to create a three-dimensional (3-D) facade model from terrestrial measurements, via (drone-borne) camera, tachymeter, measuring rod and differential global positioning system (GPS), with an overall accuracy better than $2$~cm and a very high level of details \cite{auer2019sigs}. Ground control points were used for referencing this facade model to an international terrestrial reference frame. A visualization of the 3-D facade model is provided in Fig.~\ref{fig_tum-nordbau_photo_model} (right). The ray-tracing-based RaySAR simulator \cite{auer20113d} was employed to simulate dominant scatterers that, as already mentioned in Sec.~\ref{sec_experiment_design}, correspond to triple reflections on the building facade. With the help of atmospheric and geodynamic corrections from DLR's SAR Geodetic Processor \cite{eineder2010imaging, gisinger2014precise} and the newly enhanced TerraSAR-X orbit products \cite{hackel2018long}, their absolute coordinates were converted into azimuth timing, range timing and height that we refer to as \emph{Level~0} ground truth data.

\emph{Level~1} ground truth data consists of height at $30$ simulated PSs that are matched with real ones. The matching was conducted in the azimuth-range geometry, so as not to be affected by any height estimate error \cite{gernhardt2015persistent}. Fig.~\ref{fig_ps_x_y_sim_h} (top) shows the height simulations at the sub-pixel azimuth-range positions of the corresponding $30$ PSs. This height is relative to a corner reflector that is located on top of a neighboring TUM building and next to a permanent GPS station \cite{auer2019sigs}.

In addition, we performed height interpolation for a total of $594$ looks (see Sec.~\ref{sec_experiment_design}) in the following way. First of all, the height of each simulated PS was converted into interferometric phase. Next, the distance to the polyline representing the nearest-range cross-section of the building facade was used as the independent variable to construct a 1-D interpolator. In the end, the phase was interpolated at the previously mentioned $594$ looks and converted back into height. This interpolated height is referred to as the \emph{Level~$2$} ground truth and shown in Fig.~\ref{fig_ps_x_y_sim_h} (bottom). Needless to say, one assumption is that each scatterer, if it does exist, should lie on the building facade. A cross-validation of this 1-D interpolator was performed in \cite{auer2019sigs}, where the standard deviation (SD) and median absolute deviation (MAD) were shown to be $0.004$ and $0.002$~m, respectively.

\begin{figure}[!tp]
\centering
\includegraphics[width=0.4\textwidth]{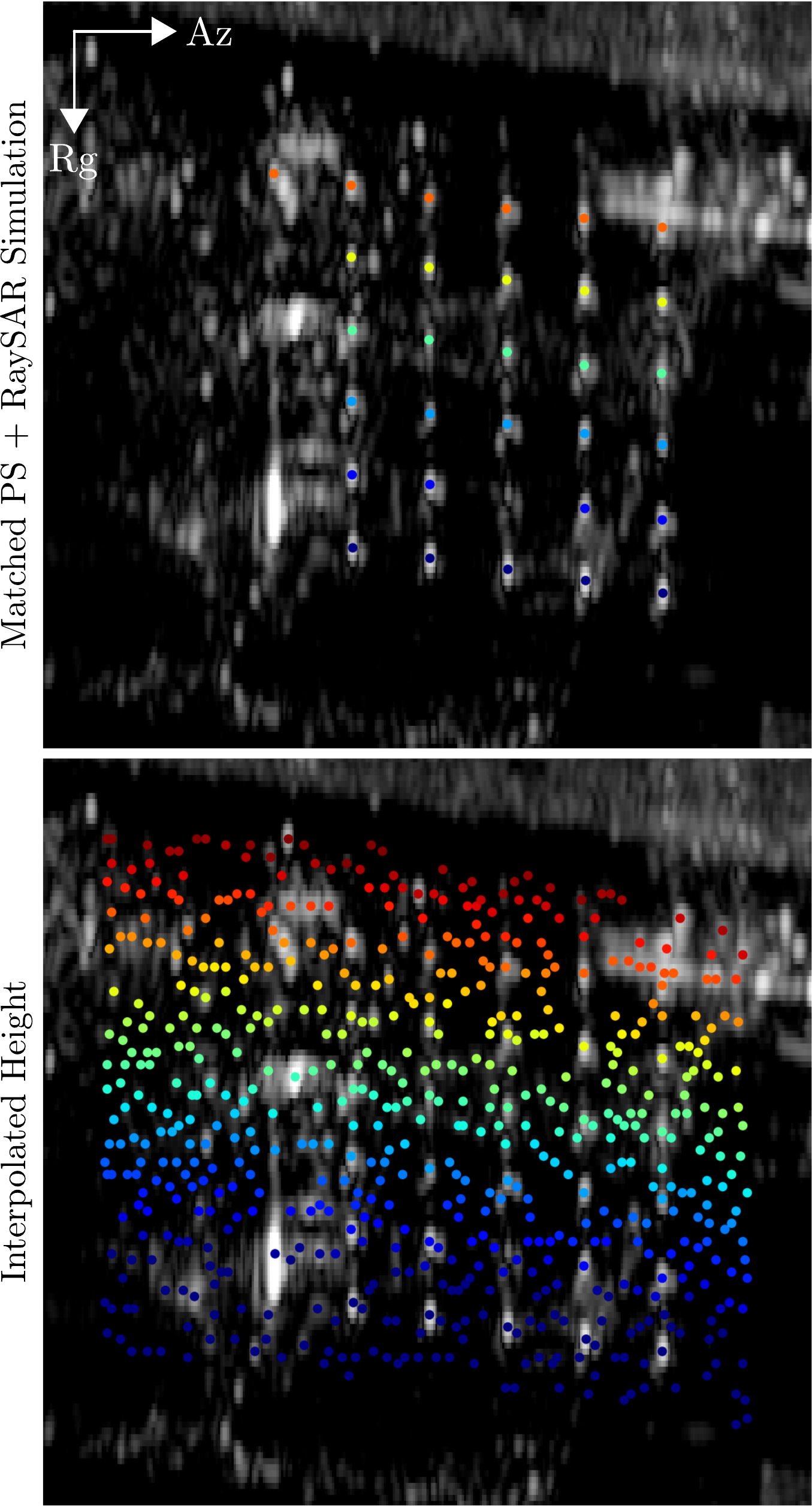} \\
\includegraphics[width=0.35\textwidth]{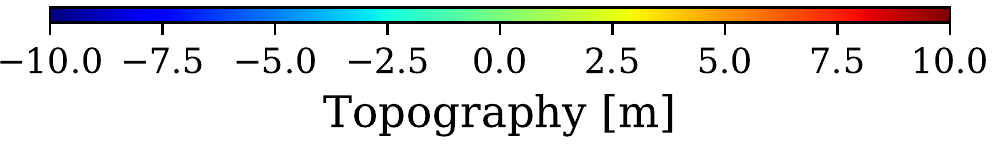}
\caption{SAR intensity image and height ground truth of our region of interest. Top: RaySAR height simulations at $30$ matched PS coordinates (Level~1). Bottom: interpolated height at $594$ facade looks (Level~2).} \label{fig_ps_x_y_sim_h}
\end{figure}

In the next subsection, our preliminary results are reported.

\subsection{Experimental Results} \label{sec_experiment_experimental}

The experiments can be divided into three categories: single-master, multi-master and fake single-master (see Sec.~\ref{sec_experiment_design}). In each category, two algorithms were applied for tomographic reconstruction.

As a proof of concept, we selected six looks that are very likely subject to facade-roof layover. These six looks were chosen in a systematic way: we performed tomographic reconstruction on the \emph{single-master} stack by using Tikhonov regularization (i.e., the $\ell_1$ norm in the regularization term of \eqref{eq_lasso} is replaced by the $\ell_2$ norm), extracted all the seven looks containing double scatterers, and discarded one look whose height distance is almost identical to the one of another look. These six looks are shown in Fig.~\ref{fig_mean_ampl_double_scatterer}, where the indices increase with decreasing estimated height distance from approximately $1.5$ to $0.8$ times the vertical Rayleigh resolution. This ordering agrees approximately with intuition under the assumption that the roof is entirely flat: the higher the scatterer is on the facade, the less is its height distance to the roof.

\begin{figure}[!tp]
\centering
\includegraphics[width=0.4\textwidth]{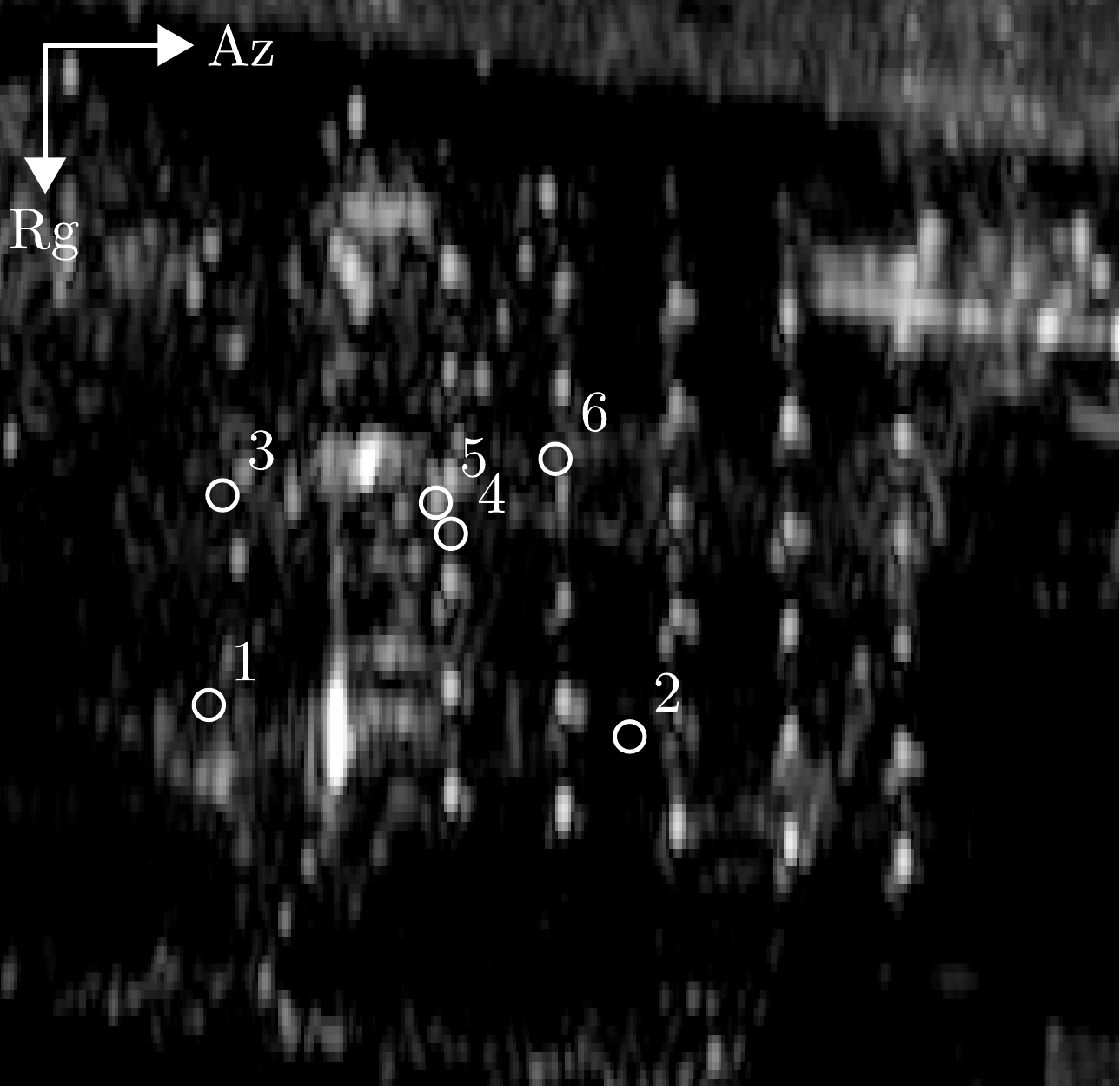}
\caption{Locations of six looks subject to facade-roof layover.} \label{fig_mean_ampl_double_scatterer}
\end{figure}

The estimated height profile is shown in Fig.~\ref{fig_x_h_sm_mm_1-3} (\#1--3) and \ref{fig_x_h_sm_mm_4-6} (\#4--6), where we used the vertical Rayleigh resolution of the \emph{single-master} stack (see Sec.~\ref{sec_experiment_design}) for normalizing the x-axis. The height estimates are listed in Tab.~\ref{tab_h_sm_mm}. In the \emph{single-master} setting, NLS and L1RLS produced very similar height profiles, despite the occasional sporadic artifacts in the latter which are known to be an intrinsic problem of $\ell_1$-regularization. Moreover, the height estimates were identical after off-grid correction. In each case, the height estimate of the lower scatterer fits very well the Level~2 RaySAR simulation of facade. Overall, the \emph{multi-master} results are consistent with the \emph{single-master} ones, with deviations of height estimates typically of several decimeters. In the \emph{fake single-master} setting, however, layover separation was only successful in the fifth case, presumably due to the high SNR (see the brightness of the look in Fig.~\ref{fig_mean_ampl_double_scatterer}). When the height distance is significantly larger than the vertical Rayleigh resolution (\#1--2), both NLS and L1RLS could reconstruct double scatterers, but only the one with larger amplitude could pass model-order selection. When the height distance approaches the vertical Rayleigh resolution or becomes even smaller (\#3, 4, 6), neither algorithm could reconstruct a second scatterer, and the height estimate of the single scatterer after off-grid correction is also arguably wrong. We are therefore convinced by this simple experiment that the conventional \emph{single-master} approach, if applied to a \emph{multi-master} stack, can be insufficient for layover separation.

\begin{figure*}[!tp]
\centering
\includegraphics[width=0.96\textwidth]{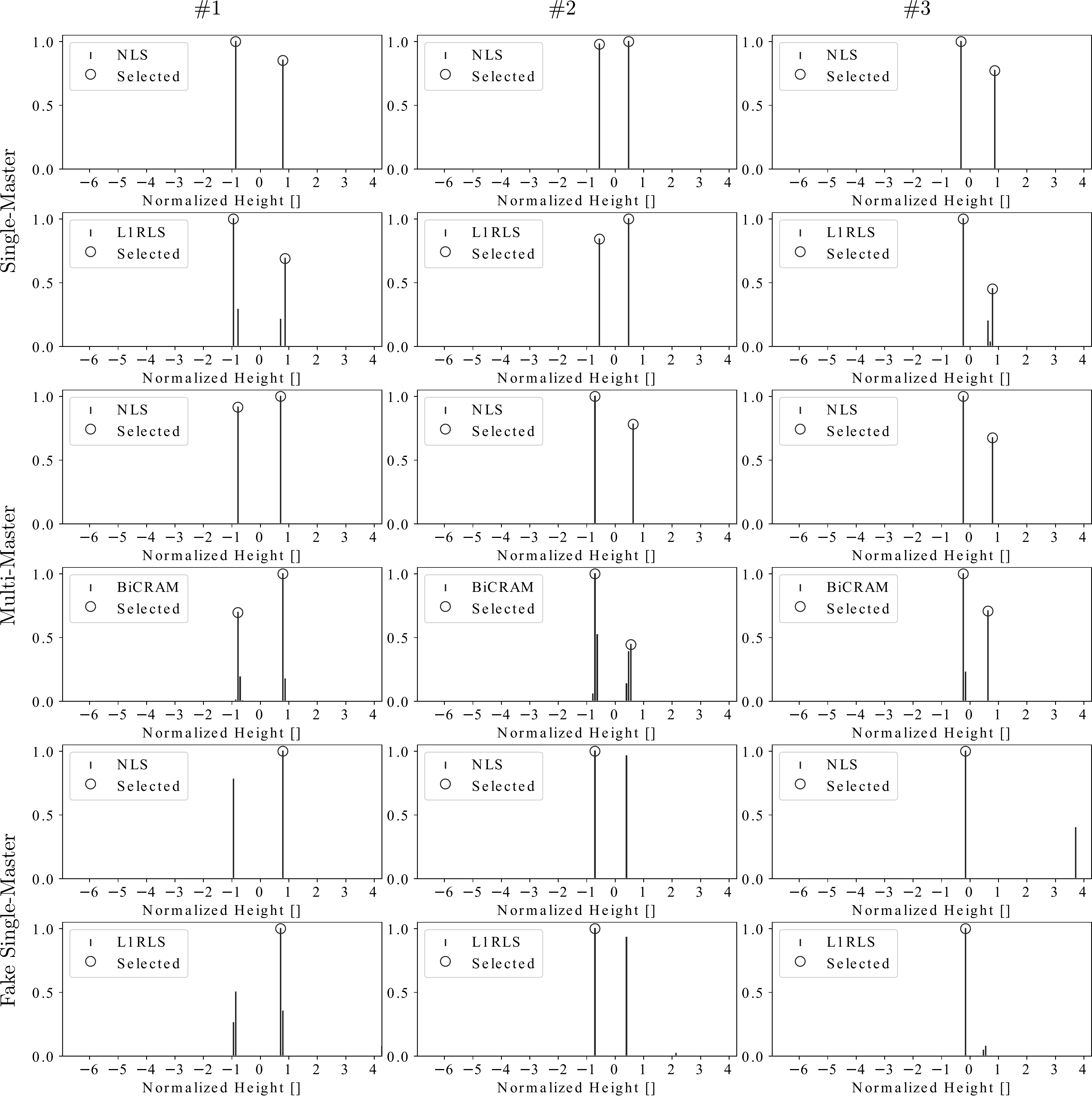}
\caption{Height profile estimate of \#1--3 in Fig.~\ref{fig_mean_ampl_double_scatterer}. Vertical line: before model-order selection. Circle: after model-order selection.} \label{fig_x_h_sm_mm_1-3}
\end{figure*}

\begin{figure*}[!tp]
\centering
\includegraphics[width=0.96\textwidth]{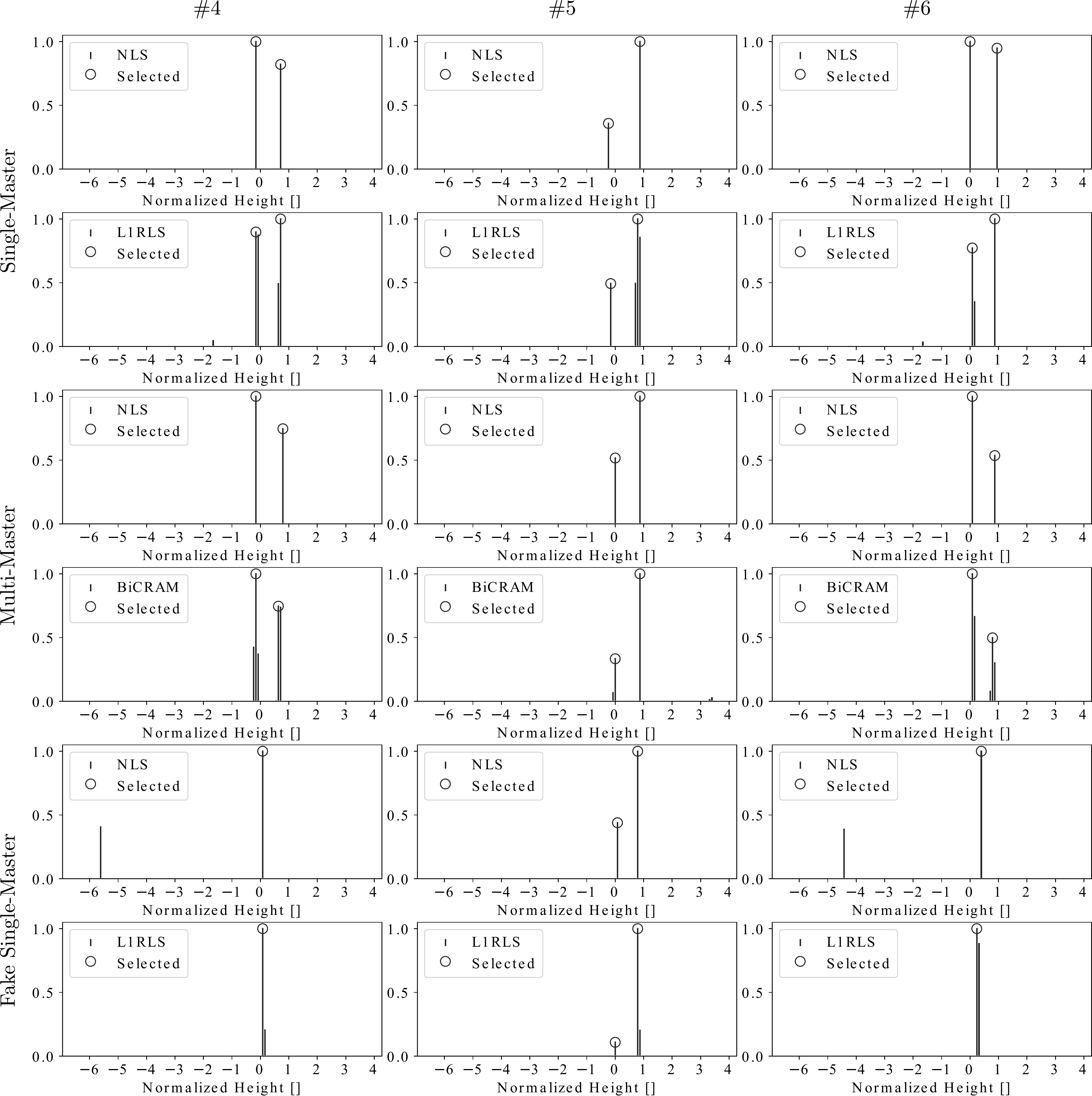}
\caption{Height profile estimate of \#4--6 in Fig.~\ref{fig_mean_ampl_double_scatterer}. Vertical line: before model-order selection. Circle: after model-order selection.} \label{fig_x_h_sm_mm_4-6}
\end{figure*}

\begin{table*}[!t]
    \renewcommand{\arraystretch}{1.3}
    \caption{Single- and multi-master height estimates of six layover cases [m]}
    \label{tab_h_sm_mm}
    \centering
    \begin{tabular}{ll|rr|rr|rr|rr|rr|rr}
        & & \multicolumn{2}{c|}{\#1} & \multicolumn{2}{c|}{\#2} & \multicolumn{2}{c|}{\#3} & \multicolumn{2}{c|}{\#4} & \multicolumn{2}{c|}{\#5} & \multicolumn{2}{c}{\#6} \\
        % \cline{3-14}
        & & $h_1$ & $h_2$ & $h_1$ & $h_2$ & $h_1$ & $h_2$ & $h_1$ & $h_2$ & $h_1$ & $h_2$ & $h_1$ & $h_2$ \\
        \hline
        \multicolumn{2}{c|}{RaySAR} & $-9.22$ & $-$ & $-8.91$ & $-$ & $-2.39$ & $-$ & $-2.87$ & $-$ & $-1.59$ & $-$ & $0.55$ & $-$ \\
        \hline
        \multirow{2}{*}{Single-Master}
        & NLS & $-10.27$ & $9.10$ & $-8.14$ & $6.97$ & $-4.27$ & $10.72$ & $-2.63$ & $9.97$ & $-1.95$ & $10.03$ & $0.98$ & $11.34$ \\
        & L1RLS & $-10.27$ & $9.10$ & $-8.14$ & $6.97$ & $-4.27$ & $10.72$ & $-2.63$ & $9.97$ & $-1.95$ & $10.03$ & $0.98$ & $11.34$ \\
        \hline
        \multirow{2}{*}{Multi-Master}
        & NLS & $-10.15$ & $9.47$ & $-8.96$ & $8.23$ & $-3.50$ & $9.98$ & $-2.06$ & $10.45$ & $-0.37$ & $10.90$ & $0.66$ & $10.85$ \\
        & BiCRAM & $-10.15$ & $9.47$ & $-8.96$ & $8.23$ & $-3.50$ & $9.98$ & $-2.06$ & $10.45$ & $-0.37$ & $10.90$ & $0.66$ & $10.85$ \\
        \hline
        \multirow{2}{*}{Fake Single-Master}
        & NLS & $9.35$ & $-$ & $-9.12$ & $-$ & $-1.64$ & $-$ & $1.19$ & $-$ & $0.63$ & $10.03$ & $3.47$ & $-$ \\
        & L1RLS & $9.35$ & $-$ & $-9.12$ & $-$ & $-1.64$ & $-$ & $1.19$ & $-$ & $0.63$ & $10.03$ & $3.47$ & $-$ \\
        \hline
    \end{tabular}
\end{table*}

% a 0:0
% \multirow{2}{*}{Single-Master}
% & NLS & $-10.68$ & $8.19$ & $-8.99$ & $7.69$ & $-4.24$ & $9.83$ & $-2.46$ & $10.27$ & $-0.79$ & $10.49$ & $1.46$ & $11.51$ \\
% & L1RLS & $-10.68$ & $8.19$ & $-8.99$ & $7.69$ & $-4.24$ & $9.83$ & $-2.46$ & $10.27$ & $-0.79$ & $10.49$ & $1.46$ & $11.51$ \\

% a -7:7
% \multirow{2}{*}{Single-Master}
% & NLS & $-10.27$ & $9.10$ & $-8.14$ & $6.97$ & $-4.27$ & $10.72$ & $-2.63$ & $9.97$ & $-1.95$ & $10.03$ & $0.98$ & $11.34$ \\
% & L1RLS & $-10.27$ & $9.10$ & $-8.14$ & $6.97$ & $-4.27$ & $10.72$ & $-2.63$ & $9.97$ & $-1.95$ & $10.03$ & $0.98$ & $11.34$ \\
% \multirow{2}{*}{Fake Single-Master}
% & NLS & $-$ & $-$ & $5.08$ & $8.34$ & $-1.20$ & $-$ & $1.33$ & $-$ & $1.59$ & $9.29$ & $3.48$ & $-$ \\
% & L1RLS & $9.15$ & $-$ & $5.08$ & $8.34$ & $-1.20$ & $-$ & $1.33$ & $-$ & $1.59$ & $9.29$ & $3.48$ & $-$ \\

Naturally, we also performed tomographic reconstruction for all the $594$ looks within the building footprint in Fig.~\ref{fig_ps_x_y_sim_h} (bottom). Tab.~\ref{tab_runtime} lists the overall runtime on a desktop with a quad-core Intel processor at $3.40$~GHz and $16$-GB RAM. Note that the periodical deformation model was only used in the \emph{single-master} case, and the solution path of L1RLS or BiCRAM was sampled $11$ times (see Sec.~\ref{sec_experiment_design}). The height estimates of single and double scatterers are shown in Fig.~\ref{fig_ps_h_sm}--\ref{fig_ps_h_mm_wrong} for the three categories, respectively. In the case of double scatterers, the higher one was plotted. The seemingly messy appearance in the left column is due to the fact that single scatterers originate from both facade and roof. In spite of this, the gradual color transition at the $30$ PSs from far- to near-range agrees visually very well with the Level~1 ground truth in Fig.~\ref{fig_ps_x_y_sim_h} (top). Tab.~\ref{tab_no_scatterer_sm_mm} lists the number of scatterers in each case. In the \emph{single-scatterer} setting, NLS detected almost twice as many double scatterers as L1RLS. This is presumably due to a higher false positive rate: at $2$ out of $30$ PSs (fifth/second row from near range, and fifth/fifth column from late azimuth on the $6\times5$ regular grid) double scatterers were detected, although there should only be single ones. The number of double scatterers in the \emph{multi-master} case is in the same order as \emph{single-master} L1RLS, and the ratio between the number of single and the one of double scatterers is also similar. We attribute the smaller number of single scatterers to the nonconvexity of the optimization problem. In particular, as Thm.~\ref{thm_general_case}\eqref{thm_general_case_cond} suggests, a certain condition needs to be fulfilled for any nonzero solution of height profile estimate to exist at all, let alone whether an algorithm can provably recover it. In the \emph{fake single-master} category, many fewer double scatterers were produced. This is presumably due to double scatterers being misdetected as single scatterers, which occurred $5$ out of $6$ times in the previous experiment (see Fig.~\ref{fig_x_h_sm_mm_1-3} and \ref{fig_x_h_sm_mm_4-6}).

The next subsection elucidates how we validated the height estimates with Level~1 and 2 ground truth data.

\begin{table}[!tbp]
    \renewcommand{\arraystretch}{1.3}
    \caption{Single- and multi-master runtime}
    \label{tab_runtime}
    \centering
    \begin{tabular}{ll|r}
        & & Runtime [s] \\
        \hline
        \multirow{2}{*}{Single-Master}
        & NLS & $6154$ \\
        & L1RLS & $736$ \\
        \hline
        \multirow{2}{*}{Multi-Master}
        & NLS & $460$ \\
        & BiCRAM & $6853$ \\
        \hline
        \multirow{2}{*}{Fake Single-Master}
        & NLS & $48$ \\
        & L1RLS & $65$ \\
        \hline
    \end{tabular}
\end{table}

\begin{figure*}[!tp]
\centering
\includegraphics[width=0.6\textwidth]{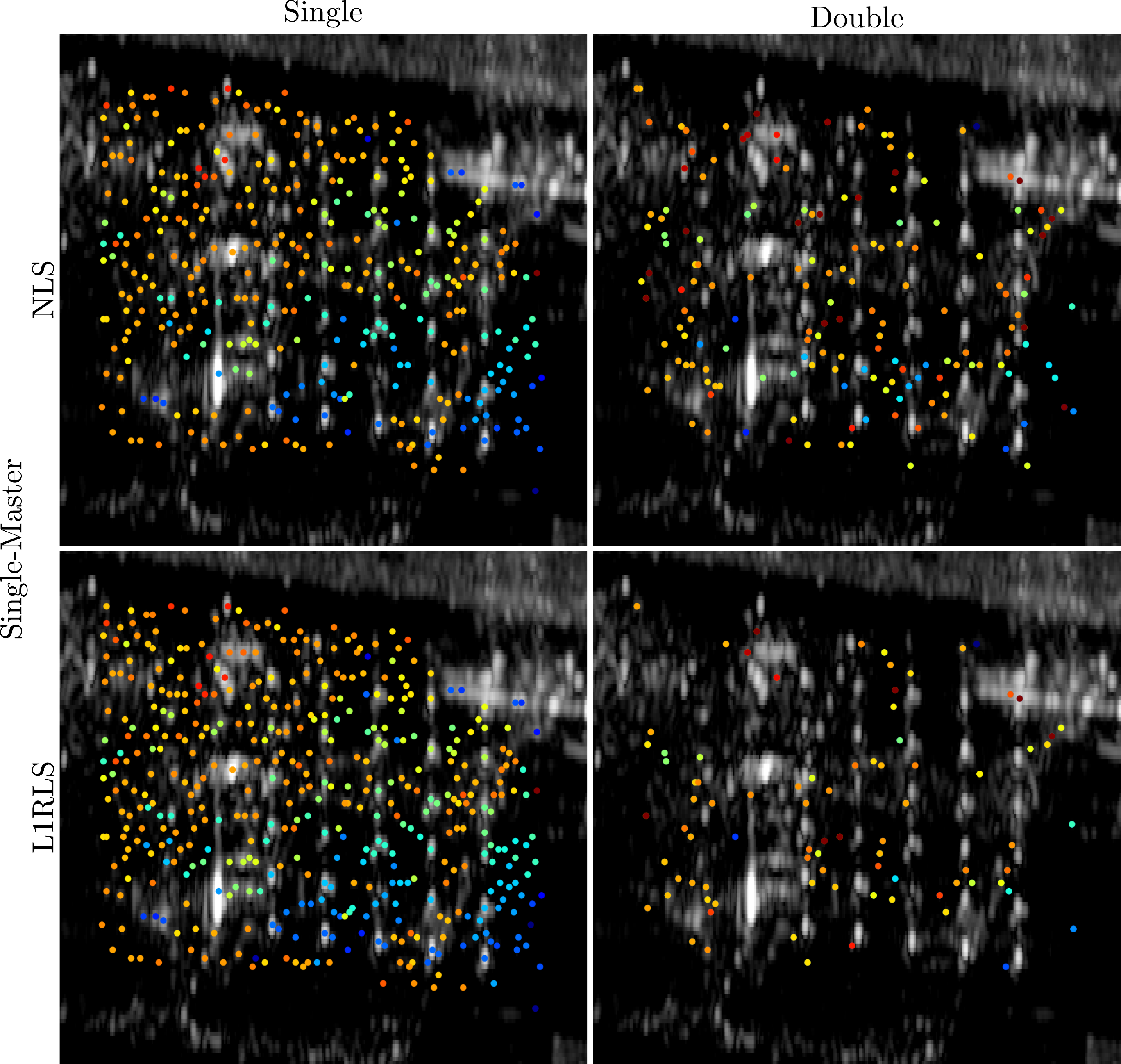} \\
\includegraphics[width=0.35\textwidth]{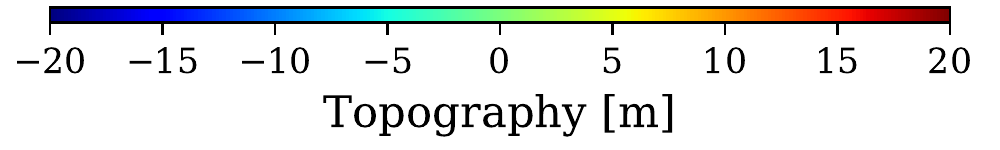}
\caption{Single-master height estimates of single and double scatterers. Top: NLS. Bottom: L1RLS.} \label{fig_ps_h_sm}
\end{figure*}

\begin{figure*}[!tp]
\centering
\includegraphics[width=0.6\textwidth]{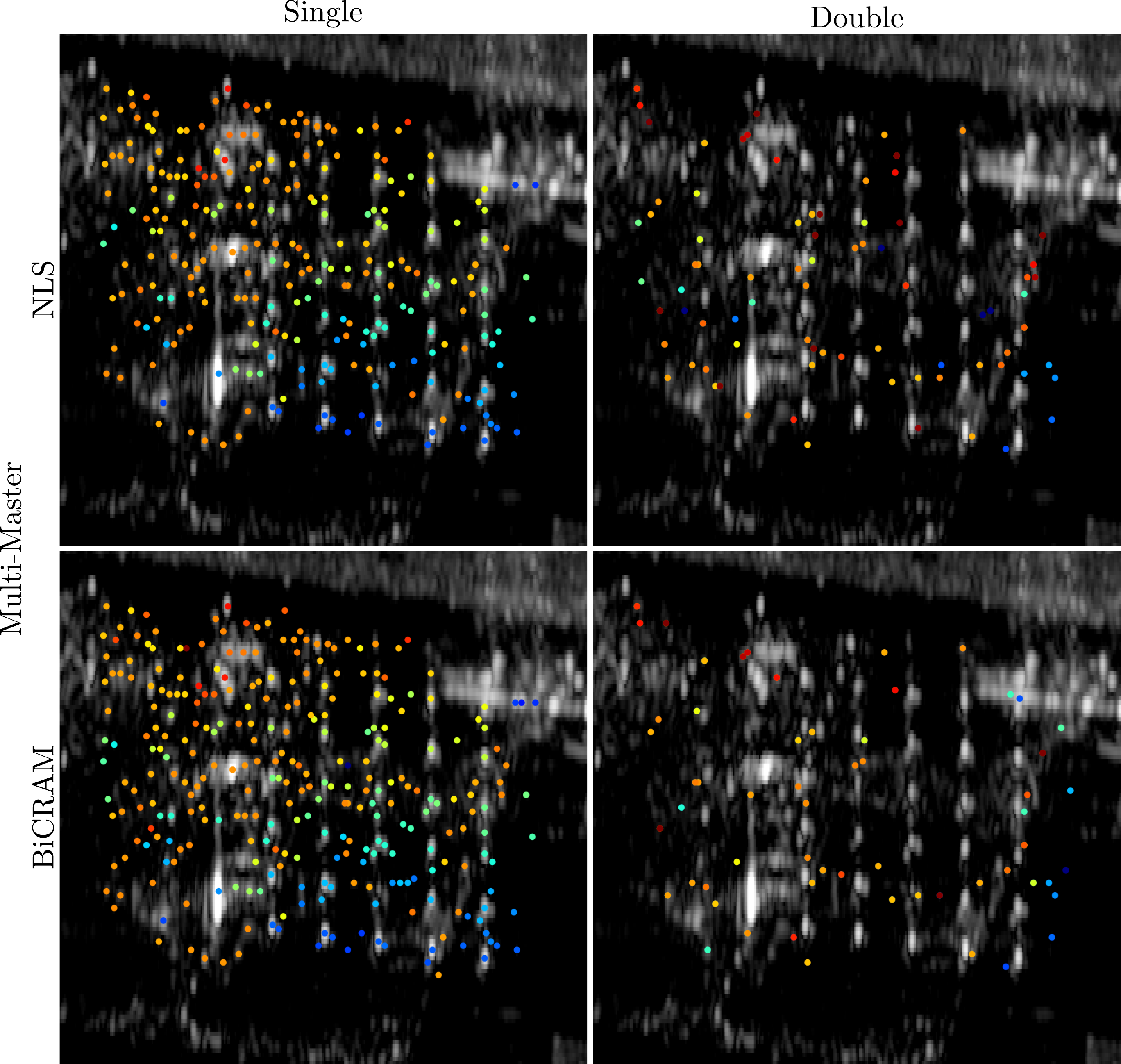} \\
\includegraphics[width=0.35\textwidth]{fig/fig_cbar_h_-20_20}
\caption{Multi-master height estimates of single and double scatterers. Top: NLS. Bottom: BiCRAM.} \label{fig_ps_h_mm}
\end{figure*}

\begin{figure*}[!tp]
\centering
\includegraphics[width=0.6\textwidth]{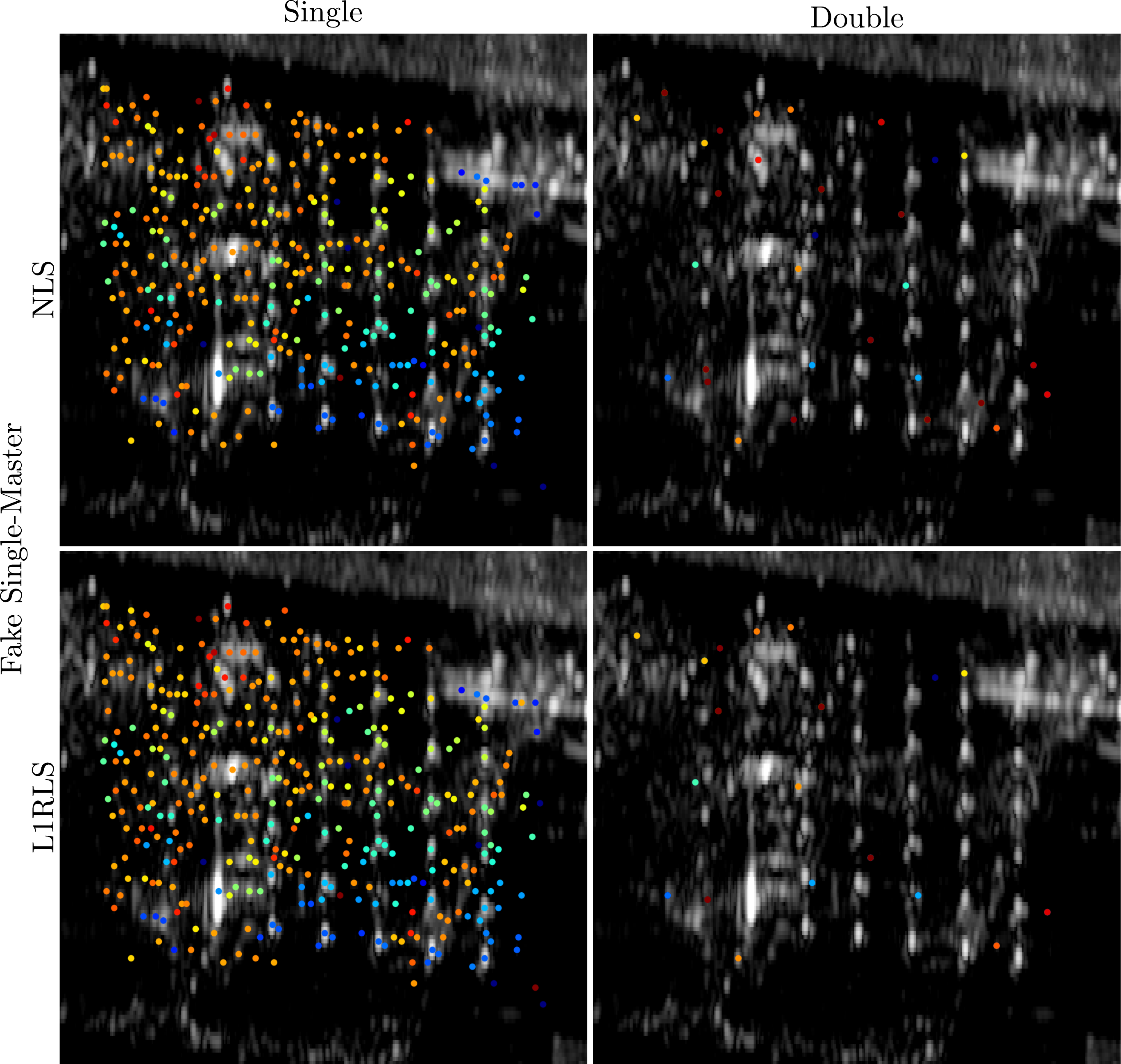} \\
\includegraphics[width=0.35\textwidth]{fig/fig_cbar_h_-20_20}
\caption{Fake single-master height estimates of single and double scatterers. Top: NLS. Bottom: L1RLS.} \label{fig_ps_h_mm_wrong}
\end{figure*}

\begin{table}[!tbp]
    \renewcommand{\arraystretch}{1.3}
    \caption{Single- and multi-master number of scatterers}
    \label{tab_no_scatterer_sm_mm}
    \centering
    \begin{tabular}{ll|r|r|r|r}
        & & Single & Double & Ratio & Facade \\
        \hline
        \multirow{2}{*}{Single-Master}
        & NLS & $359$ & $332$ & $1.08$ & $148$ \\
        & L1RLS & $446$ & $168$ & $2.65$ & $189$ \\
        \hline
        \multirow{2}{*}{Multi-Master}
        & NLS & $260$ & $158$ & $1.65$ & $124$ \\
        & BiCRAM & $291$ & $118$ & $2.47$ & $133$ \\
        \hline
        \multirow{2}{*}{Fake Single-Master}
        & NLS & $360$ & $60$ & $6.00$ & $134$ \\
        & L1RLS & $381$ & $38$ & $10.03$ & $143$ \\
        \hline
    \end{tabular}
\end{table}

% a = 0:0
% \multirow{2}{*}{Single-Master}
% & NLS & $428$ & $168$ & $2.55$ \\
% & L1RLS & $461$ & $112$ & $4.12$ \\

\subsection{Validation} \label{sec_experiment_validation}

Since the height ground truth is limited to facade only (see Sec.~\ref{sec_experiment_generation}), the validation was focused on single scatterers by following two approaches: the first one uses $30$ PSs, and the second one is based on extracted facade scatterers.

As already mentioned in Sec.~\ref{sec_experiment_design}, the $30$ PSs constituting the Level~1 ground truth in Fig.~\ref{fig_ps_x_y_sim_h} (top) are caused by triple reflections on the building facade, and are located on a regular grid of salient points. Due to the (almost) identical scattering geometry, these PSs should have similar SNRs and are therefore ideal for height estimate validation. In each of the six cases, single scatterers were correctly detected at all the $30$ PSs---with the exception that double scatterers were misdetected by NLS in the \emph{single-master} setting (see Sec.~\ref{sec_experiment_experimental}). For this reason, the height estimate error could be evaluated straightforwardly. Fig.~\ref{fig_ps_topo_err_sm_mm_ps} shows the normalized histogram and Tab.~\ref{tab_height_err_ps} lists some of its statistical parameters. As a reference, the SD and MAD of height estimate error of the PSI result are about $0.28$ and $0.22$~m, respectively \cite{auer2019sigs}. In each of the three settings (single-master, multi-master and fake single-master), the respective two algorithms performed similarly and no significant difference is visible. A cross-comparison between the multi-master and fake single-master cases revealed the superiority of the former: its histogram is more centered around zero, and both its SD and MAD are slightly smaller. This is unsurprising since we already analyzed the implications of the single-look \emph{multi-master} data model for single scatterers in Sec.~\ref{sec_multimaster_implications}. At this point, we could confidently assert that it does make a difference in practice, albeit small, despite the longer (by approximately one order considering that the solution path of BiCRAM was sampled $11$ times) processing time. Somewhat surprisingly, the \emph{multi-master} result is also slightly better than the \emph{single-master} one. We suspect that this is due to the complication of \emph{single-master} tomographic processing by using the (imperfect) periodical deformation model, and the justified simplification in the \emph{multi-master} case thanks to the small-baseline configuration (so that deformation-induced phase is mitigated via forming interferograms).

The second approach is based on all facade scatterers (Level~2 ground truth) in Fig.~\ref{fig_ps_x_y_sim_h} (bottom), given that they do exist. Scatter plots of simulated and estimated height of single scatterers are shown in Fig.~\ref{fig_ps_sim_topo_scatterplot}. It is obvious that many single scatterers are located on the building roof (see the gray dots above diagonal line). In order to extract facade scatterers, we used a threshold of $\pm 3 \times \text{CRLB}$ added to the simulated value. The extracted facade scatterers, whose number is given in Tab.~\ref{tab_no_scatterer_sm_mm} for each case, are shown as black dots and were used for height estimate validation. The normalized histogram is shown in Fig.~\ref{fig_ps_topo_err_sm_mm_tomosar} and some of its statistical parameters are provided in Tab.~\ref{tab_height_err_tomosar}. Likewise, the two respective algorithms performed similarly in each setting, and the \emph{multi-master} height estimate error has slightly less deviation. The SD and MAD are worse in comparison to those in Tab.~\ref{tab_height_err_tomosar} due to the much larger range of SNRs.

\begin{figure}[!tp]
\centering
\includegraphics[width=0.4\textwidth]{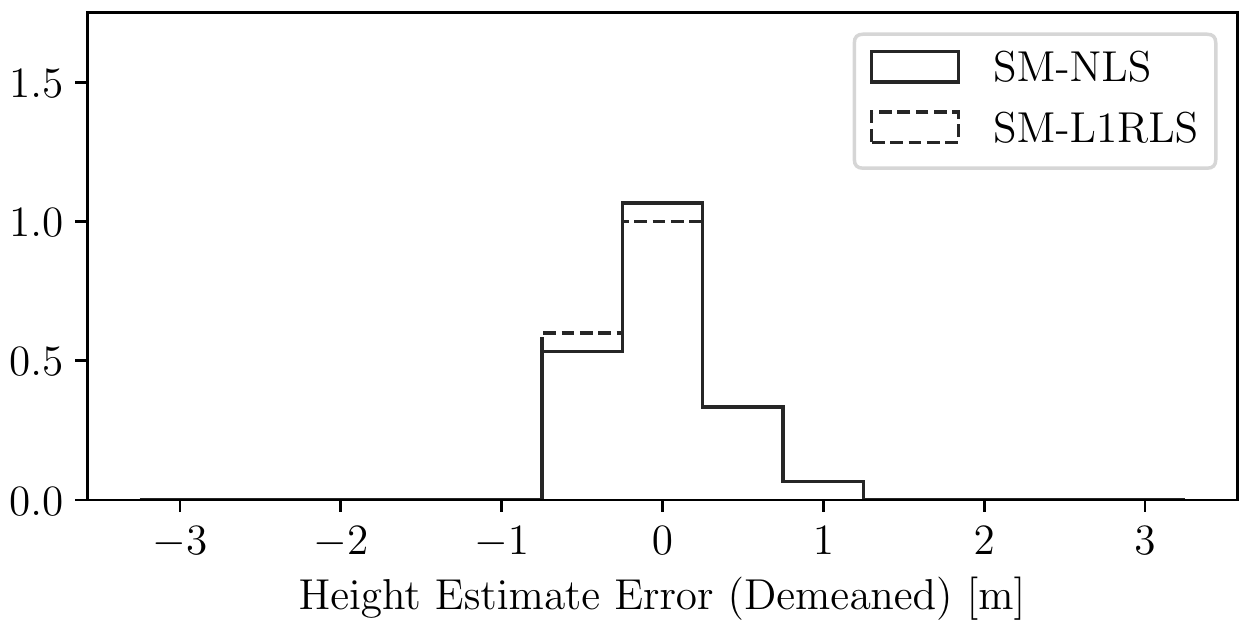} \\
\includegraphics[width=0.4\textwidth]{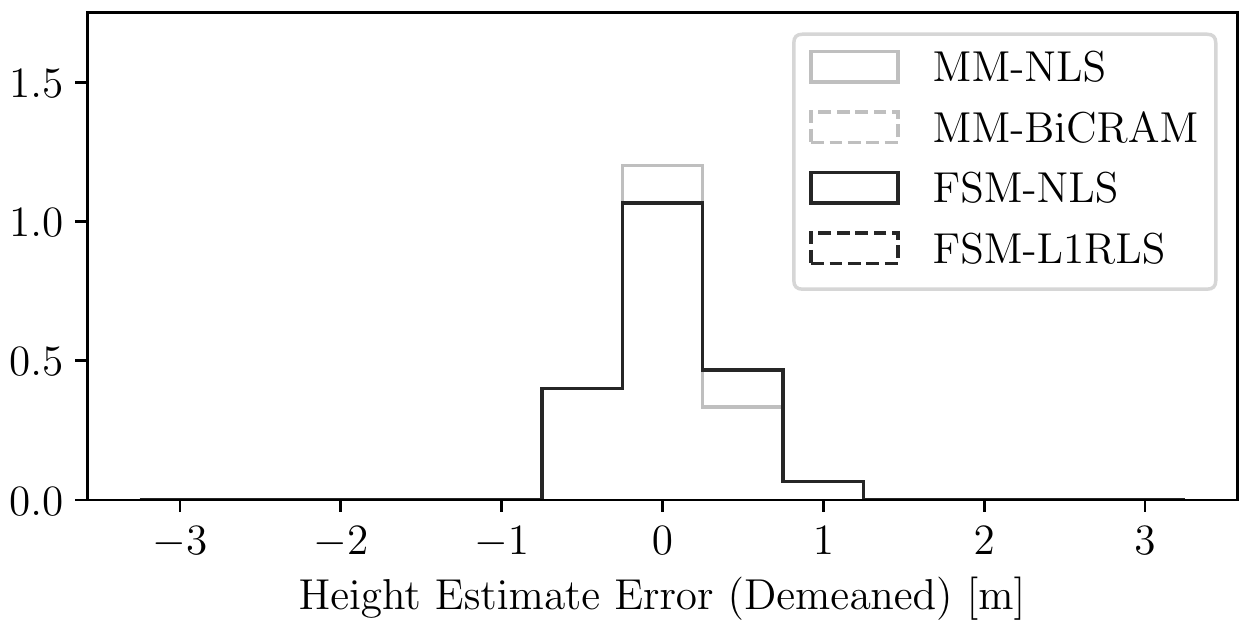}
\caption{Normalized histogram of height estimate error of $30$ PSs (Level~1). SM: single-master. MM: multi-master. FSM: fake single-master.} \label{fig_ps_topo_err_sm_mm_ps}
\end{figure}

\begin{table*}[!tbp]
    \renewcommand{\arraystretch}{1.3}
    \caption{Statistics of height estimate error [m]: $30$ PSs (Level~1)}
    \label{tab_height_err_ps}
    \centering
    \begin{tabular}{ll|r|r|r|r|r|r}
        & & Min & Max & Mean & Median & SD & MAD \\
        \hline
        \multirow{2}{*}{Single-Master}
        & NLS & $-0.81$ & $0.65$ & $-0.30$ & $-0.29$ & $0.34$ & $0.35$ \\
        & L1RLS & $-0.81$ & $0.65$ & $-0.31$ & $-0.29$ & $0.34$ & $0.37$ \\
        \hline
        \multirow{2}{*}{Multi-Master}
        & NLS & $-0.88$ & $0.55$ & $-0.33$ & $-0.36$ & $0.31$ & $0.28$ \\
        & BiCRAM & $-0.88$ & $0.55$ & $-0.33$ & $-0.36$ & $0.31$ & $0.28$ \\
        \hline
        \multirow{2}{*}{Fake Single-Master}
        & NLS & $-1.02$ & $0.68$ & $-0.43$ & $-0.48$ & $0.34$ & $0.30$ \\
        & L1RLS & $-1.02$ & $0.68$ & $-0.43$ & $-0.48$ & $0.34$ & $0.30$ \\
        \hline
    \end{tabular}
\end{table*}

\begin{figure}[!tp]
\centering
\includegraphics[width=0.4\textwidth]{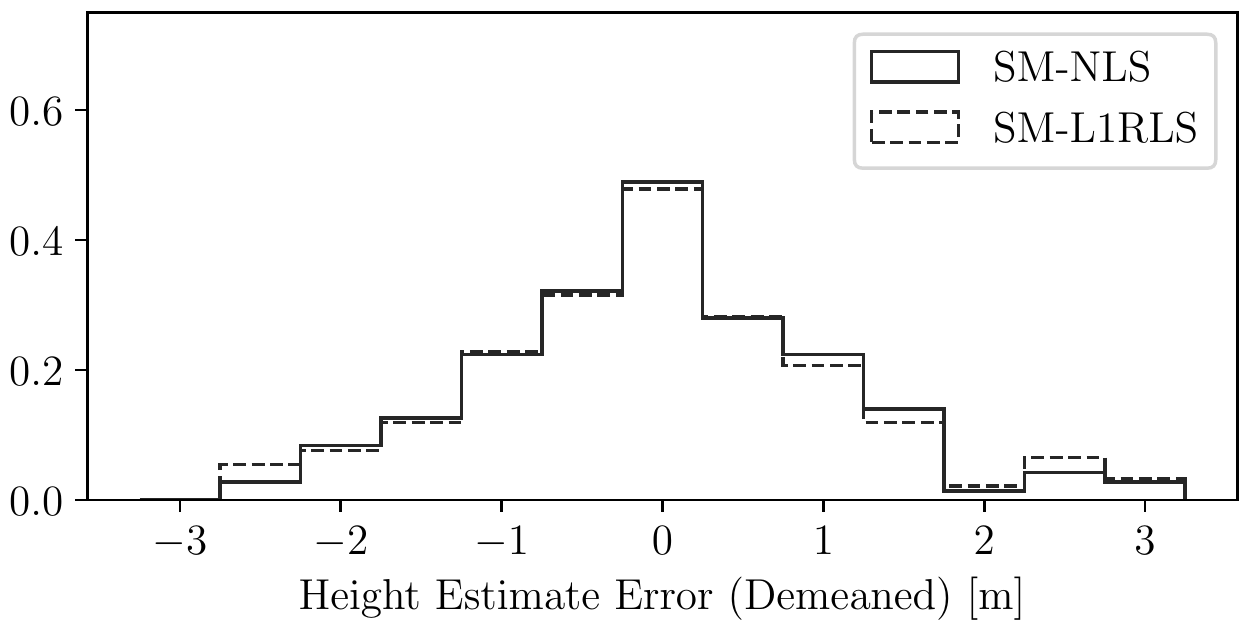} \\
\includegraphics[width=0.4\textwidth]{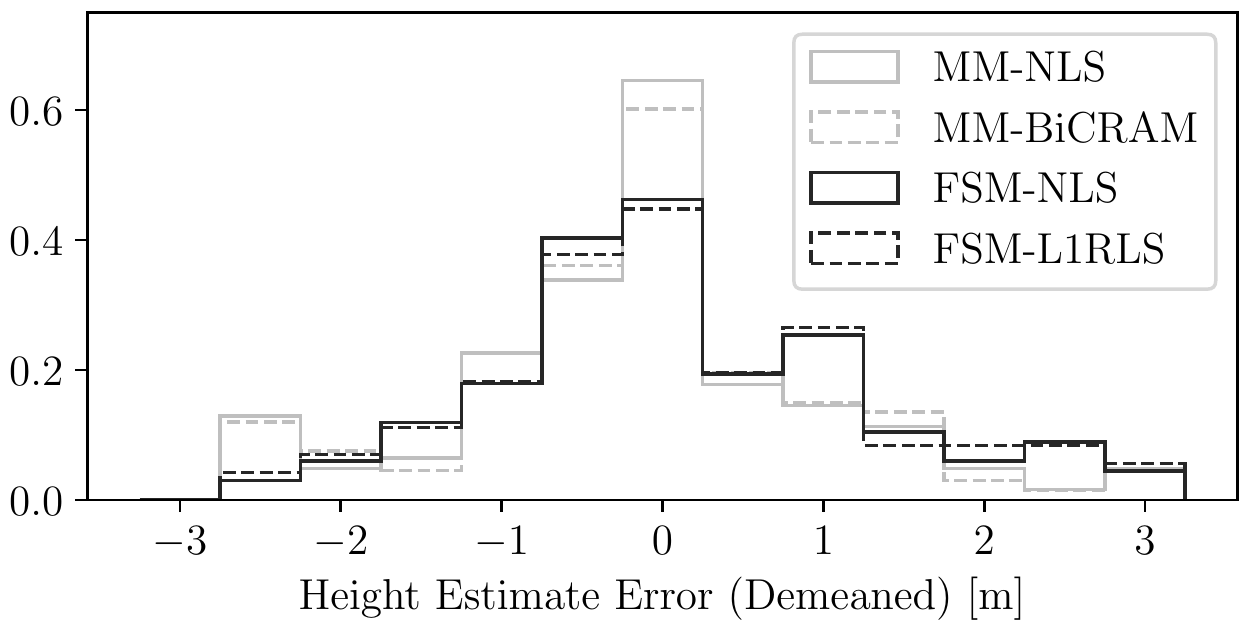}
\caption{Normalized histogram of height estimate error of extracted facade scatterers (Level~2). SM: single-master. MM: multi-master. FSM: fake single-master.} \label{fig_ps_topo_err_sm_mm_tomosar}
\end{figure}

\begin{figure*}[!tp]
\centering
\includegraphics[width=0.6\textwidth]{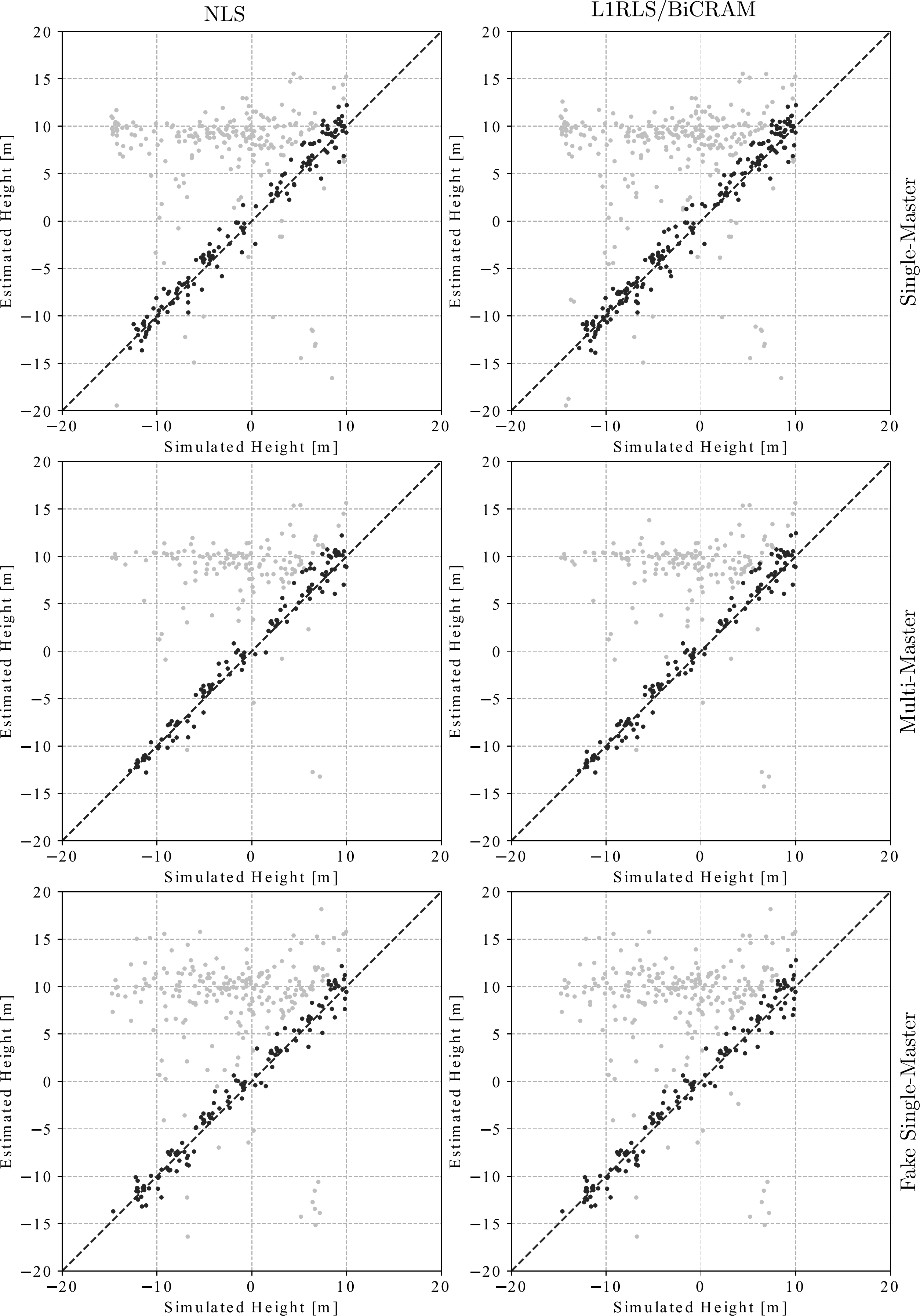}
\caption{Scatter plot of simulated and estimated height of single scatterers using Level~2 height ground truth. Black: extracted facade scatterers. Gray: extracted non-facade scatterers.} \label{fig_ps_sim_topo_scatterplot}
\end{figure*}

\begin{table*}[!tbp]
    \renewcommand{\arraystretch}{1.3}
    \caption{Statistics of height estimate error [m]: extracted facade scatterers (Level~2)}
    \label{tab_height_err_tomosar}
    \centering
    \begin{tabular}{ll|r|r|r|r|r|r}
        & & Min & Max & Mean & Median & SD & MAD \\
        \hline
        \multirow{2}{*}{Single-Master}
        & NLS & $-3.03$ & $2.99$ & $-0.40$ & $-0.51$ & $1.23$ & $0.96$ \\
        & L1RLS & $-3.03$ & $2.99$ & $-0.42$ & $-0.49$ & $1.25$ & $0.99$ \\
        \hline
        \multirow{2}{*}{Multi-Master}
        & NLS & $-3.04$ & $2.71$ & $-0.49$ & $-0.40$ & $1.13$ & $0.85$ \\
        & BiCRAM & $-3.04$ & $2.71$ & $-0.50$ & $-0.40$ & $1.12$ & $0.86$ \\
        \hline
        \multirow{2}{*}{Fake Single-Master}
        & NLS & $-2.93$ & $2.70$ & $-0.38$ & $-0.50$ & $1.15$ & $0.91$ \\
        & L1RLS & $-2.93$ & $2.70$ & $-0.35$ & $-0.48$ & $1.21$ & $1.05$ \\
        \hline
    \end{tabular}
\end{table*}

The next section concludes this paper and suggests some prospective work.

\section{Conclusion and discussion} \label{sec_conclusion}

The previous sections provided new insights into single-look multi-master SAR tomography. The single-look multi-master data model was established and two algorithms were developed within a common inversion framework. The first algorithm extends the conventional NLS to the single-look multi-master data model, and the second one uses bi-convex relaxation and alternating minimization. Extensive efforts were devoted to studying the nonconvex objective function of the NLS subproblem, and to experimenting with different acceleration techniques for ADMM-based algorithms. We demonstrated with the help of a real TerraSAR-X data set that the conventional \emph{single-master} approach, if applied to a \emph{multi-master} stack, can be insufficient for layover separation, even when the height distance between two scatterers is significantly larger than the vertical Rayleigh resolution. By means of a SAR imaging geodesy and simulation framework, we managed to generate two levels of height ground truth. The height estimates in each of the three settings were validated at either $30$ PSs or hundreds of extracted facade scatterers. Overall, the \emph{multi-master} approach performed slightly better, although it was computationally more demanding.

A special case of the general problem analyzed so far is single-look SAR tomography using only bistatic (or pursuit monostatic) interferograms. On the one hand, the advantages are that bistatic interferograms are (almost) APS-free, and the data model is still linear for any single scatterer whose reflectivity can be estimated up to a constant phase angle \eqref{eq_mm_single}. On the other, the disadvantages are that, for double or multiple scatterers, bistatic interferograms are not motion-free (see Sec.~\ref{sec_multimaster_implications}), and the data model is nonlinear \eqref{eq_mm_double}.

An alternative way to formulate the problem in the bistatic setting is to parameterize APS without forming any interferogram. Let $\bfg$ and $\bfh$ be the bistatic observations of the master and slave scenes, respectively. We have essentially two data sets:
\begin{equation}
    \bfg \approx (\bfR\bfgamma) \had \exp(j\bfphi), \quad \bfh \approx (\bfS\bfgamma) \had \exp(j\bfphi),
\end{equation}
where $\bfphi\in\real^{N'}$ is the APS vector. Under the sparsity or compressibility assumption of $\bfgamma$, one could consider the following problem:
\begin{equation} \label{eq_bistatic}
\begin{aligned}
    & \minimize_{\bfgamma,\bfphi} && \frac{1}{2} \|(\bfR\bfgamma) \had \exp(j\bfphi) - \bfg\|_2^2 + \\
    & && \frac{1}{2} \|(\bfS\bfgamma) \had \exp(j\bfphi) - \bfh\|_2^2 + \lambda \|\bfgamma\|_1,
\end{aligned}
\end{equation}
or more compactly
\begin{equation}
    \minimize_{\bfgamma,\bfphi} \frac{1}{2} \|(\tilde\bfR\bfgamma) \had (\tilde\bfI\exp(j\bfphi)) - \tilde\bfg\|_2^2 + \lambda \|\bfgamma\|_1,
\end{equation}
where $\tilde\bfR \isdef \begin{pmatrix}
\bfR \\
\bfS
\end{pmatrix}$, $\tilde\bfI \isdef \begin{pmatrix}
\bfI \\
\bfI
\end{pmatrix}$, and $\tilde\bfg \isdef \begin{pmatrix}
\bfg \\
\bfh
\end{pmatrix}$. Note that this problem is bi-convex in $\bfgamma$ and $\bfphi$.

Inspired by PSI, another related problem is SAR tomography on edges. Let $\bfgamma$ and $\bftheta$ represent the reflectivity profiles of two neighboring looks, and their phase-calibrated SLC measurements be denoted as $\bfg$ and $\bfh$, respectively. Consider the following problem:
\begin{equation}
    \minimize_{\bfgamma,\bftheta} \frac{1}{2} \|(\bfR\bfgamma) \had (\conj{\bfS\bftheta}) - \bfg \had \conj{\bfh}\|_2^2 + \lambda_1 \|\bfgamma\|_1 + \lambda_2 \|\bftheta\|_1,
\end{equation}
which is bi-convex in $\bfgamma$ and $\bftheta$. Likewise, the rationale of $\bfg \had \conj{\bfh}$ is to mitigate APS for neighboring looks. Alternatively, a parametric approach similar to \eqref{eq_bistatic} could also be considered.

% if have a single appendix:
%\appendix[Proof of the Zonklar Equations]
% or
%\appendix  % for no appendix heading
% do not use \section anymore after \appendix, only \section*
% is possibly needed

% use appendices with more than one appendix
% then use \section to start each appendix
% you must declare a \section before using any
% \subsection or using \label (\appendices by itself
% starts a section numbered zero.)
%

% \appendices
% \section{Proof of the First Zonklar Equation}
% Appendix one text goes here.

% you can choose not to have a title for an appendix
% if you want by leaving the argument blank
% \section{}
% Appendix two text goes here.

% \FloatBarrier

\appendices

\section{Recap of ADMM} \label{app_admm}

ADMM \cite{boyd2011distributed} solves a minimization problem in the form of
\begin{equation}
\begin{aligned}
    & \minimize_{\bfx,\bfz} && f(\bfx) + g(\bfz) \\
    & \st && \bfC\bfx + \bfD\bfz = \bfe
\end{aligned}
\end{equation}
by alternatively minimizing its augmented Lagrangian \cite[p.~509]{foucart2017mathematical}
\begin{equation} \label{eq_aug_lag}
\begin{aligned}
    L_\rho(\bfx,\bfy,\bfz) \isdef {}& f(\bfx) + g(\bfz) + \Real\langle\bfy, \bfC\bfx+\bfD\bfz-\bfe\rangle + \\
    & (\rho/2) \|\bfC\bfx+\bfD\bfz-\bfe\|_2^2,
\end{aligned}
\end{equation}
i.e.,
\begin{equation} \label{eq_admm}
\begin{aligned}
    \bfx^{(k+1)} \isdef {}& \argmin_{\bfx} L_\rho(\bfx,\bfy^{(k)},\bfz^{(k)}) \\
    \bfz^{(k+1)} \isdef {}& \argmin_{\bfz} L_\rho(\bfx^{(k+1)},\bfy^{(k)},\bfz) \\
    \bfy^{(k+1)} \isdef {}& \bfy^{(k)} + \rho(\bfC\bfx^{(k+1)}+\bfD\bfz^{(k+1)}-\bfe)
\end{aligned}
\end{equation}
in the $k$th iteration, where $\rho\in\real_{++}$ is a penalty parameter.

\section{Proof of Proposition~\ref{thm_eig_hess}} \label{app_prop}

The proof uses the following minor result.
\begin{lemma} \label{thm_block_mat_inv}
For any $\bfF,\bfG\in\real^{n \times n}$ and $c,d\in\real$ such that $c^2+d^2=1$, the following equalities hold:
\begin{equation}
\begin{aligned}
    &
    \begin{pmatrix}
    c\bfI & d\bfI \\
    -d\bfI & c\bfI
    \end{pmatrix}^{-1}
    \begin{pmatrix}
    \bfF & -\bfG \\
    \bfG & \bfF
    \end{pmatrix}
    \begin{pmatrix}
    c\bfI & d\bfI \\
    -d\bfI & c\bfI
    \end{pmatrix}
    =
    \begin{pmatrix}
    \bfF & -\bfG \\
    \bfG & \bfF
    \end{pmatrix}, \\
    &
    \begin{pmatrix}
    c\bfI & d\bfI \\
    -d\bfI & c\bfI
    \end{pmatrix}^{-1}
    \begin{pmatrix}
    \bfF & \bfG \\
    \bfG & -\bfF
    \end{pmatrix}
    =
    \begin{pmatrix}
    \bfF & \bfG \\
    \bfG & -\bfF
    \end{pmatrix}
    \begin{pmatrix}
    c\bfI & d\bfI \\
    -d\bfI & c\bfI
    \end{pmatrix}.
\end{aligned}
\end{equation}
\end{lemma}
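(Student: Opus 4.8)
The plan is to eliminate the inverse first and then reduce both claims to finite block multiplications. Write $\mathbf{Q} \isdef \begin{pmatrix} c\bfI & d\bfI \\ -d\bfI & c\bfI \end{pmatrix}$. The first thing I would check is that $c^2+d^2=1$ makes $\mathbf{Q}$ orthogonal: a one-line block product gives $\mathbf{Q}^T\mathbf{Q}=\bfI$ (the $2n\times 2n$ identity), the off-diagonal blocks cancelling as $cd\bfI-dc\bfI=\bfzero$ and the diagonal blocks summing to $(c^2+d^2)\bfI=\bfI$. Hence $\mathbf{Q}^{-1}=\mathbf{Q}^T=\begin{pmatrix} c\bfI & -d\bfI \\ d\bfI & c\bfI \end{pmatrix}$, which matches the inverse block written in the statement. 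This lets me replace every $\mathbf{Q}^{-1}$ by an explicit block matrix, so no inversion ever has to be computed.

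For the first identity, set $\mathbf{M}_1 \isdef \begin{pmatrix} \bfF & -\bfG \\ \bfG & \bfF \end{pmatrix}$; the claim is equivalent to $\mathbf{Q}\mathbf{M}_1=\mathbf{M}_1\mathbf{Q}$, i.e.\ that $\mathbf{M}_1$ commutes with $\mathbf{Q}$. I would verify this by computing both products block-by-block, using only that the scalar blocks $c\bfI,d\bfI$ commute with $\bfF$ and $\bfG$; both products reduce to the same four blocks $\begin{pmatrix} c\bfF+d\bfG & d\bfF-c\bfG \\ c\bfG-d\bfF & c\bfF+d\bfG \end{pmatrix}$, after which conjugating by $\mathbf{Q}$ returns $\mathbf{M}_1$. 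The conceptual reason, which I would include as a one-sentence remark, is the isomorphism $\begin{pmatrix} \bfA & -\bfB \\ \bfB & \bfA \end{pmatrix}\leftrightarrow \bfA+j\bfB$: under it $\mathbf{M}_1\leftrightarrow \bfF+j\bfG$ while $\mathbf{Q}\leftrightarrow(c-jd)\bfI$, and a scalar multiple of the identity commutes with everything.

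The second identity, with $\mathbf{M}_2 \isdef \begin{pmatrix} \bfF & \bfG \\ \bfG & -\bfF \end{pmatrix}$, is the one deserving care, and it is where the main subtlety lies: it is \emph{not} a commutation but the relation $\mathbf{Q}^{-1}\mathbf{M}_2=\mathbf{M}_2\mathbf{Q}$, with $\mathbf{Q}$ (not $\mathbf{Q}^{-1}$) reappearing on the right. I would again prove it by direct block multiplication, checking that $\mathbf{Q}^{-1}\mathbf{M}_2$ and $\mathbf{M}_2\mathbf{Q}$ both equal $\begin{pmatrix} c\bfF-d\bfG & c\bfG+d\bfF \\ c\bfG+d\bfF & d\bfG-c\bfF \end{pmatrix}$. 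The only real hazard here is sign bookkeeping across the four blocks, since $\mathbf{M}_2$ is not of the commuting form. The explanation I would attach is that $\begin{pmatrix} \bfA & \bfB \\ \bfB & -\bfA \end{pmatrix}$ represents the \emph{conjugate}-linear map $\bfz\mapsto(\bfA+j\bfB)\conj{\bfz}$; thus $\mathbf{M}_2\mathbf{Q}$ sends $\bfz\mapsto(\bfF+j\bfG)\overline{(c-jd)\bfz}=(c+jd)(\bfF+j\bfG)\conj{\bfz}$, whereas $\mathbf{Q}^{-1}\mathbf{M}_2$ sends $\bfz\mapsto(c+jd)(\bfF+j\bfG)\conj{\bfz}$ because $\mathbf{Q}^{-1}\leftrightarrow(c-jd)^{-1}\bfI=(c+jd)\bfI$. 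Passing the scalar $c-jd$ through the conjugation turns it into $c+jd=(c-jd)^{-1}$, which is exactly why $\mathbf{Q}$ must appear on the right rather than $\mathbf{Q}^{-1}$. Beyond this observation there is no genuine obstacle: both identities are purely computational once orthogonality disposes of the inverse.
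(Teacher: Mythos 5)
Your proposal is correct and takes essentially the same route as the paper: the paper's proof likewise begins by recording the explicit inverse $\left(\begin{smallmatrix} a\bfI & b\bfI \\ -b\bfI & a\bfI \end{smallmatrix}\right)^{-1} = \frac{1}{a^2+b^2}\left(\begin{smallmatrix} a\bfI & -b\bfI \\ b\bfI & a\bfI \end{smallmatrix}\right)$, of which your orthogonality observation $\mathbf Q^{-1}=\mathbf Q^T$ is the $c^2+d^2=1$ special case, and then settles both identities by exactly the straightforward block multiplications you carry out. Your complex-linear/conjugate-linear gloss (with $\mathbf Q\leftrightarrow(c-jd)\bfI$ and the second block form acting as $\bfz\mapsto(\bfF+j\bfG)\conj{\bfz}$) is a correct and welcome explanation of why the second identity is an intertwining rather than a commutation, but it does not constitute a different proof.
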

\begin{proof}
Observe that for any $a,b\in\real$ such that $a^2+b^2 \neq 0$,
\begin{equation}
    \begin{pmatrix}
    a\bfI & b\bfI \\
    -b\bfI & a\bfI
    \end{pmatrix}^{-1}
    = \frac{1}{a^2+b^2}
    \begin{pmatrix}
    a\bfI & -b\bfI \\
    b\bfI & a\bfI
    \end{pmatrix}.
\end{equation}
The rest of the proof follows via straightforward computations.
\end{proof}

Now we turn our attention to the proposition.
\begin{proof}[Proof of Proposition~\ref{thm_eig_hess}]
First, we prove that $\nabla^2f(\bfx)$ and $\nabla^2f \left( \bfx\exp(j\phi) \right)$ are similar, i.e., there exists an invertible $\bfP$ such that $\nabla^2f(\bfx) = \bfP^{-1} \nabla^2f \left( \bfx\exp(j\phi) \right) \bfP$.

Observe that
\begin{equation}
\begin{aligned}
    & && \bfC\left(\bfx\exp(j\phi)\right) \\
    &= && \bfA^H \Diag\left( (\bfB\bfx\exp(j\phi)) \had (\conj{\bfB\bfx\exp(j\phi)}) \right) \bfA +{} \\
    & && \bfB^H \Diag\left( (\bfA\bfx\exp(j\phi)) \had (\conj{\bfA\bfx\exp(j\phi)}) \right) \bfB \\
    &= && \bfC(\bfx), \\
    & && \bfD\left(\bfx\exp(j\phi)\right) \\
    &= && \bfA^H \Diag\left( (\bfA\bfx\exp(j\phi)) \had (\bfB\bfx\exp(j\phi)) \right) \conj{\bfB} +{} \\
    & && \bfB^H \Diag\left( (\bfA\bfx\exp(j\phi)) \had (\bfB\bfx\exp(j\phi)) \right) \conj{\bfA} \\
    &= && \bfD(\bfx) \exp(j2\phi), \\
    & && \bfE\left(\bfx\exp(j\phi)\right) \\
    &= && \bfA^H \Diag\left( (\bfA\bfx\exp(j\phi)) \had (\conj{\bfB\bfx\exp(j\phi)}) - \bfb \right) \bfB +{} \\
    & && \bfB^H \Diag\left( (\conj{\bfA\bfx\exp(j\phi)}) \had (\bfB\bfx\exp(j\phi)) - \conj{\bfb} \right) \bfA \\
    &= && \bfE(\bfx).
\end{aligned}
\end{equation}
Let $\bfC \isdef \bfC(\bfx)$, $\bfD \isdef \bfD(\bfx)$, $\bfE \isdef \bfE(\bfx)$. The Hessian becomes
\begin{equation}
\begin{aligned}
    & && \nabla^2f \left( \bfx\exp(j\phi) \right) \\
    &= &&
    \begin{pmatrix}
    \Real(\bfC) & -\Imag(\bfC) \\
    \Imag(\bfC) & \Real(\bfC) \\
    \end{pmatrix} +{} \\
    & &&
    \begin{pmatrix}
    \Real\left(\bfD \exp(j2\phi)\right) & \Imag\left(\bfD \exp(j2\phi)\right) \\
    \Imag\left(\bfD \exp(j2\phi)\right) & -\Real\left(\bfD \exp(j2\phi)\right)
    \end{pmatrix} +{} \\
    & &&
    \begin{pmatrix}
    \Real(\bfE) & -\Imag(\bfE) \\
    \Imag(\bfE) & \Real(\bfE) \\
    \end{pmatrix} \\
    % &= &&
    % \begin{pmatrix}
    % \Real(\bfC+\bfE) & -\Imag(\bfC+\bfE) \\
    % \Imag(\bfC+\bfE) & \Real(\bfC+\bfE) \\
    % \end{pmatrix} +{} \\
    % & &&
    % \begin{pmatrix}
    % \cos(2\phi)\Real(\bfD) - \sin(2\phi)\Imag(\bfD) & \sin(2\phi)\Real(\bfD) + \cos(2\phi)\Imag(\bfD) \\
    % \sin(2\phi)\Real(\bfD) + \cos(2\phi)\Imag(\bfD) & -\cos(2\phi)\Real(\bfD) + \sin(2\phi)\Imag(\bfD) \\
    % \end{pmatrix} \\
    &= &&
    \begin{pmatrix}
    \Real(\bfC+\bfE) & -\Imag(\bfC+\bfE) \\
    \Imag(\bfC+\bfE) & \Real(\bfC+\bfE) \\
    \end{pmatrix} +{} \\
    & &&
    \begin{pmatrix}
    \Real(\bfD) & \Imag(\bfD) \\
    \Imag(\bfD) & -\Real(\bfD) \\
    \end{pmatrix}
    \begin{pmatrix}
    \cos(2\phi)\bfI & \sin(2\phi)\bfI \\
    -\sin(2\phi)\bfI & \cos(2\phi)\bfI
    \end{pmatrix}.
\end{aligned}
\end{equation}

The choice of $\bfP$ can be divided into two cases depending on the value of $\phi$.

In the trivial case, $\phi = (2k+1)\pi/2$ for some $k \in \intg$. Let
\begin{equation}
    \bfP \isdef
    \begin{pmatrix}
    \bfzero & -\bfI \\
    \bfI & \bfzero
    \end{pmatrix}.
\end{equation}
This leads to
\begin{equation}
\begin{aligned}
    & && \bfP^{-1} \nabla^2f \left( \bfx\exp(j\phi) \right) \bfP \\
    &= &&
    \begin{pmatrix}
    \bfzero & -\bfI \\
    \bfI & \bfzero
    \end{pmatrix}^{-1}
    \begin{pmatrix}
    \Real(\bfC+\bfE) & -\Imag(\bfC+\bfE) \\
    \Imag(\bfC+\bfE) & \Real(\bfC+\bfE) \\
    \end{pmatrix}
    \begin{pmatrix}
    \bfzero & -\bfI \\
    \bfI & \bfzero
    \end{pmatrix} +{} \\
    & &&
    \begin{pmatrix}
    \bfzero & -\bfI \\
    \bfI & \bfzero
    \end{pmatrix}^{-1}
    \begin{pmatrix}
    \Real(\bfD) & \Imag(\bfD) \\
    \Imag(\bfD) & -\Real(\bfD) \\
    \end{pmatrix}
    \begin{pmatrix}
    -\bfI & \bfzero \\
    \bfzero & -\bfI
    \end{pmatrix}
    \begin{pmatrix}
    \bfzero & -\bfI \\
    \bfI & \bfzero
    \end{pmatrix} \\
    &= &&
    \begin{pmatrix}
    \Real(\bfC+\bfE) & -\Imag(\bfC+\bfE) \\
    \Imag(\bfC+\bfE) & \Real(\bfC+\bfE) \\
    \end{pmatrix} +{} \\
    & &&
    \begin{pmatrix}
    \Real(\bfD) & \Imag(\bfD) \\
    \Imag(\bfD) & -\Real(\bfD) \\
    \end{pmatrix}
    \begin{pmatrix}
    \bfzero & -\bfI \\
    \bfI & \bfzero
    \end{pmatrix}
    \begin{pmatrix}
    -\bfI & \bfzero \\
    \bfzero & -\bfI
    \end{pmatrix}
    \begin{pmatrix}
    \bfzero & -\bfI \\
    \bfI & \bfzero
    \end{pmatrix} \\
    &= &&
    \begin{pmatrix}
    \Real(\bfC+\bfE) & -\Imag(\bfC+\bfE) \\
    \Imag(\bfC+\bfE) & \Real(\bfC+\bfE) \\
    \end{pmatrix}
    +
    \begin{pmatrix}
    \Real(\bfD) & \Imag(\bfD) \\
    \Imag(\bfD) & -\Real(\bfD) \\
    \end{pmatrix} \\
    &= && \nabla^2 f(\bfx),
\end{aligned}
\end{equation}
where the second equality follows from Lemma~\ref{thm_block_mat_inv}.

In the non-trivial case, $\phi \neq (2k+1)\pi/2$ for any $k \in \intg$. Let
\begin{equation}
    \bfP \isdef
    \begin{pmatrix}
    \sqrt{\frac{1+\cos(2\phi)}{2}}\bfI & -\frac{\sin(2\phi)}{\sqrt{2(1+\cos(2\phi))}}\bfI \\
    \frac{\sin(2\phi)}{\sqrt{2(1+\cos(2\phi))}}\bfI & \sqrt{\frac{1+\cos(2\phi)}{2}}\bfI\\
    \end{pmatrix}.
\end{equation}
Likewise, the same equality holds.

Finally, we use the similarity property to show that an eigenvalue of $\nabla^2 f(\bfx)$ is also an eigenvalue of $\nabla^2f \left( \bfx\exp(j\phi) \right)$.

Let $(\lambda,\bfv)$ be an eigenpair of $\nabla^2 f(\bfx)$. The similarity property implies
\begin{equation}
\begin{aligned}
    & && \lambda\bfv = \nabla^2 f(\bfx) \bfv = \bfP^{-1} \nabla^2 f\left(\bfx\exp(j\phi)\right) \bfP \bfv \\
    &\implies && \nabla^2 f\left(\bfx\exp(j\phi)\right) \bfP \bfv = \lambda \bfP \bfv,
\end{aligned}
\end{equation}
i.e., $(\lambda,\bfP\bfv)$ is an eigenpair of $\nabla^2f \left( \bfx\exp(j\phi) \right)$. The proof in the other direction is straightforward.
\end{proof}

\section{Proof of Theorem~\ref{thm_general_case}} \label{app_thm_general_case}

Before we delve into the proof, it is useful to define a few auxiliary variables. Let
\begin{equation}
\begin{aligned}
    \tilde\bfC &\isdef
    \begin{pmatrix}
    \Real(\bfC) & -\Imag(\bfC) \\
    \Imag(\bfC) & \Real(\bfC)
    \end{pmatrix}, \\
    \tilde\bfD &\isdef
    \begin{pmatrix}
    \Real(\bfD) & \Imag(\bfD) \\
    \Imag(\bfD) & -\Real(\bfD)
    \end{pmatrix}, \\
    \tilde\bfE &\isdef
    \begin{pmatrix}
    \Real(\bfE) & -\Imag(\bfE) \\
    \Imag(\bfE) & \Real(\bfE)
    \end{pmatrix},
\end{aligned}
\end{equation}
so that $\nabla^2f(\bfx) = \tilde\bfC + \tilde\bfD + \tilde\bfE$. Likewise, $\tilde\bfC, \tilde\bfD, \tilde\bfE : \cplx^n \to \real^{2n \times 2n}$ are de facto (composite) functions of $\bfx$. The proof of the main theorem is based on the following minor result.
\begin{lemma} \label{thm_mat_prod}
For any $\bfx \isdef \bfx_R + j\bfx_I$, the following equalities hold:
\begin{equation}
\begin{aligned}
    \begin{pmatrix}
    \bfx_R^T & \bfx_I^T
    \end{pmatrix}
    \tilde\bfC
    \begin{pmatrix}
    \bfx_R \\
    \bfx_I
    \end{pmatrix}
    &= \bfx^H\bfC\bfx, \\
    \begin{pmatrix}
    \bfx_R^T & \bfx_I^T
    \end{pmatrix}
    \tilde\bfD
    \begin{pmatrix}
    \bfx_R \\
    \bfx_I
    \end{pmatrix}
    &= \Real\left( \bfx^H\bfD\conj{\bfx} \right), \\
    \begin{pmatrix}
    \bfx_R^T & \bfx_I^T
    \end{pmatrix}
    \tilde\bfE
    \begin{pmatrix}
    \bfx_R \\
    \bfx_I
    \end{pmatrix}
    &= \bfx^H\bfE\bfx, \\
    \tilde\bfC
    \begin{pmatrix}
    \bfx_R \\
    \bfx_I
    \end{pmatrix}
    &=
    \begin{pmatrix}
    \Real(\bfC\bfx) \\
    \Imag(\bfC\bfx)
    \end{pmatrix}, \\
    \tilde\bfD
    \begin{pmatrix}
    \bfx_R \\
    \bfx_I
    \end{pmatrix}
    &=
    \begin{pmatrix}
    \Real(\bfD\conj{\bfx}) \\
    \Imag(\bfD\conj{\bfx})
    \end{pmatrix}, \\
    \tilde\bfE
    \begin{pmatrix}
    \bfx_R \\
    \bfx_I
    \end{pmatrix}
    &=
    \begin{pmatrix}
    \Real(\bfE\bfx) \\
    \Imag(\bfE\bfx)
    \end{pmatrix}.
\end{aligned}
\end{equation}
\end{lemma}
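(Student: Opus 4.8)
The plan is to prove the three matrix--vector identities (the last three) first, since the three quadratic--form identities then follow by a single premultiplication. Everything reduces to separating complex products into real and imaginary parts.

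First I would record the elementary fact that for a complex matrix $\mathbf{M} \isdef \Real(\mathbf{M}) + j\Imag(\mathbf{M})$ and $\bfx \isdef \bfx_R + j\bfx_I$,
\begin{equation}
\mathbf{M}\bfx = \left( \Real(\mathbf{M})\bfx_R - \Imag(\mathbf{M})\bfx_I \right) + j\left( \Imag(\mathbf{M})\bfx_R + \Real(\mathbf{M})\bfx_I \right),
\end{equation}
so that the block matrix $\begin{pmatrix} \Real(\mathbf{M}) & -\Imag(\mathbf{M}) \\ \Imag(\mathbf{M}) & \Real(\mathbf{M}) \end{pmatrix}$ sends $\begin{pmatrix} \bfx_R \\ \bfx_I \end{pmatrix}$ to $\begin{pmatrix} \Real(\mathbf{M}\bfx) \\ \Imag(\mathbf{M}\bfx) \end{pmatrix}$. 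Since $\tilde{\bfC}$ and $\tilde{\bfE}$ are precisely of this form with $\mathbf{M}=\bfC$ and $\mathbf{M}=\bfE$, the fourth and sixth identities follow at once. For $\tilde{\bfD}$ the off-diagonal blocks carry reversed signs, so I would instead feed $\conj{\bfx} = \bfx_R - j\bfx_I$ through the same expansion; writing out $\bfD\conj{\bfx}$ then shows that $\tilde{\bfD} \begin{pmatrix} \bfx_R \\ \bfx_I \end{pmatrix}$ equals $\begin{pmatrix} \Real(\bfD\conj{\bfx}) \\ \Imag(\bfD\conj{\bfx}) \end{pmatrix}$, which is the fifth identity. (Equivalently, one notes that $\tilde{\bfD}$ factors as the block form for $\bfD$ composed with the sign flip $\Diag(\bfI,-\bfI)$ representing conjugation.)

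Next I would deduce the three quadratic forms by premultiplying each of the above by the row $\begin{pmatrix} \bfx_R^T & \bfx_I^T \end{pmatrix}$ and using the scalar identity $\bfx_R^T \bfu_R + \bfx_I^T \bfu_I = \Real(\bfx^H\bfu)$, valid for every $\bfu \isdef \bfu_R + j\bfu_I$. Taking $\bfu$ to be $\bfC\bfx$, $\bfD\conj{\bfx}$, and $\bfE\bfx$ in turn produces $\Real(\bfx^H\bfC\bfx)$, $\Real(\bfx^H\bfD\conj{\bfx})$, and $\Real(\bfx^H\bfE\bfx)$, respectively.

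The only subtle point is the asymmetry among the stated right-hand sides: for $\bfC$ and $\bfE$ no $\Real(\cdot)$ appears, so I must argue that $\bfx^H\bfC\bfx$ and $\bfx^H\bfE\bfx$ are already real. This is where I would invoke that $\bfC$ and $\bfE$ are Hermitian. For $\bfC$ this is immediate from its definition, being a sum of terms $\bfA^H\Diag(\cdot)\bfA$ and $\bfB^H\Diag(\cdot)\bfB$ with real (indeed nonnegative) diagonal factors. For $\bfE$ one checks that its two summands are conjugate transposes of one another, because $\Diag\left( (\conj{\bfA\bfx}) \had (\bfB\bfx) - \conj{\bfb} \right)$ is the entrywise conjugate of $\Diag\left( (\bfA\bfx) \had (\conj{\bfB\bfx}) - \bfb \right)$. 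Hence $\Real(\cdot)$ may be dropped for $\bfC$ and $\bfE$, whereas $\bfx^H\bfD\conj{\bfx}$ is genuinely complex in general and its $\Real(\cdot)$ must be retained. Beyond this bookkeeping there is no real obstacle; the Hermitian-symmetry check for $\bfE$ is the single step easiest to overlook.
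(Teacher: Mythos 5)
Your proof is correct and takes essentially the same approach as the paper, whose entire proof is the remark that the lemma ``follows via straightforward computations'': you simply carry out those computations via the standard real--complex block isomorphism and the identity $\bfx_R^T\bfu_R + \bfx_I^T\bfu_I = \Real(\bfx^H\bfu)$. Your explicit check that $\bfC$ and $\bfE$ are Hermitian (justifying the absence of $\Real(\cdot)$ in the first and third identities, while it must be kept for $\bfD$) is the one detail the paper leaves implicit, and you handle it correctly.
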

\begin{proof}
The proof follows via straightforward computations.
\end{proof}

\begin{proof}[Proof of Theorem~\ref{thm_general_case}]
\eqref{thm_general_case_zero}
Since $\nabla f(\bfzero)=\bfzero$, $\bfzero$ is a critical point. Observe that
\begin{equation}
\begin{aligned}
    \bfC(\bfzero) &= \bfD(\bfzero) = \bfzero, \\
    \bfE(\bfzero) &= -\bfA^H\Diag(\bfb)\bfB - \bfB^H\Diag(\conj{\bfb})\bfA.
\end{aligned}
\end{equation}
For any $\bfx \isdef \bfx_R + j\bfx_I \neq \bfzero$,
\begin{equation}
\begin{aligned}
    & && \begin{pmatrix}
    \bfx_R^T & \bfx_I^T
    \end{pmatrix}
    \nabla^2 f(\bfzero)
    \begin{pmatrix}
    \bfx_R \\
    \bfx_I
    \end{pmatrix} \\
    &= &&
    \begin{pmatrix}
    \bfx_R^T & \bfx_I^T
    \end{pmatrix}
    \tilde\bfE(\bfzero)
    \begin{pmatrix}
    \bfx_R \\
    \bfx_I
    \end{pmatrix} \\
    &= &&
    \bfx^H \bfE(\bfzero) \bfx,
\end{aligned}
\end{equation}
where the second equality is given by Lemma~\ref{thm_mat_prod}. Since $\bfE(\bfzero)$ is Hermitian, we have
\begin{equation}
\begin{cases}
\bfx^H \bfE(\bfzero) \bfx > 0 & \text{if } \bfA^H\Diag(\bfb)\bfB + \bfB^H\Diag(\conj{\bfb})\bfA \prec \bfzero \\
\bfx^H \bfE(\bfzero) \bfx < 0 & \text{if } \bfA^H\Diag(\bfb)\bfB + \bfB^H\Diag(\conj{\bfb})\bfA \succ \bfzero
\end{cases}
\end{equation}
for any $\bfx\in\cplx^n\setminus\{\bfzero\}$.

\eqref{thm_general_case_cond}
Suppose $\exists\bfz\neq\bfzero$ such that $\nabla f(\bfz) = \bfzero$. \eqref{eq_nls_kkt} implies
\begin{equation} \label{thm_general_case_cond_grad}
\begin{aligned}
    \bfd(\bfz) &= {}&& \bfA^H\left( ( (\bfA\bfz) \had (\conj{\bfB\bfz}) - \bfb ) \had (\bfB\bfz) \right) +{} \\
    & && \bfB^H\left( ( (\conj{\bfA\bfz}) \had (\bfB\bfz) - \conj{\bfb} ) \had (\bfA\bfz) \right) \\
    &= && \bfzero.
\end{aligned}
\end{equation}
A few manipulations lead to
\begin{equation}
\begin{aligned}
    & \bfA^H\left( (\bfA\bfz) \had (\conj{\bfB\bfz}) \had (\bfB\bfz) \right) + \bfB^H\left( (\conj{\bfA\bfz}) \had (\bfB\bfz) \had (\bfA\bfz) \right) \\
    ={}& \bfA^H \Diag(\bfb) \bfB\bfz + \bfB^H \Diag(\conj{\bfb}) \bfA\bfz.
\end{aligned}
\end{equation}
Multiplying both sides with $\bfz^H$ on the left yields
\begin{equation}
\begin{aligned}
     & \bfz^H \left( \bfA^H\Diag(\bfb)\bfB + \bfB^H\Diag(\conj{\bfb})\bfA \right) \bfz \\
     ={}& 2\|(\bfA\bfz) \had (\conj{\bfB\bfz})\|_2^2 \geq 0,
\end{aligned}
\end{equation}
which implies $\bfA^H\Diag(\bfb)\bfB + \bfB^H\Diag(\conj{\bfb})\bfA \nprec \bfzero$.

\eqref{thm_general_case_rank}
We prove that $\nabla^2f(\bfz)$ is rank deficient by showing $\nabla^2f(\bfz)
\begin{pmatrix}
\bfx_R \\
\bfx_I
\end{pmatrix} = \bfzero$, where $\bfx \isdef \bfz_I - j\bfz_R$. Observe
\begin{enumerate}
    \item $(\bfA\bfx) \had (\conj{\bfB\bfz}) + (\bfA\bfz) \had (\conj{\bfB\bfx}) = \bfzero$.
    \item For any $\bfF$, $\bfF\bfz=\bfzero \implies \bfF\bfx=\bfzero$, which together with \eqref{thm_general_case_cond_grad} implies
\end{enumerate}
\begin{equation}
\begin{aligned}
     \bfzero ={}& \bfA^H \Diag\left( (\bfA\bfz) \had (\conj{\bfB\bfz}) - \bfb \right) \bfB \bfx +{} \\
     & \bfB^H \Diag\left( (\conj{\bfA\bfz}) \had (\bfB\bfz) - \conj{\bfb} \right) \bfA \bfx.
\end{aligned}
\end{equation}
By Lemma~\ref{thm_mat_prod}, it suffices to check $\bfC\bfx+\bfD\conj{\bfx}+\bfE\bfx$:
\begin{equation}
\begin{aligned}
     & && \bfC\bfx+\bfD\conj{\bfx}+\bfE\bfx \\
     &= && \bfA^H \Diag\left( (\bfB\bfz) \had (\conj{\bfB\bfz}) \right) \bfA\bfx +{} \\
     & && \bfB^H \Diag\left( (\bfA\bfz) \had (\conj{\bfA\bfz}) \right) \bfB\bfx +{} \\
     & && \bfA^H \Diag\left( (\bfA\bfz) \had (\bfB\bfz) \right) \conj{\bfB}\conj{\bfx} +{} \\
     & && \bfB^H \Diag\left( (\bfA\bfz) \had (\bfB\bfz) \right) \conj{\bfA}\conj{\bfx} +{} \\
     & && \bfA^H \Diag\left( (\bfA\bfz) \had (\conj{\bfB\bfz}) - \bfb \right) \bfB\bfx +{} \\
     & && \bfB^H \Diag\left( (\conj{\bfA\bfz}) \had (\bfB\bfz) - \conj{\bfb} \right) \bfA\bfx \\
     &= && \bfA^H \Diag(\bfB\bfz) \left( (\bfA\bfx) \had (\conj{\bfB\bfz}) + (\bfA\bfz) \had (\conj{\bfB\bfx}) \right) +{} \\
     & && \bfB^H \Diag(\bfA\bfz) \left( (\conj{\bfA\bfz}) \had (\bfB\bfx) + (\conj{\bfA\bfx}) \had (\bfB\bfz) \right) \\
     &= && \bfzero.
\end{aligned}
\end{equation}

\eqref{thm_general_case_num}
For any $\phi\in\real$, it is obvious that
\begin{equation}
\begin{aligned}
    \bfd\left(\bfz\exp(j\phi)\right) &= \bfd(\bfz) = \bfzero, \\
    f\left(\bfz\exp(j\phi)\right) &= f(\bfz),
\end{aligned}
\end{equation}
i.e., $\bfz\exp(j\phi)$ is also a nonzero critical point that is as good. By Proposition~\ref{thm_eig_hess}, $\nabla^2f \left( \bfx\exp(j\phi) \right)$ has the same eigenvalues as $\nabla^2f(\bfx)$ and therefore the same definiteness.

\end{proof}

\section{Proof of Theorem~\ref{thm_special_case}} \label{app_thm_special_case}

Without loss of generality, assume that $\|\bfA\had\conj{\bfB}\| \neq 0$.
\begin{proof}[Proof of Thm.~\ref{thm_special_case}]
\eqref{thm_special_case_zero}
When $n=1$, observe that
\begin{equation}
    \bfA^H\Diag(\bfb)\bfB + \bfB^H\Diag(\conj{\bfb})\bfA = 2 \Real\left( (\bfA\had\conj{\bfB})^H b \right).
\end{equation}
The rest follows directly from Thm.~\ref{thm_general_case}\eqref{thm_general_case_zero}.

\eqref{thm_special_case_cond}
For any $\bfz\in\cplx\setminus\{0\}$, we have
\begin{equation}
\begin{aligned}
    \bfd(\bfz) &= {}&& \bfA^H\left( ( (\bfA\bfz) \had (\conj{\bfB\bfz}) - \bfb ) \had (\bfB\bfz) \right) +{} \\
    & && \bfB^H\left( ( (\conj{\bfA\bfz}) \had (\bfB\bfz) - \conj{\bfb} ) \had (\bfA\bfz) \right) \\
    &= && \bfz|\bfz|^2 \bfA^H(\bfA\had\conj{\bfB}\had\bfB) - \bfz\bfA^H(\bfb\had\bfB) +{} \\
    & && \bfz|\bfz|^2 \bfB^H(\conj{\bfA}\had\bfB\had\bfA) - \bfz\bfB^H(\conj{\bfb}\had\bfA) \\
    &= && \bfz|\bfz|^2 \|\bfA\had\conj{\bfB}\|_2^2 - \bfz (\bfA\had\conj{\bfB})^H \bfb +{} \\
    & && \bfz|\bfz|^2 \|\bfA\had\conj{\bfB}\|_2^2 - \bfz(\conj{\bfA}\had\bfB)^H \conj{\bfb} \\
    &= && 2\bfz \left( \|\bfA\had\conj{\bfB}\|_2^2 |\bfz|^2 - \Real\left( (\bfA\had\conj{\bfB})^H \bfb \right) \right),
\end{aligned}
\end{equation}
which has a nonzero root if and only if $\Real\left( (\bfA\had\conj{\bfB})^H \bfb \right)>0$. If this condition is satisfied, its power is given by
\begin{equation}
    |\bfz|^2 = \Real\left( (\bfA\had\conj{\bfB})^H \bfb \right) / \|\bfA\had\conj{\bfB}\|_2^2.
\end{equation}

\eqref{thm_special_case_rank}
Suppose $\exists\bfz\neq\bfzero$ such that $\nabla f(\bfz) = \bfzero$. \eqref{eq_nls_kkt} and \eqref{thm_special_case_cond} imply
\begin{equation} \label{eq_special_case_cond}
    \|\bfA\had\conj{\bfB}\|_2^2 |\bfz|^2 - \Real\left( (\bfA\had\conj{\bfB})^H \bfb \right) = \bfzero.
\end{equation}

By Lemma~\ref{thm_mat_prod}, we have for any $\bfx \isdef \bfx_R + j\bfx_I \neq 0$
\begin{equation}
\begin{aligned}
    & && \begin{pmatrix}
    \bfx_R^T & \bfx_I^T
    \end{pmatrix}
    \nabla^2 f(\bfz)
    \begin{pmatrix}
    \bfx_R \\
    \bfx_I
    \end{pmatrix} \\
    &= && \bfx^H\bfC(\bfz)\bfx + \Real\left(\bfx^H\bfD(\bfz)\conj{\bfx}\right) + \bfx^H\bfE(\bfz)\bfx \\
    &= && |\bfx|^2 \bfA^H \Diag\left( (\bfB\bfz) \had (\conj{\bfB\bfz}) \right) \bfA +{} \\
    & && |\bfx|^2 \bfB^H \Diag\left( (\bfA\bfz) \had (\conj{\bfA\bfz}) \right) \bfB +{} \\
    & && \Real\left( \conj{\bfx}^2 \bfA^H \Diag\left( (\bfA\bfz) \had (\bfB\bfz) \right) \conj{\bfB} \right) +{} \\
    & && \Real\left( \conj{\bfx}^2 \bfB^H \Diag\left( (\bfA\bfz) \had (\bfB\bfz) \right) \conj{\bfA} \right) +{} \\
    & && |\bfx|^2 \bfA^H \Diag\left( (\bfA\bfz) \had (\conj{\bfB\bfz}) - \bfb \right) \bfB +{} \\
    & && |\bfx|^2 \bfB^H \Diag\left( (\conj{\bfA\bfz}) \had (\bfB\bfz) - \conj{\bfb} \right) \bfA \\
    &= && 2 \|\bfA\had\conj{\bfB}\|_2^2 |\bfx|^2 |\bfz|^2 + 2 \|\bfA\had\conj{\bfB}\|_2^2 \Real(\conj{\bfx}^2 \bfz^2 ) +{} \\
    & && 2|\bfx|^2 \left( \|\bfA\had\conj{\bfB}\|_2^2 |\bfz|^2 - \Real\left( (\bfA\had\conj{\bfB})^H \bfb \right) \right) \\
    &= && \|\bfA\had\conj{\bfB}\|_2^2 \left( 2 |\bfx|^2 |\bfz|^2 + 2 \Real(\conj{\bfx}^2 \bfz^2 ) \right) \\
    &= && \|\bfA\had\conj{\bfB}\|_2^2 (\conj{\bfx}\bfz + \bfx\conj{\bfz})^2 \geq 0,
\end{aligned}
\end{equation}
where the equality can be attained with $\bfx = \bfz_I - j\bfz_R$.

That $\nabla^2f(\bfz)$ is rank deficient is implied by Thm.~\ref{thm_general_case}\eqref{thm_general_case_rank}.

\eqref{thm_special_case_num}
This follows directly from Thm.~\ref{thm_general_case}\eqref{thm_general_case_num}.

\eqref{thm_special_case_local}
Let us perturb $\bfz$ by $\epsilon\in\cplx$ with $|\epsilon|$ being arbitrarily small and observe
\begin{equation}
\begin{aligned}
    f(\bfz+\epsilon) &= && \frac{1}{2} \| \left(\bfA (\bfz+\epsilon)\right) \had \left(\conj{(\bfB \bfz+\epsilon)}\right) - \bfb \|_2^2 \\
    &= && \frac{1}{2} \| (\bfA\bfz) \had (\conj{\bfB\bfz}) + (\bfA\bfz) \had (\conj{\bfB\epsilon}) +{} \\
    & && \phantom{\frac{1}{2} \|} (\bfA\bfepsilon) \had (\conj{\bfB\bfz}) + (\bfA\bfepsilon) \had (\conj{\bfB\bfepsilon}) - \bfb \|_2^2 \\
    &= && \frac{1}{2} \| (\bfA\bfz) \had (\conj{\bfB\bfz}) - \bfb + \bfA \had \conj{\bfB} (\conj{\epsilon}\bfz + \epsilon \conj{\bfz} + |\epsilon|^2) \|_2^2 \\
    &= && \frac{1}{2} \| (\bfA\bfz) \had (\conj{\bfB\bfz}) - \bfb \|_2^2 +{} \\
    & && \frac{1}{2} \| \bfA \had \conj{\bfB} (2\Real(\conj{\epsilon}\bfz) + |\epsilon|^2) \|_2^2 +{} \\
    & && \Real\Big( \left((\bfA\bfz) \had (\conj{\bfB\bfz}) - \bfb\right)^H \cdot \\
    & && \phantom{\Real\Big(\,} \left(\bfA \had \conj{\bfB} (2\Real(\conj{\epsilon}\bfz) + |\epsilon|^2)\right) \Big) \\
    &= && \frac{1}{2} \| (\bfA\bfz) \had (\conj{\bfB\bfz}) - \bfb \|_2^2 +{} \\
    & && \frac{1}{2} \| \bfA \had \conj{\bfB} \|_2^2 (2\Real(\conj{\epsilon}\bfz) + |\epsilon|^2)^2 +{} \\
    & && \Real\left( \|\bfA\had\conj{\bfB}\|_2^2 |\bfz|^2 - \bfb^H (\bfA\had\conj{\bfB}) \right) \cdot \\
    & && (2\Real(\conj{\epsilon}\bfz) + |\epsilon|^2) \\
    &= && f(\bfz) + \frac{1}{2} \| \bfA \had \conj{\bfB} \|_2^2 (2\Real(\conj{\epsilon}\bfz) + |\epsilon|^2)^2 +{} \\
    & && \left( \|\bfA\had\conj{\bfB}\|_2^2 |\bfz|^2 - \Real\left( (\bfA\had\conj{\bfB})^H \bfb \right) \right) \cdot \\
    & && (2\Real(\conj{\epsilon}\bfz) + |\epsilon|^2) \\
    &= && f(\bfz) + \frac{1}{2} \| \bfA \had \conj{\bfB} \|_2^2 (2\Real(\conj{\epsilon}\bfz) + |\epsilon|^2)^2,
\end{aligned}
\end{equation}
where the last equality is given by \eqref{eq_special_case_cond}. As a result,
\begin{equation}
    f(\bfz+\epsilon) - f(\bfz) = \frac{1}{2} \| \bfA \had \conj{\bfB} \|_2^2 (2\Real(\conj{\epsilon}\bfz) + |\epsilon|^2)^2 \geq 0,
\end{equation}
i.e., $\bfz$ is a local minimum.
\end{proof}

% use section* for acknowledgment
\section*{Acknowledgment}

TerraSAR-X data was provided by the German Aerospace Center (DLR) under the TerraSAR-X New Modes AO Project LAN2188.

N.~Ge would like to thank Dr.~S.~Auer and Dr.~C.~Gisinger for preparing Level~0 ground truth data, Dr.~S.~Li for the insight from an algebraic geometric point of view, F.~Rodriguez~Gonzalez and A.~Parizzi for valuable discussions, Dr.~H.~Ansari and Dr.~S.~Montazeri for explaining the sequential estimator and EMI, Dr.\ F. De Zan for providing an overview of multi-look multi-master SAR tomography in forested scenarios, Dr.~Y.~Huang and Prof.~Dr.~L.~Ferro-Famil for discussing a preliminary version of this paper, and the reviewers for their constructive and insightful comments.
% Can use something like this to put references on a page
% by themselves when using endfloat and the captionsoff option.
\ifCLASSOPTIONcaptionsoff
  \newpage
\fi

% trigger a \newpage just before the given reference
% number - used to balance the columns on the last page
% adjust value as needed - may need to be readjusted if
% the document is modified later
%\IEEEtriggeratref{8}
% The "triggered" command can be changed if desired:
%\IEEEtriggercmd{\enlargethispage{-5in}}

% references section

% can use a bibliography generated by BibTeX as a .bbl file
% BibTeX documentation can be easily obtained at:
% http://mirror.ctan.org/biblio/bibtex/contrib/doc/
% The IEEEtran BibTeX style support page is at:
% http://www.michaelshell.org/tex/ieeetran/bibtex/
\bibliographystyle{IEEEtran}
% argument is your BibTeX string definitions and bibliography database(s)
%\bibliography{IEEEabrv,../bib/paper}
\bibliography{ref}

% Generated by IEEEtran.bst, version: 1.14 (2015/08/26)
\begin{thebibliography}{10}
\providecommand{\url}[1]{#1}
\csname url@samestyle\endcsname
\providecommand{\newblock}{\relax}
\providecommand{\bibinfo}[2]{#2}
\providecommand{\BIBentrySTDinterwordspacing}{\spaceskip=0pt\relax}
\providecommand{\BIBentryALTinterwordstretchfactor}{4}
\providecommand{\BIBentryALTinterwordspacing}{\spaceskip=\fontdimen2\font plus
\BIBentryALTinterwordstretchfactor\fontdimen3\font minus
  \fontdimen4\font\relax}
\providecommand{\BIBforeignlanguage}[2]{{%
\expandafter\ifx\csname l@#1\endcsname\relax
\typeout{** WARNING: IEEEtran.bst: No hyphenation pattern has been}%
\typeout{** loaded for the language `#1'. Using the pattern for}%
\typeout{** the default language instead.}%
\else
\language=\csname l@#1\endcsname
\fi
#2}}
\providecommand{\BIBdecl}{\relax}
\BIBdecl

\bibitem{reigber2000first}
A.~Reigber and A.~Moreira, ``First demonstration of airborne {SAR} tomography
  using multibaseline {L}-band data,'' \emph{IEEE Transactions on Geoscience
  and Remote Sensing}, vol.~38, no.~5, pp. 2142--2152, 2000.

\bibitem{gini2002layover}
F.~Gini, F.~Lombardini, and M.~Montanari, ``Layover solution in multibaseline
  {SAR} interferometry,'' \emph{IEEE Transactions on Aerospace and Electronic
  Systems}, vol.~38, no.~4, pp. 1344--1356, 2002.

\bibitem{fornaro2003three}
G.~Fornaro, F.~Serafino, and F.~Soldovieri, ``Three-dimensional focusing with
  multipass {SAR} data,'' \emph{IEEE Transactions on Geoscience and Remote
  Sensing}, vol.~41, no.~3, pp. 507--517, 2003.

\bibitem{lombardini2005differential}
F.~Lombardini, ``Differential tomography: A new framework for {SAR}
  interferometry,'' \emph{IEEE Transactions on Geoscience and Remote Sensing},
  vol.~43, no.~1, pp. 37--44, 2005.

\bibitem{fornaro2008four}
G.~Fornaro, D.~Reale, and F.~Serafino, ``Four-dimensional {SAR} imaging for
  height estimation and monitoring of single and double scatterers,''
  \emph{IEEE Transactions on Geoscience and Remote Sensing}, vol.~47, no.~1,
  pp. 224--237, 2008.

\bibitem{zhu2011let}
X.~X. Zhu and R.~Bamler, ``Let's do the time warp: Multicomponent nonlinear
  motion estimation in differential {SAR} tomography,'' \emph{IEEE Geoscience
  and Remote Sensing Letters}, vol.~8, no.~4, pp. 735--739, 2011.

\bibitem{ferretti2001permanent}
A.~Ferretti, C.~Prati, and F.~Rocca, ``Permanent scatterers in {SAR}
  interferometry,'' \emph{IEEE Transactions on geoscience and remote sensing},
  vol.~39, no.~1, pp. 8--20, 2001.

\bibitem{colesanti2003sar}
C.~Colesanti, A.~Ferretti, F.~Novali, C.~Prati, and F.~Rocca, ``{SAR}
  monitoring of progressive and seasonal ground deformation using the permanent
  scatterers technique,'' \emph{IEEE Transactions on Geoscience and Remote
  Sensing}, vol.~41, no.~7, pp. 1685--1701, 2003.

\bibitem{adam2005development}
N.~Adam, B.~Kampes, and M.~Eineder, ``Development of a scientific permanent
  scatterer system: Modifications for mixed {ERS}/{ENVISAT} time series,'' in
  \emph{Envisat \& ERS Symposium}, vol. 572, 2005.

\bibitem{kampes2006radar}
B.~M. Kampes, \emph{Radar Interferometry: Persistent Scatterer
  Technique}.\hskip 1em plus 0.5em minus 0.4em\relax Springer Science \&
  Business Media, 2006, vol.~12.

\bibitem{budillon2010three}
A.~Budillon, A.~Evangelista, and G.~Schirinzi, ``Three-dimensional {SAR}
  focusing from multipass signals using compressive sampling,'' \emph{IEEE
  Transactions on Geoscience and Remote Sensing}, vol.~49, no.~1, pp. 488--499,
  2010.

\bibitem{zhu2010tomographic}
X.~X. Zhu and R.~Bamler, ``Tomographic {SAR} inversion by {$L_1$}-norm
  regularization--the compressive sensing approach,'' \emph{IEEE Transactions
  on Geoscience and Remote Sensing}, vol.~48, no.~10, pp. 3839--3846, 2010.

\bibitem{aguilera2012multisignal}
E.~Aguilera, M.~Nannini, and A.~Reigber, ``Multisignal compressed sensing for
  polarimetric {SAR} tomography,'' \emph{IEEE Geoscience and Remote Sensing
  Letters}, vol.~9, no.~5, pp. 871--875, 2012.

\bibitem{schmitt2012compressive}
M.~Schmitt and U.~Stilla, ``Compressive sensing based layover separation in
  airborne single-pass multi-baseline {InSAR} data,'' \emph{IEEE Geoscience and
  Remote Sensing Letters}, vol.~10, no.~2, pp. 313--317, 2012.

\bibitem{fornaro2014caesar}
G.~Fornaro, S.~Verde, D.~Reale, and A.~Pauciullo, ``{CAESAR}: An approach based
  on covariance matrix decomposition to improve multibaseline--multitemporal
  interferometric {SAR} processing,'' \emph{IEEE Transactions on Geoscience and
  Remote Sensing}, vol.~53, no.~4, pp. 2050--2065, 2014.

\bibitem{zhu2015joint}
X.~X. Zhu, N.~Ge, and M.~Shahzad, ``Joint sparsity in {SAR} tomography for
  urban mapping,'' \emph{IEEE Journal of Selected Topics in Signal Processing},
  vol.~9, no.~8, pp. 1498--1509, 2015.

\bibitem{ge2018spaceborne}
N.~Ge, F.~R. Gonzalez, Y.~Wang, Y.~Shi, and X.~X. Zhu, ``Spaceborne staring
  spotlight {SAR} tomography---a first demonstration with {TerraSAR-X},''
  \emph{IEEE Journal of Selected Topics in Applied Earth Observations and
  Remote Sensing}, vol.~11, no.~10, pp. 3743--3756, 2018.

\bibitem{shi2018fast}
Y.~Shi, X.~X. Zhu, W.~Yin, and R.~Bamler, ``A fast and accurate basis pursuit
  denoising algorithm with application to super-resolving tomographic {SAR},''
  \emph{IEEE Transactions on Geoscience and Remote Sensing}, vol.~56, no.~10,
  pp. 6148--6158, 2018.

\bibitem{shi20183d}
Y.~Shi, Y.~Wang, J.~Kang, M.~Lachaise, X.~X. Zhu, and R.~Bamler, ``{3D}
  reconstruction from very small {TanDEM-X} stacks,'' in \emph{EUSAR 2018; 12th
  European Conference on Synthetic Aperture Radar}.\hskip 1em plus 0.5em minus
  0.4em\relax VDE, 2018, pp. 1--4.

\bibitem{shi2019nonlocal}
Y.~Shi, X.~X. Zhu, and R.~Bamler, ``Nonlocal compressive sensing-based {SAR}
  tomography,'' \emph{IEEE Transactions on Geoscience and Remote Sensing},
  vol.~57, no.~5, pp. 3015--3024, 2019.

\bibitem{fornaro2005three}
G.~Fornaro, F.~Lombardini, and F.~Serafino, ``Three-dimensional multipass {SAR}
  focusing: Experiments with long-term spaceborne data,'' \emph{IEEE
  Transactions on Geoscience and Remote Sensing}, vol.~43, no.~4, pp. 702--714,
  2005.

\bibitem{zhu2010very}
X.~X. Zhu and R.~Bamler, ``Very high resolution spaceborne {SAR} tomography in
  urban environment,'' \emph{IEEE Transactions on Geoscience and Remote
  Sensing}, vol.~48, no.~12, pp. 4296--4308, 2010.

\bibitem{zhu2011super}
------, ``Super-resolution power and robustness of compressive sensing for
  spectral estimation with application to spaceborne tomographic {SAR},''
  \emph{IEEE Transactions on Geoscience and Remote Sensing}, vol.~50, no.~1,
  pp. 247--258, 2011.

\bibitem{zhu2012sparse}
------, ``Sparse tomographic {SAR} reconstruction from mixed
  {TerraSAR-X}/{TanDEM-X} data stacks,'' in \emph{2012 IEEE International
  Geoscience and Remote Sensing Symposium}.\hskip 1em plus 0.5em minus
  0.4em\relax IEEE, 2012, pp. 7468--7471.

\bibitem{ge2019bistatic}
N.~Ge and X.~X. Zhu, ``Bistatic-like differential {SAR} tomography,''
  \emph{IEEE Transactions on Geoscience and Remote Sensing}, vol.~57, no.~8,
  pp. 5883--5893, 2019.

\bibitem{liang2018urban}
L.~Liang, X.~Li, L.~Ferro-Famil, H.~Guo, L.~Zhang, and W.~Wu, ``Urban area
  tomography using a sparse representation based two-dimensional spectral
  analysis technique,'' \emph{Remote Sensing}, vol.~10, no.~1, p. 109, 2018.

\bibitem{perissin2011repeat}
D.~Perissin and T.~Wang, ``Repeat-pass {SAR} interferometry with partially
  coherent targets,'' \emph{IEEE Transactions on Geoscience and Remote
  Sensing}, vol.~50, no.~1, pp. 271--280, 2011.

\bibitem{ferretti2011new}
A.~Ferretti, A.~Fumagalli, F.~Novali, C.~Prati, F.~Rocca, and A.~Rucci, ``A new
  algorithm for processing interferometric data-stacks: {SqueeSAR},''
  \emph{IEEE Transactions on Geoscience and Remote Sensing}, vol.~49, no.~9,
  pp. 3460--3470, 2011.

\bibitem{parizzi2010adaptive}
A.~Parizzi and R.~Brcic, ``Adaptive {InSAR} stack multilooking exploiting
  amplitude statistics: A comparison between different techniques and practical
  results,'' \emph{IEEE Geoscience and Remote Sensing Letters}, vol.~8, no.~3,
  pp. 441--445, 2010.

\bibitem{ansari2017sequential}
H.~Ansari, F.~De~Zan, and R.~Bamler, ``Sequential estimator: Toward efficient
  {InSAR} time series analysis,'' \emph{IEEE Transactions on Geoscience and
  Remote Sensing}, vol.~55, no.~10, pp. 5637--5652, 2017.

\bibitem{ansari2018efficient}
------, ``Efficient phase estimation for interferogram stacks,'' \emph{IEEE
  Transactions on Geoscience and Remote Sensing}, vol.~56, no.~7, pp.
  4109--4125, 2018.

\bibitem{duque2009experimental}
S.~Duque, P.~L{\'o}pez-Dekker, J.~J. Mallorqu{\'\i}, A.~Y. Nashashibi, and
  A.~M. Patel, ``Experimental results with bistatic {SAR} tomography,'' in
  \emph{2009 IEEE International Geoscience and Remote Sensing Symposium},
  vol.~2.\hskip 1em plus 0.5em minus 0.4em\relax IEEE, 2009, pp. II--37.

\bibitem{duque2010bistatic}
S.~Duque, P.~L{\'o}pez-Dekker, J.~C. Merlano, and J.~J. Mallorqu{\'\i},
  ``Bistatic {SAR} tomography: Processing and experimental results,'' in
  \emph{2010 IEEE International Geoscience and Remote Sensing Symposium}.\hskip
  1em plus 0.5em minus 0.4em\relax IEEE, 2010, pp. 154--157.

\bibitem{duque2014single}
S.~Duque, C.~Rossi, and T.~Fritz, ``Single-pass tomography with alternating
  bistatic {TanDEM-X} data,'' \emph{IEEE Geoscience and Remote Sensing
  Letters}, vol.~12, no.~2, pp. 409--413, 2014.

\bibitem{cloude2006polarization}
S.~R. Cloude, ``Polarization coherence tomography,'' \emph{Radio Science},
  vol.~41, no.~4, 2006.

\bibitem{dezan2009tandem}
F.~De~Zan, K.~Papathanassiou, and S.~Lee, ``{Tandem-L} forest parameter
  performance analysis,'' in \emph{Proceedings of international workshop on
  applications of polarimetry and polarimetric interferometry, Frascati,
  Italy}.\hskip 1em plus 0.5em minus 0.4em\relax Citeseer, 2009, pp. 1--6.

\bibitem{tebaldini2009algebraic}
S.~Tebaldini, ``Algebraic synthesis of forest scenarios from multibaseline
  {PolInSAR} data,'' \emph{IEEE Transactions on Geoscience and Remote Sensing},
  vol.~47, no.~12, pp. 4132--4142, 2009.

\bibitem{tebaldini2010single}
------, ``Single and multipolarimetric {SAR} tomography of forested areas: A
  parametric approach,'' \emph{IEEE Transactions on Geoscience and Remote
  Sensing}, vol.~48, no.~5, pp. 2375--2387, 2010.

\bibitem{huang2011under}
Y.~Huang, L.~Ferro-Famil, and A.~Reigber, ``Under-foliage object imaging using
  {SAR} tomography and polarimetric spectral estimators,'' \emph{IEEE
  transactions on geoscience and remote sensing}, vol.~50, no.~6, pp.
  2213--2225, 2011.

\bibitem{lee2015tandem}
S.-K. Lee and T.~E. Fatoyinbo, ``{TanDEM-X} {Pol-InSAR} inversion for mangrove
  canopy height estimation,'' \emph{IEEE Journal of Selected Topics in Applied
  Earth Observations and Remote Sensing}, vol.~8, no.~7, pp. 3608--3618, 2015.

\bibitem{pardini2017estimation}
M.~Pardini and K.~Papathanassiou, ``On the estimation of ground and volume
  polarimetric covariances in forest scenarios with {SAR} tomography,''
  \emph{IEEE Geoscience and Remote Sensing Letters}, vol.~14, no.~10, pp.
  1860--1864, 2017.

\bibitem{cazcarra2019comparison}
V.~Cazcarra-Bes, M.~Pardini, M.~Tello, and K.~P. Papathanassiou, ``Comparison
  of tomographic {SAR} reflectivity reconstruction algorithms for forest
  applications at {L-band},'' \emph{IEEE Transactions on Geoscience and Remote
  Sensing}, 2019.

\bibitem{krieger2007tandem}
G.~Krieger, A.~Moreira, H.~Fiedler, I.~Hajnsek, M.~Werner, M.~Younis, and
  M.~Zink, ``{TanDEM-X}: A satellite formation for high-resolution {SAR}
  interferometry,'' \emph{IEEE Transactions on Geoscience and Remote Sensing},
  vol.~45, no.~11, pp. 3317--3341, 2007.

\bibitem{goel2013fusion}
K.~Goel and N.~Adam, ``Fusion of monostatic/bistatic {InSAR} stacks for urban
  area analysis via distributed scatterers,'' \emph{IEEE Geoscience and Remote
  Sensing Letters}, vol.~11, no.~4, pp. 733--737, 2013.

\bibitem{berardino2002new}
P.~Berardino, G.~Fornaro, R.~Lanari, and E.~Sansosti, ``A new algorithm for
  surface deformation monitoring based on small baseline differential {SAR}
  interferograms,'' \emph{IEEE Transactions on geoscience and remote sensing},
  vol.~40, no.~11, pp. 2375--2383, 2002.

\bibitem{lanari2004small}
R.~Lanari, O.~Mora, M.~Manunta, J.~J. Mallorqu{\'\i}, P.~Berardino, and
  E.~Sansosti, ``A small-baseline approach for investigating deformations on
  full-resolution differential {SAR} interferograms,'' \emph{IEEE Transactions
  on Geoscience and Remote Sensing}, vol.~42, no.~7, pp. 1377--1386, 2004.

\bibitem{hooper2008multi}
A.~Hooper, ``A multi-temporal {InSAR} method incorporating both persistent
  scatterer and small baseline approaches,'' \emph{Geophysical Research
  Letters}, vol.~35, no.~16, 2008.

\bibitem{moreira2015tandem}
A.~Moreira, G.~Krieger, I.~Hajnsek, K.~Papathanassiou, M.~Younis,
  P.~Lopez-Dekker, S.~Huber, M.~Villano, M.~Pardini, M.~Eineder \emph{et~al.},
  ``{Tandem-L}: A highly innovative bistatic {SAR} mission for global
  observation of dynamic processes on the {Earth}'s surface,'' \emph{IEEE
  Geoscience and Remote Sensing Magazine}, vol.~3, no.~2, pp. 8--23, 2015.

\bibitem{bondy2008graph}
J.~A. Bondy and U.~S.~R. Murty, \emph{Graph Theory}.\hskip 1em plus 0.5em minus
  0.4em\relax Springer, 2008.

\bibitem{hajnsek2014announcement}
I.~Hajnsek, T.~Busche, G.~Krieger, M.~Zink, and A.~Moreira, ``Announcement of
  opportunity: {TanDEM-X} science phase,'' \emph{DLR Public Document
  TD-PD-PL-0032}, no. 1.0, pp. 1--27, 2014.

\bibitem{stoica2004model}
P.~Stoica and Y.~Selen, ``Model-order selection: a review of information
  criterion rules,'' \emph{IEEE Signal Processing Magazine}, vol.~21, no.~4,
  pp. 36--47, 2004.

\bibitem{stoica2005spectral}
P.~Stoica and R.~Moses, \emph{Spectral analysis of signals}.\hskip 1em plus
  0.5em minus 0.4em\relax Pearson Prentice Hall Upper Saddle River, NJ, 2005.

\bibitem{boyd2011distributed}
S.~Boyd, N.~Parikh, E.~Chu, B.~Peleato, J.~Eckstein \emph{et~al.},
  ``Distributed optimization and statistical learning via the alternating
  direction method of multipliers,'' \emph{Foundations and
  Trends{\textregistered} in Machine learning}, vol.~3, no.~1, pp. 1--122,
  2011.

\bibitem{foucart2017mathematical}
S.~Foucart and H.~Rauhut, \emph{A mathematical introduction to compressive
  sensing}.\hskip 1em plus 0.5em minus 0.4em\relax Springer Science \& Business
  Media, 2017.

\bibitem{nocedal2006numerical}
J.~Nocedal and S.~Wright, \emph{Numerical optimization}.\hskip 1em plus 0.5em
  minus 0.4em\relax Springer Science \& Business Media, 2006.

\bibitem{sun2015nonconvex}
J.~Sun, Q.~Qu, and J.~Wright, ``When are nonconvex problems not scary?''
  \emph{arXiv preprint arXiv:1510.06096}, 2015.

\bibitem{freitag2006complex}
E.~Freitag and R.~Busam, \emph{Complex analysis}.\hskip 1em plus 0.5em minus
  0.4em\relax Springer Science \& Business Media, 2009.

\bibitem{hong2018augmented}
D.~Hong, N.~Yokoya, J.~Chanussot, and X.~X. Zhu, ``An augmented linear mixing
  model to address spectral variability for hyperspectral unmixing,''
  \emph{IEEE Transactions on Image Processing}, vol.~28, no.~4, pp. 1923--1938,
  2018.

\bibitem{hong2019cospace}
------, ``{CoSpace}: Common subspace learning from hyperspectral-multispectral
  correspondences,'' \emph{IEEE Transactions on Geoscience and Remote Sensing},
  vol.~57, no.~7, pp. 4349--4359, 2019.

\bibitem{hong2019learnable}
D.~Hong, N.~Yokoya, N.~Ge, J.~Chanussot, and X.~X. Zhu, ``Learnable manifold
  alignment ({LeMA}): A semi-supervised cross-modality learning framework for
  land cover and land use classification,'' \emph{ISPRS journal of
  photogrammetry and remote sensing}, vol. 147, pp. 193--205, 2019.

\bibitem{parikh2014proximal}
N.~Parikh, S.~Boyd \emph{et~al.}, ``Proximal algorithms,'' \emph{Foundations
  and Trends{\textregistered} in Optimization}, vol.~1, no.~3, pp. 127--239,
  2014.

\bibitem{donoho1995noising}
D.~L. Donoho, ``De-noising by soft-thresholding,'' \emph{IEEE transactions on
  information theory}, vol.~41, no.~3, pp. 613--627, 1995.

\bibitem{boyd2004convex}
S.~Boyd and L.~Vandenberghe, \emph{Convex optimization}.\hskip 1em plus 0.5em
  minus 0.4em\relax Cambridge university press, 2004.

\bibitem{fornasier2016conjugate}
M.~Fornasier, S.~Peter, H.~Rauhut, and S.~Worm, ``Conjugate gradient
  acceleration of iteratively re-weighted least squares methods,''
  \emph{Computational Optimization and Applications}, vol.~65, no.~1, pp.
  205--259, 2016.

\bibitem{chambolle2011first}
A.~Chambolle and T.~Pock, ``A first-order primal-dual algorithm for convex
  problems with applications to imaging,'' \emph{Journal of mathematical
  imaging and vision}, vol.~40, no.~1, pp. 120--145, 2011.

\bibitem{adam2013wide}
N.~Adam, F.~R. Gonzalez, A.~Parizzi, and R.~Brcic, ``Wide area persistent
  scatterer interferometry: current developments, algorithms and examples,'' in
  \emph{2013 IEEE International Geoscience and Remote Sensing
  Symposium-IGARSS}.\hskip 1em plus 0.5em minus 0.4em\relax IEEE, 2013, pp.
  1857--1860.

\bibitem{gonzalez2013integrated}
F.~R. Gonzalez, N.~Adam, A.~Parizzi, and R.~Brcic, ``The integrated wide area
  processor ({IWAP}): A processor for wide area persistent scatterer
  interferometry,'' in \emph{ESA Living Planet Symposium}, vol. 722, 2013, p.
  353.

\bibitem{auer2019sigs}
S.~Auer, C.~Gisinger, N.~Ge, F.~R. Gonzalez, and X.~X. Zhu, ``{SIGS}: A {SAR}
  imaging geodesy and simulation framework for generating {3-D} ground truth,''
  \emph{in preparation}.

\bibitem{revels2016forward}
\BIBentryALTinterwordspacing
J.~{Revels}, M.~{Lubin}, and T.~{Papamarkou}, ``Forward-mode automatic
  differentiation in {Julia},'' \emph{arXiv:1607.07892 [cs.MS]}, 2016.
  [Online]. Available: \url{https://arxiv.org/abs/1607.07892}
\BIBentrySTDinterwordspacing

\bibitem{mogensen2018optim}
P.~K. Mogensen and A.~N. Riseth, ``Optim: A mathematical optimization package
  for {Julia},'' \emph{Journal of Open Source Software}, vol.~3, no.~24, p.
  615, 2018.

\bibitem{auer20113d}
S.~J. Auer, ``{3D} synthetic aperture radar simulation for interpreting complex
  urban reflection scenarios,'' Ph.D. dissertation, Technische Universit{\"a}t
  M{\"u}nchen, 2011.

\bibitem{eineder2010imaging}
M.~Eineder, C.~Minet, P.~Steigenberger, X.~Cong, and T.~Fritz, ``Imaging
  geodesy—toward centimeter-level ranging accuracy with {TerraSAR-X},''
  \emph{IEEE Transactions on Geoscience and Remote Sensing}, vol.~49, no.~2,
  pp. 661--671, 2010.

\bibitem{gisinger2014precise}
C.~Gisinger, U.~Balss, R.~Pail, X.~X. Zhu, S.~Montazeri, S.~Gernhardt, and
  M.~Eineder, ``Precise three-dimensional stereo localization of corner
  reflectors and persistent scatterers with {TerraSAR-X},'' \emph{IEEE
  Transactions on Geoscience and Remote Sensing}, vol.~53, no.~4, pp.
  1782--1802, 2014.

\bibitem{hackel2018long}
S.~Hackel, C.~Gisinger, U.~Balss, M.~Wermuth, and O.~Montenbruck, ``Long-term
  validation of {TerraSAR-X} and {TanDEM-X} orbit solutions with laser and
  radar measurements,'' \emph{Remote Sensing}, vol.~10, no.~5, p. 762, 2018.

\bibitem{gernhardt2015persistent}
S.~Gernhardt, S.~Auer, and K.~Eder, ``Persistent scatterers at building
  facades--evaluation of appearance and localization accuracy,'' \emph{ISPRS
  journal of photogrammetry and remote sensing}, vol. 100, pp. 92--105, 2015.

\end{thebibliography}
\end{document}